\DeclareMathOperator{\GL}{GL}
\DeclareMathOperator{\Sym}{Sym}
\DeclareMathOperator{\kron}{Kron}
\DeclareMathOperator{\height}{ht}
\DeclareMathOperator{\poly}{poly}
\newcommand{\C}{\mathbb{C}}
\newcommand{\N}{\mathbb{N}}
\newcommand{\IN}{\mathbb{N}}
\newcommand{\Q}{\mathbb{Q}}
\newcommand{\la}{\lambda}
\newcommand{\Kronecker}{\textsc{Kronecker}\xspace}
\newcommand{\RestrictedKronecker}{\textsc{Restricted Kronecker}\xspace}
\newcommand{\ThreeDMatching}{\textsc{3D Matching}\xspace}
\newcommand{\FourPartition}{\textsc{4-Partition}\xspace}
\newcommand{\ThreePartition}{\textsc{3-Partition}\xspace}
\newcommand{\MachineFlow}{\textsc{Machine Flow}\xspace}
\newcommand{\RNThreeDM}{\textsc{RN3DM}\xspace}
\newcommand{\RNMTS}{\textsc{RNMTS}\xspace}
\newcommand{\Permutation}{\textsc{Permutation}\xspace}
\newcommand{\SpecialConsistency}{\textsc{Special Consistency}\xspace}
\title{On vanishing of Kronecker coefficients}
\author{Christian Ikenmeyer\\
Department of Mathematics\\
Texas A\&M University\\
College Station, TX 77843, USA \\
\email{cikenmey@mpi-inf.mpg.de}\\
\homepage{http://people.mpi-inf.mpg.de/~cikenmey/}
\currentaddress
Max Planck Institute for Informatics\\
Saarland Informatics Campus\\
66123 Saarbr\"ucken, Germany\\
\and
Ketan D. Mulmuley\\
Computer Science Department\\
The University of Chicago\\
Chicago, IL 60637, USA\\
\email{mulmuley@uchicago.edu}\\
\homepage{http://gct.cs.uchicago.edu}
\and
Michael Walter\\
Stanford Institute for Theoretical Physics\\
Stanford University\\
Stanford, CA, 94305, USA\\
\email{michael.walter@stanford.edu}\\
\homepage{http://web.stanford.edu/~waltemic/}
}
\begin{abstract}
We show that the problem of deciding positivity  of Kronecker coefficients is NP-hard.
Previously, this problem was conjectured to be  in P, just as for the Littlewood-Richardson coefficients. Our result establishes in a formal way that Kronecker coefficients
are more difficult than Littlewood-Richardson coefficients, unless P $=$  NP.

We also show that there exists a $\# P$-formula for a particular subclass of Kronecker coefficients whose positivity is NP-hard to decide. This is an evidence that, despite the hardness of the positivity problem, there may well exist a positive combinatorial formula for the Kronecker coefficients. Finding such a formula is a major open problem in representation theory and algebraic combinatorics.

Finally, we consider the existence of the partition triples $(\lambda, \mu, \pi)$
such that the Kronecker coefficient $k^\lambda_{\mu, \pi} = 0$
but the Kronecker coefficient $k^{l \lambda}_{l \mu, l \pi} > 0$ for some integer $l>1$.
Such ``holes'' are of great interest as they witness the failure of the saturation property for
the Kronecker coefficients, which is still poorly understood.
Using insight from computational complexity theory, we turn our hardness proof into a positive result: We show that not only do there exist many such triples, but they can also be found efficiently.
Specifically, we show that, for any $0<\epsilon  \le 1$, there exists $0<a<1$ such that, for all $m$,
there exist $\Omega(2^{m^a})$ partition triples $(\lambda,\mu,\mu)$
in the Kronecker cone such that:
(a) the Kronecker coefficient $k^\lambda_{\mu,\mu}$ is zero, (b) the height of $\mu$ is $m$,
(c) the height of $\lambda$ is $\le m^\epsilon$, and (d) $|\lambda|=|\mu| \le  m^3$.
The proof of the last result illustrates the effectiveness of the explicit proof strategy of GCT.
\end{abstract}
\begin{document}

\section{Introduction}\label{sintro}
One class of representation-theoretic obstructions in the context of the geometric complexity theory (GCT) approach to the permanent vs.\ determinant problem \citep{GCT2,GCTexplicit} is based on the existence of vanishing rectangular Kronecker coefficients \citep{GCT2,burgnonvanish,kumar}.
These are called occurrence-based obstructions, as opposed to the more general multiplicity-based obstructions.
We refer to \cite{landsbergmath,lothringen} for introduction and background.
It is now known that such occurrence-based obstructions based on vanishing of Kronecker coefficients cannot be used for proving superpolynomial lower bounds for the permanent \citep{IP}.
However, they may still be  useful for proving modest polynomial lower bounds.
The partition triples associated with the rectangular Kronecker coefficients lie in the moment cone \citep{kirwan} associated with the Kronecker coefficients, called the Kronecker cone \citep{burgnonvanish,kumar}.
As pointed out in \citet{kumar}, this makes the problem of showing the existence of such partition triples rather challenging, since the asymptotic techniques of
algebraic geometry and representation theory, such as the ones based on the effective descriptions of the linear inequalities defining the Kronecker cone \citep{berenstein,klyachko,ressayre,verwal}, cannot be used to prove this existence.

The main result in this article (\ref{tkron1}) establishes the existence of a superpolynomial number of partition triples with vanishing Kronecker coefficients, in the Kronecker cone for the given partition size, and satisfying a relaxed form of the additional shape restrictions that arise in GCT.

Its proof, based on the explicit proof strategy of GCT \citep{GCTexplicit,GCTflip,GCT6}, also yields results concerning the complexity of Kronecker coefficients that are of independent interest.
The first such result (\ref{tkronnp}) shows that the problem of deciding positivity of Kronecker coefficients is NP-hard.
The second result (\ref{tkronpos}) gives the first known instance of a positive ($\#P$) formula for a subclass of Kronecker coefficients whose positivity is NP-hard to decide.

We now state these results in more detail after the following preliminary section.

\subsection{Preliminaries in algebraic combinatorics and representation theory}\label{subsec:prelims}

An \emph{partition} $\la$ is defined to be a finite nonincreasing sequence of positive integers $\la = (\la_1, \la_2, \ldots, \la_{\ell})$.
We say that $\la$ has $\ell$ nonzero parts and define the \emph{height} $\height(\lambda)$ of $\la$ to be $\ell$.
We define $\la_i := 0$ for all $i > \ell$ and set $|\la|:=\sum_i \la_i$.
To each partition we associate a so-called \emph{Young diagram}, which is a left-top-justified array of boxes in which the $i$th row contains exactly $\la_i$ boxes.
For example, the Young diagram to the partition $(3,1)$ is {\tiny$\yng(3,1)$}.
We often identify Young diagrams with their partitions and say that $\la$ has $|\la|$ many boxes.
Transposing a Young diagram at the main diagonal gives another Young diagram and we call the corresponding partition the \emph{transpose} partition of~$\la$,
denoted by $\lambda^T$. For example, $(3,1)^T=(2,1,1)$, because transposing $(3,1)$ gives the Young diagram {\tiny$\yng(2,1,1)$}.

When we encode partitions as bit strings there are two fundamentally different ways of doing it:
As a list of numbers in binary or as a list of numbers in unary.
Note that in unary transposing a partition does not significantly change its encoding size, but in binary the 1-row partition $(n)$ can be encoded using $O(\log n)$ bits, while $(n)^T=(1,1,\ldots,1)$ requires $O(n)$ bits.
We will mostly encode partitions in unary.

It is natural to interpret partitions as vectors with integer entries, so that we have a well-defined addition and scalar multiplication with nonnegative integers.
Moreover, dividing a partition by an integer results in a vector with rational entries.

Let $G:=\GL_r$ denote the general linear group, i.e., the group of invertible $r \times r$ matrices.
Let $V$ be a finite dimensional vector space and let $\GL(V)$ denote the set of linear isomorphisms of $V$.
A group homomorphism $\varrho: G \to \GL(V)$ is called a \emph{representation} of $G$.
We say that $V$ is a representation if $\varrho$ is clear from the context.
We say that $G$ \emph{acts linearly} on $V$ and use the short notation $gv := (\varrho(g))(v)$ for $g \in G$, $v \in V$.
If all the coordinate functions of $\varrho$ are given by multivariate polynomials in the $r^2$ coordinate variables of $\GL_r$, then we call $\varrho$ a \emph{polynomial representation}.

A linear subspace $W \subseteq V$ that satisfies $\forall w \in W, \, g \in G: g w \in W$, is called a \emph{subrepresentation}.
Subrepresentations of polynomial representations are always polynomial.
For every representation $V$, the zero vector space and $V$ itself are two subrepresentations.
If $V$ has only these two subrepresentations, then $V$ is called \emph{irreducible}.
Given two representations $(V,\rho_V)$ and $(W,\rho_W)$, then a linear map $\varphi: V \to W$
is called \emph{equivariant} if $g\varphi(v) = \varphi(gv)$ for all $g \in G$, $v \in V$.
If $\varphi$ is an equivariant isomorphism of vector spaces, then
$\varphi$ is called a $G$-isomorphism and the representations $V$ and $W$ are called \emph{isomorphic}.
The different types of isomorphic irreducible polynomial representations of $G$ have been classified completely: They are indexed by partitions of height at most $r$.
In a representation $V$ the sum of all subrepresentations of type $\la$ is called the \emph{$\la$-isotypic component}.
Every representation decomposes into a direct sum of isotypic components.

A vector $v \in V$ that is rescaled by diagonal matrices is called a \emph{weight vector},
i.e., if $\mathrm{diag}(\alpha_1,\ldots,\alpha_r)v = \alpha_1^{\lambda_1}\cdots \alpha_r^{\lambda_r}v$,
then $v$ is called a weight vector of weight $\la$.
For this definition it is not necessary that $\la$ is nonincreasing.
The weight vectors of weight $\la$ form a vector space which is called the \emph{weight subspace of $V$ of weight $\la$}.
A representation decomposes into a direct sum of weight spaces.
In each irreducible representation of type $\la$ there is a unique line of weight $\la$.
This line is called the \emph{highest weight line}.
The highest weight line is also characterized by the Lie algebra action as follows.
For a representation $V$ let $v \in V$ and $A \in \mathfrak g := \C^{r \times r}$.
Note that $\varepsilon A+\mathrm{Id}_r \in G$ for small $\varepsilon$, where $\mathrm{Id}_r$ is the $r \times r$ identity matrix.
The Lie algebra action of $A$ on $v$ is defined as
$A.v := \lim_{\varepsilon\to 0} \varepsilon^{-1}((\varepsilon A+\textup{Id})v - v)$.
Clearly $A.v \in V$.
If we pick $A=E_{i,j}$ to be the matrix that has a single 1 at position $(i,j)$ with $i<j$ and zeros everywhere else, then the map $v\mapsto E_{i,j}.v$ is called a \emph{raising operator}.
A weight vector that is mapped to zero by all raising operators is called a \emph{highest weight vector}.
Each irreducible representation has a unique line of highest weight vectors and their weight determines the type of the irreducible representation.

The \emph{multiplicity} of the type $\la$ in a representation $V$ is the dimension of the vector space of highest weight vectors of type~$\la$.
If we decompose $V$ into a direct sum of irreducibles, then this multiplicity counts how often a copy of type $\la$ appears in the decomposition.

If we have $k$ commuting actions of several copies of $G$ on $V$, then we use the representation theory of the cartesian powers $G^k$, which is very similar to the representation theory of $G$.
We will mainly be concerned with $k=2$ and $k=3$.
The types of irreducible representations of $G^k$ are given by $k$-tuples of partitions, weight vectors are defined by their scaling behavior under $k$-tuples of diagonal matrices,
and the Lie algebra action is defined via $k$-tuples of matrices, where the raising operators are $k$-tuples of matrices in which only one matrix is nonzero.
Irreducible representations of $G$ are called \emph{Weyl modules}, while irreducible representations of $G^k$ are isomorphic to a $k$-fold tensor product of Weyl modules.

Using the group homomorphism $\GL_r \times \GL_r \to \GL_{r^2}$, $(g,g')\mapsto g \otimes g'$, every representation of $\GL_{r^2}$ is also a representation of $\GL_r \times \GL_r$.
For a partition $\la$ let $V_\la(\GL_{r^2})$ be the Weyl module of type $\la$.
Even though $V_\la(\GL_{r^2})$ is an irreducible $\GL_{r^2}$ representation, $V_\la(\GL_{r^2})$ decomposes into a nontrivial direct sum of isotypic components with respect to the action of $\GL_r \times \GL_r$.
For partitions $\mu$ and $\pi$ the multiplicity of the irreducible $V_\mu(\GL_r) \otimes V_\pi(\GL_r)$ in $V_\la(\GL_{r^2})$ is called the \emph{Kronecker coefficient} $k^\la_{\mu,\pi} \in \mathbb N$ \citep{fultonrepr}.
If $\height(\la)\leq r^2$, $\height(\mu)\leq r$, and $\height(\pi)\leq r$, then $k^\la_{\mu,\pi}$ does not depend on $r$.
Otherwise $k^\la_{\mu,\pi}$=0.

The Kronecker coefficient arises as a multiplicity in several other representation theoretic decompositions, for example as
the multiplicity of $V_\la(\GL_r) \otimes V_\mu(\GL_r) \otimes V_\pi(\GL_r)$ in $V_{(n)}(\GL_{r^3})$ via the group homomorphism $\GL_r^3 \to \GL_{r^3}$, $(g,g',g'')\mapsto g \otimes g' \otimes g''$,
where $\la$, $\mu$, and $\pi$ have exactly $n$ boxes.

For $k^\la_{\mu,\pi}$ to be positive it is required that $\la$, $\mu$, and $\pi$ are partitions of the same number, i.e., $|\la|=|\mu|=|\pi|=n$ for some $n$.
This implies that if $k^\la_{\mu,\pi}>0$, then the rescaled partitions $\la/n$, $\mu/n$, and $\pi/n$ are three discrete probability distributions.
Another necessary condition for $k^\la_{\mu,\pi}>0$ is $\height(\la)\leq\height(\mu)\cdot\height(\pi)$.
The coefficient is invariant under permuting the three parameters, so
\begin{equation}\label{eq:permutation symmetry}
  k^\la_{\mu,\pi}=k^\la_{\pi,\mu}=k^\mu_{\pi,\la}
\end{equation}
Moreover, transposing any two of the three parameters does not change the coefficient:
\begin{equation}\label{eq:transpose symmetry}
  k^\la_{\mu,\pi}=k^\la_{\mu^T,\pi^T}
\end{equation}

Another important description of the Kronecker coefficient is presented in \ref{lem:antisymmetric}:
The tensor power $\otimes^3 \C^r = (\mathbb C^r)^{\otimes 3}$ has a canonical action of $G^3$ via
$(g,g',g'')(u\otimes v\otimes w) = (g u)\otimes(g'v)\otimes(g''w)$ for all
$(g,g',g'')\in G^3$ and $u\otimes v\otimes w$ a rank 1 tensor from $(\mathbb C^r)^{\otimes 3}$, where the action is defined by linear continuation.
This action induces an action on the $n$-th tensor power $\otimes^n((\mathbb C^r)^{\otimes 3})$ via
$(g,g',g'') ((u_1\otimes v_1 \otimes w_1) \otimes (u_2\otimes v_2 \otimes w_2) \otimes \cdots \otimes (u_n\otimes v_n \otimes w_n)) :=
((g u_1\otimes g' v_1 \otimes g'' w_1) \otimes (g u_2\otimes g' v_2 \otimes g'' w_2) \otimes \cdots \otimes (g u_n\otimes g' v_n \otimes g'' w_n))$.
The antisymmetric tensors form a subrepresentation: $\bigwedge^n (\mathbb C^r)^{\otimes 3} \subseteq \otimes^n (\mathbb C^r)^{\otimes 3}$.
The Kronecker coefficient $k^\lambda_{\mu,\pi}$ is equal to the multiplicity of the irreducible representation $V_{\lambda^T}(\GL_r) \otimes V_{\mu^T}(\GL_r) \otimes V_{\pi^T}(\GL_r)$ of $\GL_r^3$ in $\bigwedge^n (\mathbb C^r)^{\otimes 3}$,
provided $r$ is large enough.

Given two partition triples $(\la,\mu,\pi)$ and $(\la',\mu',\pi')$ such that $k^\la_{\mu,\pi}>0$ and $k^{\la'}_{\mu',\pi'}>0$,
then $k^{\la+\la'}_{\mu+\mu',\pi+\pi'}>0$.
This is called the \emph{semigroup property}.
The convex cone defined by
\[ \kron(r) := \operatorname{cone} \, \{(\lambda,\mu,\pi) \mid \height(\la),\height(\mu),\height(\pi)\leq r, \, k^\la_{\mu,\pi}>0 \} \]
is a polyhedral cone \citep{kirwan}, called the \emph{Kronecker cone}.
Here we think of $(\lambda,\mu,\pi)$ as a vector in $\Q^{3 r}$.
A partition triple outside of this cone trivially has a zero Kronecker coefficient.

Finding a combinatorial description of $k^\la_{\mu,\pi}$ is an important outstanding problem (see \ref{subsec:sharpP} below).
Only for some special cases is a combinatorial description known, for example for the \emph{Littlewood-Richardson coefficients}.
The Littlewood-Richardson coefficients are those $k^\la_{\mu,\pi}$ for which $\la$, $\mu$, and $\pi$ have a sufficiently long first row
such that $|\overline\la|=|\overline\mu|+|\overline\pi|$, where $\overline\la$ is the partition $\la$ with its longest row removed.
The positivity of Littlewood-Richardson coefficients can be decided in strongly polynomial time \citep{knutsontao2,GCT3}.

Another important subclass of Kronecker coefficients are the \emph{rectangular Kronecker coefficients}.
For a given partition $\lambda$ with size $|\lambda|$ divisible by $r$, let $\delta(\lambda)$ denote the rectangular partition $(d,\ldots,d)$ ($r$ times), where $d=|\lambda|/r$.
We call the Kronecker coefficient $k^\lambda_{\delta(\lambda), \delta(\lambda)}$ \emph{rectangular}.
Rectangular Kronecker coefficients play a special role in geometric complexity theory (see \ref{sexceptional} below).

\subsection{NP-hardness of deciding positivity of Kronecker coefficients}
Let \Kronecker be the problem of deciding positivity of $k_{\mu,\pi}^\lambda$, given as input the three partitions $\lambda$, $\mu$, and $\pi$ in unary.
This problem is of fundamental interest in the context of the explicit proof strategy of GCT \citep{GCTexplicit,GCTflip,GCT6}.
Our first result is the following:

\begin{theorem}\label{tkronnp}
\Kronecker is NP-hard.
\end{theorem}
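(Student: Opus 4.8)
The plan is to reduce an NP-hard problem — most naturally \textsc{3-Partition} or \textsc{Subset-Sum}, or more cleverly a problem about whether a system of linear Diophantine equations has a nonnegative integer solution — to \textsc{Kronecker}. The key leverage is that positivity of a Kronecker coefficient $k^\lambda_{\mu,\pi}$ can be detected combinatorially: by a theorem of Murnaghan/Littlewood (or via the connection to plethysm and symmetric function identities), $k^\lambda_{\mu,\pi}>0$ can be expressed through the existence of certain integer tableaux or through positivity of a related structure constant. I would look for a family of instances where the defining combinatorial condition collapses to something like ``does there exist a nonnegative integer matrix with prescribed row sums, column sums, and further linear constraints,'' which is exactly the kind of feasibility question that encodes NP-hard problems. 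The fact that the inputs are given in \emph{unary} is crucial and must be used: it means the partition sizes are polynomially bounded, so one cannot exploit the mere bit-length of the numbers; instead the hardness has to come from the \emph{structure} of the partitions, i.e., from the number of parts or the interaction pattern of the constraints.

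First I would identify the right NP-hard source problem. A clean choice is the problem of deciding, given a nonnegative integer matrix $A$ and vector $b$ in unary, whether $Ax=b$ has a solution $x\in\{0,1\}^n$ — or the closely related question about contingency tables / 3-dimensional transportation polytopes having an integer point, which is NP-complete even in unary in fixed dimension-3 slices. Then I would design, for each such instance, partitions $\lambda,\mu,\pi$ whose Kronecker coefficient counts (or at least is positive exactly when there exists) a lattice point in the associated polytope. The natural vehicle is the \emph{semigroup property} of Kronecker coefficients together with an explicit combinatorial model: for instance, using that $k^\lambda_{\mu,\pi}$ equals the number of certain \emph{Kronecker tableaux} or equals a suitable count of BZ-type triangles, and then arranging the shapes so that the only freedom in such a triangle is exactly the coordinates of $x$. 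Alternatively, I would exploit the known reduction from the \textsc{Kostka}-positivity or \textsc{plethysm}-positivity landscape: since Littlewood--Richardson positivity is easy (Knutson--Tao), the hardness must come from genuinely Kronecker phenomena, so the reduction should route through a feature absent in the LR world, e.g., the ``three-way'' symmetry and the way $\lambda$ lives in $\C^{r^2}$ while $\mu,\pi$ live in $\C^r$.

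The second step is to verify the reduction is correct and polynomial. Correctness means proving the biconditional: the constructed instance has $k^\lambda_{\mu,\pi}>0$ if and only if the source instance is a yes-instance. The ``if'' direction typically comes from exhibiting an explicit tableau/triangle built from a feasible $x$; the ``only if'' direction requires showing that \emph{any} witness for positivity forces the underlying linear constraints to be satisfiable, which is where one needs a tight combinatorial characterization of positivity rather than a mere lower bound. Polynomiality is where the unary encoding is used: the partition sizes $|\lambda|=|\mu|=|\pi|$ and the number of parts $r,r^2$ must all be polynomial in the (unary) size of the source instance, so one cannot afford exponentially large entries or exponentially many parts — this constrains which NP-hard problem and which combinatorial model can be used.

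I expect the main obstacle to be obtaining a combinatorial handle on Kronecker positivity that is simultaneously (a) tight enough to prove the ``only if'' direction and (b) flexible enough to encode arbitrary linear feasibility constraints. Kronecker coefficients are notoriously resistant to positive combinatorial formulas, so one cannot simply invoke a known tableau rule; instead I anticipate needing a special restricted regime — perhaps $\pi=\mu$ (as in the paper's main theorem), or $\mu$ with few rows, or using the relation to characters of the symmetric group evaluated at special permutations — where positivity does admit a clean description, and then showing that this restricted regime already suffices to embed an NP-hard feasibility problem. A secondary difficulty is controlling the height of $\lambda$ and the relation to the Kronecker cone if one wants the same construction to feed into Theorem~\ref{tkron1}, but for Theorem~\ref{tkronnp} alone the essential fight is pinning down an exact positivity criterion and matching it to a transportation-polytope-style NP-hard problem under the unary-size budget.
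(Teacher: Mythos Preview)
Your instincts are sound---the source problem is indeed a 3-way marginals problem (essentially: does there exist a $0$/$1$ array in $\{0,\dots,r-1\}^3$ with prescribed marginals?), and you correctly anticipate needing a ``restricted regime'' where Kronecker positivity admits an exact combinatorial description. But your proposal stops short of the decisive idea, and the mechanisms you speculate about (Kronecker tableaux, BZ-type triangles, the semigroup property) are not what makes the proof work; none of these yields a usable exact positivity criterion.

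The paper's bridge is different. Interpreting $k^\lambda_{\mu,\pi}$ as a multiplicity in $\bigwedge^n(\C^r)^{\otimes 3}$ (Lemma~\ref{lem:antisymmetric}) gives a sandwich
\[
  p^\lambda_{\mu,\pi} \;\le\; k^\lambda_{\mu,\pi} \;\le\; t^\lambda_{\mu,\pi},
\]
where $t^\lambda_{\mu,\pi}$ is the \emph{weight} multiplicity (the number of point sets $P\subseteq\{0,\dots,r-1\}^3$ with marginals $(\lambda^T,\mu^T,\pi^T)$) and $p^\lambda_{\mu,\pi}$ counts those $P$ that are \emph{pyramids}, each of which produces an explicit highest weight vector. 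The ``restricted regime'' is not defined by few rows or $\mu=\pi$ per se, but by a barycentric condition (``simplex-like'') forcing \emph{every} point set with the given marginals to be a pyramid; then $p=t$, hence $k=t$ exactly. Brunetti et al.\ had already shown that deciding $t^\lambda_{\mu,\pi}>0$ is NP-hard for marginals of this simplex-like form, so NP-hardness of KRONECKER follows immediately. What your plan is missing is precisely this sandwich and the geometric collapse that makes it tight; without it you have no route from the 3D-marginals problem to Kronecker positivity, and the alternative routes you list do not supply one.
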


It was conjectured in~\citet{GCT6} that the problem of deciding positivity of Kronecker coefficients is in $P$.
\ref{tkronnp} shows that this is not so, in general, assuming that $P\not = NP$.
This is in contrast to the special case of the Littlewood-Richardson coefficients, where positivity can be decided in strongly polynomial time, as explained above.

\subsection{\texorpdfstring{A $\#P$-formula for a subclass of partitions of type NP}{A \#P-formula for a subclass of partitions of type NP}}\label{subsec:sharpP}
To find a positive formula for Kronecker coefficients ``akin to'' the well known positive Littlewood-Richardson rule is an unsolved problem in classical representation theory.
We refer to \cite{stanley} for the history and importance of this problem, where it is listed as one of the twenty-five ``outstanding open problems''.
In classical representation theory, the phrase ``akin to'' is used only informally.
A formal complexity-theoretic version of this problem is to find a $\#P$-formula for Kronecker coefficients.  By a $\#P$-formula for the Kronecker coefficient
$k^\lambda_{\mu,\pi}$,  we mean a formula of the form:

\[ k^\lambda_{\mu,\pi}= \sum_{\sigma \in
\{0,1\}^{p(\langle \lambda \rangle, \langle \mu \rangle, \langle \pi \rangle)}}
F(\lambda,\mu,\pi,\sigma), \]

where, for a partition $\lambda=(\lambda_1,\lambda_2, \ldots, \lambda_l)$,
$\langle \lambda \rangle$  denotes
the total bitlength of the  specification of  $\lambda_j$'s in binary,
$p(\langle \lambda \rangle, \langle \mu \rangle, \langle \pi \rangle)$
is a polynomially-bounded function of the bit-lengths
$\langle \lambda \rangle, \langle \mu \rangle$, and $\langle \pi \rangle$,
and $F(\lambda,\mu,\pi,\sigma)$ is a polynomial-time-computable $0$-$1$ function
of $\lambda,\mu, \pi$, and the bit-string $\sigma$.
By a positive formula, we mean a $\#P$-formula henceforth.

\begin{definition}
Let $\Pi$ be a class of partition triples.
We say that $\Pi$ is of \emph{type NP} if the problem of deciding positivity of $k^\lambda_{\mu,\pi}$, with
$(\lambda,\mu,\pi) \in \Pi$, is NP-hard. (The problem mentioned here is
a promise problem. That is, we are promised that the input triple is in the
subclass $\Pi$.)
Likewise, we say that $\Pi$ is of \emph{type P} if the problem of deciding positivity of $k^\lambda_{\mu,\pi}$, with $(\lambda,\mu,\pi) \in \Pi$, is in~P.
\end{definition}

All positive rules known so far for restricted classes of Kronecker coefficients have been for subclasses of partition triples that are either known or conjectured to be of type P.
For example, the classical Littlewood-Richardson rule gives a positive rule for Littlewood-Richardson coefficients, which, as
already mentioned, constitute a special class of Kronecker coefficients.
The corresponding subclass of partition triples is of type P, since the problem of deciding positivity of Littlewood-Richardson coefficients is in $P$ \citep{knutsontao2,GCT3}.
\cite{GCT4} give a positive rule for Kronecker coefficients when two of the partitions have height at most two.
The corresponding subclass of partition triples is of type P, since the Kronecker coefficient can be computed in this case (and more generally, for
partitions of bounded height) in polynomial time (\cite{christwalter}; \cite{newalgorithm}).
\cite{blasiak} gives a positive rule for Kronecker coefficients when one of the partitions is a hook.
The corresponding subclass of partition triples is conjectured to be of type P, since the problem of deciding positivity of Kronecker coefficients, when one of the partitions is a hook, is believed to be in $P$ (in view of \ref{thook}).

The following result gives the first known instance of a positive
rule for Kronecker coefficients for a subclass of partition triples of type NP.

\begin{theorem}\label{tkronpos}
There exists a $\#P$-formula for Kronecker coefficients for
a  subclass of partition triples of type NP. Here the partition triples can be specified in unary or binary.
\end{theorem}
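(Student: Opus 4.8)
The plan is to extract the required subclass directly from the proof of Theorem~\ref{tkronnp}. That theorem is established by a polynomial-time many-one reduction $R$ that sends an instance $x$ of some NP-hard source problem to a partition triple $R(x)=(\lambda(x),\mu(x),\pi(x))$, of common size polynomially bounded in $|x|$, in such a way that $x$ is a yes-instance if and only if $k^{\lambda(x)}_{\mu(x),\pi(x)}>0$. Let $\Pi$ be the image of $R$. By construction the positivity problem restricted to $\Pi$ is at least as hard as the source problem, so $\Pi$ is of type NP in the sense of the paper. Moreover the partitions output by $R$ have size polynomial in $|x|$, so their unary and binary encodings have polynomially related lengths; hence both the hardness statement and the $\#P$-formula below are insensitive to the choice of encoding. (If one wants $\Pi$ itself to be polynomial-time recognizable, one may carry the certificate $x$ alongside its image $R(x)$, or, in typical reductions, reconstruct $x$ from $R(x)$ in polynomial time; nothing in the argument changes.) It remains to produce a positive $\#P$-formula valid on $\Pi$.

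This is the crux. The plan is to show that $R$ can be arranged so that, for every instance $x$,
\[
k^{\lambda(x)}_{\mu(x),\pi(x)} \;=\; \#\bigl\{\, y \in \{0,1,\dots,N(|x|)\}^{q(|x|)} \;:\; \Phi(x,y)=1 \,\bigr\}
\]
for polynomials $q,N$ and a polynomial-time predicate $\Phi$. The right-hand side is by definition a $\#P$-function of $x$, hence, via reconstruction of $x$ from $R(x)$, a $\#P$-formula for the Kronecker coefficient on $\Pi$. Concretely one expects the count on the right to be the number of lattice points of an explicit bounded polytope $P_x=\{y\ge 0 : A_x y=b_x\}$ of polynomial bit-complexity read off from $x$ --- equivalently, the number of integer solutions of an explicit linear Diophantine system, or the number of combinatorial configurations encoding $x$ --- in which case nonemptiness of $P_x$ is an NP-complete condition, consistent with ``$\Pi$ is of type NP''. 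The identity $k^{\lambda(x)}_{\mu(x),\pi(x)}=|P_x\cap\Z^{q(|x|)}|$ would be proved from the representation theory that already underlies the reduction: the $\aS_3$-symmetry of $k^\lambda_{\mu,\pi}$, the semigroup property of Kronecker multiplicities, and an explicit description of the relevant multiplicity space --- as the space of $H$-highest-weight vectors of a prescribed weight inside $V_{\lambda(x)}(G)$, or as the kernel of an explicit linear map between weight spaces --- which, for the rigidly shaped triples in the image of $R$, degenerates precisely into such a lattice-point count.

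The main obstacle is this identity, and it is a double-edged one. On the one hand, a general Kronecker coefficient is not known to lie in $\#P$ --- a well-known open problem --- so the argument cannot invoke any ``bounded parameter'' phenomenon: in the triples of $\Pi$ none of $\lambda$, $\mu$, $\pi$ has bounded height or bounded size, and the $\#P$-formula must genuinely exploit the rigid shape imposed by $R$. On the other hand, and in tension with this, the $\#P$-quantity on the right must be one whose positivity is itself NP-hard: it cannot collapse to a single Littlewood--Richardson coefficient, nor to a permanent, nor to any count whose nonvanishing is testable in polynomial time, since any of those would force $\Pi$ to be of type P rather than NP. Thus $R$ must be engineered to land in exactly the right regime --- combinatorially expressive enough to carry an NP-hard instance, yet structured enough that the Kronecker coefficient degenerates into an honest lattice-point (or configuration) count admitting an explicit positive description. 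Meeting both requirements simultaneously, and verifying the multiplicity identity that certifies it, is where essentially all the work resides; once that identity is in hand, Theorem~\ref{tkronpos} follows at once, with the unary/binary clause handled by the polynomial size bound from the first paragraph.
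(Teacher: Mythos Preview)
Your high-level strategy matches the paper's exactly: the subclass $\Pi$ is the image of the very reduction used to prove Theorem~\ref{tkronnp}, and the $\#P$-formula is a configuration count whose positivity is already known to be NP-hard. You also correctly isolate the real obligation, namely the identity $k^{\lambda(x)}_{\mu(x),\pi(x)} = (\text{explicit count})$, and you correctly note that the unary/binary issue is a non-issue because the output sizes are polynomial in $|x|$.

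The gap is that you never produce that identity --- and you say as much in your last paragraph. The paper does not obtain it from an ad hoc analysis of highest-weight spaces ``degenerating into a lattice-point count'', but from two concrete ingredients that your proposal does not mention. First, for \emph{all} triples one has the sandwich
\[
p^\lambda_{\mu,\pi} \;\le\; k^\lambda_{\mu,\pi} \;\le\; t^\lambda_{\mu,\pi},
\]
where $t^\lambda_{\mu,\pi}$ counts finite point sets $P\subseteq\{0,\dots,r-1\}^3$ with marginals $(\lambda^T,\mu^T,\pi^T)$, and $p^\lambda_{\mu,\pi}$ counts those $P$ that are \emph{pyramids} (downward-closed). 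The upper bound is a weight-multiplicity bound in $\bigwedge^n(\C^r)^{\otimes 3}$; the lower bound comes from showing that each pyramid gives a genuine highest-weight vector. Second, one isolates a class of \emph{simplex-like} triples (those of the form~(\ref{eq:lattice permutations}) in the paper) with the geometric property that \emph{every} point set with the prescribed marginals is forced to be a pyramid, so $p=t$ and hence $k=t$ on this class. Since $t^\lambda_{\mu,\pi}$ is manifestly in $\#P$, this gives the formula. NP-hardness of positivity on this same class is the Brunetti--Del~Lungo--Gerard result that deciding $t^\lambda_{\mu,\pi}>0$ is NP-hard already for triples of the form~(\ref{eq:lattice permutations}); via $k=t$ this is exactly Theorem~\ref{tkronnp2}.

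So your outline is pointing in the right direction, but as written it is not a proof: the specific $\#P$-quantity ($t^\lambda_{\mu,\pi}$), the mechanism that forces $k=t$ (the pyramid sandwich plus the simplex-like marginal condition), and the source of NP-hardness for that quantity are all missing. Without naming these, the ``multiplicity identity'' you defer to remains a hope rather than an argument.
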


The proof of this result exhibits an explicit such subclass of partition triples of type NP (see \ref{sbounds} and \ref{snppos}).

\ref{tkronpos} provides good evidence in support of the conjecture in \cite{GCT6} that there exists a $\#P$-formula for Kronecker coefficients in general.
This would in particular imply that \Kronecker is in NP, which is not known so far.

\subsection{Exceptional Kronecker coefficients}\label{sexceptional}

In order for a Kronecker coefficient $k_{\mu,\pi}^\lambda$ to be useful for proving a polynomial lower bound for the permanent, the partition triple must have a number of exceptional properties \citep{GCT2,landsbergmath}.
This is captured by the following definition:

\begin{definition}\label{dexceptional}
Fix any constant $0 < \epsilon \le 1$, and
a constant $b>1$.
We call a partition triple $(\lambda,\mu,\pi)$ with  $\lvert\la\rvert=\lvert\mu\rvert=\lvert\pi\rvert$ \emph{$(\epsilon,b)$-exceptional} if:
\begin{enumerate}
\item $k_{\mu,\pi}^\lambda = 0$,
\item\label{dexceptional:1} $\mu=\pi=\delta(\lambda)$, with  $|\lambda|=|\mu|=|\pi|$  divisible by
$r:=\height(\mu)=\height(\pi)$,
\item\label{dexceptional:2} $\height(\lambda) \le r^\epsilon$,
\item\label{dexceptional:3} $(\lambda,\mu,\pi) \in \kron(r)$,
\item\label{dexceptional:4} $|\lambda|=|\mu|=|\pi| \le r^b$,
\item\label{dexceptional:5} the multiplicity $p(\lambda)$ of the Weyl module $V_\lambda(GL_{r^2}(\C))$  in $\Sym^d (\Sym^r(\C^{r^2}))$,
$d=|\lambda|/r$,
is positive, and
\item\label{dexceptional:6} $\lambda_0 \geq |\lambda| (1 - r^{\epsilon/2-1})$.
\end{enumerate}
We also call a partition tuple merely \emph{exceptional}, without mentioning $\epsilon$ and $b$, if it is understood that $\epsilon$ can be chosen to be arbitrarily small, with $b$ a large enough constant depending on $\epsilon$, and $r \rightarrow \infty$.
\end{definition}

By~\cite{burgnonvanish}, \ref{dexceptional:1} implies~\ref{dexceptional:3}, assuming that the height of
$\lambda$ is $\le r^2$, which is so by~\ref{dexceptional:2}.

The constraint~\ref{dexceptional:3} is significant.
Proving existence of the partition triples as in \ref{dexceptional} is delicate because of this constraint.
Indeed, it may be possible to prove existence of superpolynomially many partition triples satisfying the
constraints other than~\ref{dexceptional:1} and~\ref{dexceptional:3} using the known linear inequalities defining the Kronecker cone \citep{berenstein,klyachko,ressayre,verwal}.
But the constraint~\ref{dexceptional:3} implies that such asymptotic techniques based on the description of the Kronecker cone cannot be used to demonstrate existence of partition triples as in \ref{dexceptional}.
This is the main significance of the results in \cite{burgnonvanish,kumar}.

By the Saturation Theorem \citep{knutsontao1,derksen}, the Littlewood-Richardson coefficients cannot vanish for the partition triples that lie in the analogously defined Littlewood-Richardson cone.
The constraint~\ref{dexceptional:3} also implies that in order to prove existence of the partition triples as in \ref{dexceptional}, one needs to understand the failure of the saturation property for the Kronecker coefficients in one way or another.

The constraint~\ref{dexceptional:6} is motivated by~\cite{landsberg}.
There, it is shown that this condition holds if $V_\lambda(G)$ is a re\-pre\-sen\-ta\-tion-the\-o\-re\-tic obstruction \citep{GCT2}.

It is a priori not at all clear that for any given constant $0<\epsilon \le 1$ and a large enough constant $b>1$ depending on $\epsilon$,
exceptional partition triples exist for arbitrary $r$.
The experimental evidence in \cite{iken} for small values of $r$ (with suitable $\epsilon$ and $b$) suggests that they are very rare, though they do exist for these small values.
In summary, although their density can be expected to be extremely small, it is a relevant and rather non-trivial problem in the context of GCT to show that exceptional partition triples exist and that their number is large enough.

\subsection{Construction of superpolynomially many partition tri\-ples in the Kronecker cone with vanishing Kronecker coefficients}\label{suncond}
As the first step towards this goal, we relax the condition~\ref{dexceptional:1} to the weaker requirement that only $\mu=\pi$, the condition \ref{dexceptional:5} to the weaker requirement weaker that $\lambda$ is not a hook (since it can be shown that $p(\lambda)=0$ if $\lambda$ is a hook), and ignore the condition~\ref{dexceptional:6}.
A priori, it is not clear that partition triples with these properties exist even after this shape relaxation, since condition~\ref{dexceptional:3} is retained.
The following result shows that the number of Kronecker coefficients with this relaxation of \ref{dexceptional} is superpolynomial.

\begin{theorem}[The main result]\label{tkron1}
For any $0 < \epsilon \le 1$, there exists
 $0<a<1$, such that, for all   $m$,
there exist  $\Omega(2^{m^a})$  partition triples $(\lambda,\mu,\pi)$ such that
\begin{enumerate}
\item\label{tkron1:0} $k_{\mu,\pi}^\lambda = 0$,
\item\label{tkron1:1} $\mu=\pi$,
\item\label{tkron1:2} $\height(\mu) = m$, and   $\height(\lambda) \le m^ \epsilon$,
\item\label{tkron1:3} $(\lambda,\mu,\pi) \in \kron(m)$,
\item $|\lambda|=|\mu|=|\pi| \le  m^3$,  and
\item\label{tkron1:5} $\lambda$ is not a hook.
\end{enumerate}
\end{theorem}

Furthermore:

\begin{theorem}\label{tkron11}
Assuming coNP $\neq$ NP, the set of partition triples satisfying constraints \ref{tkron1:0}--\short\ref{tkron1:5} as well as
\begin{enumerate}[start=7]
\item\label{tkron11:6} $(\lambda^T, \mu^T, \mu) \in \kron(m')$, where $m'$ is the maximum of the heights of $\lambda^T$, $\mu^T$ and $\mu$, and
\item\label{tkron11:7} $(\lambda,\mu^T,\mu^T) \in \kron(m'')$,
where $m''$ is the maximum of the heights of $\lambda$ and  $\mu^T$,
\end{enumerate}
is superpolynomial in $m$, as $m \rightarrow \infty$.
\end{theorem}

The constraints~\short\ref{tkron1:3}, \ref{tkron11:6}, and \ref{tkron11:7} together guarantee that the vanishing of $k^\lambda_{\mu,\mu}$ cannot be directly shown using the defining inequalities of the Kronecker cone \citep{berenstein,klyachko,ressayre,verwal} in conjunction with the known symmetries~\ref{eq:permutation symmetry} and~\ref{eq:transpose symmetry}.

Our proof of \ref{tkron1} shows that the partition triples $(\lambda,\mu,\mu)$ satisfying the constraints therein can even be constructed \emph{explicitly}.
This means there is a one-to-one map from the set of Boolean strings of length $\le m^a$ to the set of partitions triples $(\lambda,\mu,\mu)$ with properties~\short\ref{tkron1:0}--\short\ref{tkron1:5} that can be computed in $\poly(m)$ time.

While \ref{tkron1} shows existence of superpolynomially many partition triples satisfying the constraints therein, the density of such partition triples is exponentially small, since $a$ therein is much smaller than $1$ (see \ref{sexample}).
This may explain why vanishing Kronecker coefficients with the partition triples in the Kronecker cone occur so rarely in computer experiments, as observed in \cite{iken}.

\subsection{Proof technique}\label{sprooftech}
\ref{tkronnp} is proved by extending the NP-completeness technique in \cite{brunetti} in conjunction with
the fundamental lower and upper bounds on Kronecker coefficients established in \cite{manivel1997applications,burgisserexp,vallejo2000plane}.
\ref{tkronpos} is a byproduct of this proof.

A refined form of \ref{tkronnp} lies at the heart of the proof of \ref{tkron1}.
Specifically, we show in \ref{tnprefined} that the problem of deciding positivity of $k^\lambda_{\mu,\pi}$ remains NP-hard under polynomial-time many-one reductions \citep{karpcomplexity} even when the partitions $(\lambda,\mu,\pi)$ are required to satisfy the constraints \ref{tkron1:1}--\short\ref{tkron1:5}.
This is done by extending the proof technique of \ref{tkron1} using the result in \cite{burgnonvanish} that $(\lambda,\delta(\lambda), \delta(\lambda))$, for $|\lambda|$ divisible by $r$, lies in the Kronecker cone whenever the height of $\lambda$ is $\le r^2$.
By~\cite{fortune}, if there exists a co-sparse NP-complete language under polynomial-time many-one-reductions, then P=NP.
(Here we call a language {\em sparse} if the number of strings in it
of bitlength $\le N$ is bounded by a fixed polynomial in $N$.
It is called {\em co-sparse} if its complement is sparse.)
Hence \ref{tnprefined} in conjunction with \cite{fortune} implies that the set of partition triples satisfying the constraints \ref{tkron1:0}--\short\ref{tkron1:5} is non-sparse, i.e., has size superpolynomial in $m$, assuming that P$\neq$NP.

To prove \ref{tkron1}, we have to discard of the assumption that P=NP and replace the superpolynomial bound by $\Omega(2^{m^a})$ bound for some $a>0$.
This is done in \ref{tkron1crux} by exhibiting
 a  polynomial-time one-one reduction from the \ThreeDMatching problem \citep{johnson} to the problem of deciding positivity of Kronecker coefficients, with the partition triples satisfying the constraints~\ref{tkron1:1}--\short\ref{tkron1:5},
where  a {\em polynomial-time one-one reduction}
means an injective polynomial-time many-one reduction.
Hence, the $\Omega(2^{m^a})$ bound in \ref{tkron1} follows from a similar lower bound on the number of instances of the \ThreeDMatching problem with a ``NO'' answer.
The proof automatically shows that $\Omega(2^{m^a})$ partition triples satisfying the constraints in \ref{tkron1} can be constructed explicitly.
This follows by fixing a suitable set of $2^{N^b}$ instances, for some constant $b>0$, of the \ThreeDMatching problem of bitlength $\le N$ with ``NO'' answer, and mapping them injectively, via a sequence of polynomial time one-one reductions, to $\Omega(2^{m^a})$ such partition triples.

\ref{tkron11} is proved by extending the proof of \ref{tkron1} using an auxiliary result, \ref{lsparse}, which extends the hardness vs.~non-sparseness result in \cite{fortune}, together with the result in~\cite{BCMW} which asserts that the membership problem for the Kronecker cone is in $\mbox{NP} \cap \mbox{coNP}$.

\subsection{Effectiveness of the explicit proof strategy}
Perhaps the most novel aspect of this paper is the synthesis of the representation theory of Kronecker coefficients with the theory of NP-completeness to prove unconditionally existence of superpolynomially many partition triples in the Kronecker cone with vanishing Kronecker coefficients.

In principle, the existence of partition triples satisfying the constraints in \ref{tkron1} may be proved by a nonconstructive
technique.
Yet, the only way we can prove this existence at present is by constructing such partitions {\em explicitly}, using the theory of algorithms,
as done in the proof of \ref{tkron1}.
Thus this proof illustrates effectiveness of the explicit proof strategy of GCT \citep{GCTexplicit,GCTflip,GCT6} in a nontrivial setting.

\subsection{Organization}
The rest of this article is organized as follows.
\ref{sbounds} describes the lower and upper bounds for the Kronecker coefficients that are needed for the proofs of \ref{tkronnp} and \ref{tkronpos}.
These proofs are given in \ref{snppos}.
A refinement of \ref{tkronnp}, which is needed for the proof of \ref{tkron1}, is proved in \ref{srefined}.
\ref{tkron1} and \ref{tkron11}  are proved in \ref{scone}.
\ref{srectangular} proves additional results in support of the conjecture in \cite{GCT6} that the problem of deciding positivity of rectangular Kronecker coefficients is in $P$.

\section{Lower and upper bounds for the Kronecker coefficient}\label{sbounds}
In this section, we give representation-theoretic proofs of some known lower and upper bounds from \cite{manivel1997applications,burgisserexp,vallejo2000plane} for  the Kronecker coefficients.
These bounds as well as their representation-theoretic interpretation given here will play a crucial role in the proofs of \ref{tkronnp} and \ref{tkronpos} in \ref{snppos}.
We begin with the following well-known result (whose proof we include for the sake of completeness):

\begin{lemma}
\label{lem:antisymmetric}
  Let $\lambda,\mu,\pi$ denote Young diagrams, each with $n$ boxes and no more than $r$ columns.
  Then the Kronecker coefficient~$k^\lambda_{\mu,\pi}$ is equal to the multiplicity of the irreducible representation $V_{\lambda^T}(\GL(r)) \otimes V_{\mu^T}(\GL(r)) \otimes V_{\pi^T}(\GL(r))$ of $\GL(r)^3$ in the anti-symmetric subspace $\bigwedge^n (\mathbb C^r)^{\otimes 3}$.
\end{lemma}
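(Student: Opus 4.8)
\emph{Proof strategy.}
The plan is to realize $\bigwedge^n\bigl((\mathbb C^r)^{\otimes 3}\bigr)$ as the sign-isotypic component of $\bigl((\mathbb C^r)^{\otimes 3}\bigr)^{\otimes n}$ for the natural $S_n$-action permuting the $n$ tensor factors, and then to compute the desired multiplicity by combining Schur--Weyl duality, applied separately to each of the three copies of $\mathbb C^r$, with elementary identities for symmetric-group characters. Throughout I use the standard reformulation of the Kronecker coefficient as $k^\lambda_{\mu,\pi}=\dim\mathrm{Hom}_{S_n}\bigl([\lambda],[\mu]\otimes[\pi]\bigr)=\dim\bigl([\lambda]\otimes[\mu]\otimes[\pi]\bigr)^{S_n}$, where $[\alpha]$ is the Specht module; by self-duality of Specht modules this is symmetric in $\lambda,\mu,\pi$, and equals the triple inner product $\langle \chi^\lambda\chi^\mu\chi^\pi,1\rangle$.

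First I would regroup $\bigl((\mathbb C^r)^{\otimes 3}\bigr)^{\otimes n}\cong(\mathbb C^r)^{\otimes n}\otimes(\mathbb C^r)^{\otimes n}\otimes(\mathbb C^r)^{\otimes n}$, observing that the $S_n$ permuting the copies of $\mathbb C^r\otimes\mathbb C^r\otimes\mathbb C^r$ becomes the diagonal $S_n$ permuting the factors inside each of the three groups, while the three copies of $\GL(r)$ act group-by-group. Applying Schur--Weyl duality in each group turns this into $\bigoplus_{\alpha,\beta,\gamma}V_\alpha(\GL(r))\otimes V_\beta(\GL(r))\otimes V_\gamma(\GL(r))\otimes\bigl([\alpha]\otimes[\beta]\otimes[\gamma]\bigr)$, the sum over Young diagrams $\alpha,\beta,\gamma$ with $n$ boxes and at most $r$ rows, with $S_n$ acting diagonally on $[\alpha]\otimes[\beta]\otimes[\gamma]$. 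Since, as a $\GL$-representation, $\bigwedge^n$ of a vector space is the image of the antisymmetrizer on its $n$-th tensor power, i.e. the sign-isotypic component, taking sign-isotypic components removes the $\GL$-factors and shows that $V_\alpha(\GL(r))\otimes V_\beta(\GL(r))\otimes V_\gamma(\GL(r))$ occurs in $\bigwedge^n\bigl((\mathbb C^r)^{\otimes 3}\bigr)$ with multiplicity $\dim\mathrm{Hom}_{S_n}\bigl([1^n],[\alpha]\otimes[\beta]\otimes[\gamma]\bigr)$.

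It remains to evaluate this multiplicity. Writing it as $\tfrac1{n!}\sum_{\sigma\in S_n}\chi^{(1^n)}(\sigma)\chi^\alpha(\sigma)\chi^\beta(\sigma)\chi^\gamma(\sigma)$ and using $\chi^{(1^n)}=\mathrm{sgn}$ together with the sign-twist identity $\mathrm{sgn}(\sigma)\,\chi^\gamma(\sigma)=\chi^{\gamma^T}(\sigma)$ gives $\tfrac1{n!}\sum_\sigma\chi^\alpha(\sigma)\chi^\beta(\sigma)\chi^{\gamma^T}(\sigma)=k^{\alpha}_{\beta,\gamma^T}$. Now specialize $\alpha=\lambda^T$, $\beta=\mu^T$, $\gamma=\pi^T$, which are all legitimate highest weights for $\GL(r)$ precisely because $\lambda,\mu,\pi$ have at most $r$ columns, hence $\lambda^T,\mu^T,\pi^T$ have at most $r$ rows. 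This yields that $V_{\lambda^T}(\GL(r))\otimes V_{\mu^T}(\GL(r))\otimes V_{\pi^T}(\GL(r))$ occurs in $\bigwedge^n\bigl((\mathbb C^r)^{\otimes 3}\bigr)$ with multiplicity $k^{\lambda^T}_{\mu^T,\pi}$, and a final application of the sign twist to the first two arguments (equivalently $\chi^{\lambda^T}\chi^{\mu^T}=\chi^\lambda\chi^\mu$) gives $k^{\lambda^T}_{\mu^T,\pi}=k^\lambda_{\mu,\pi}$, as claimed.

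I do not anticipate a genuine obstacle: the argument is mostly bookkeeping. The only points requiring care are (i) invoking the sign-twist identity the right number of times — once to convert the antisymmetrization into a Specht-module triple inner product, and once more to absorb a pair of transposes — and using that transposing exactly two of the three arguments of a Kronecker coefficient is harmless (a $\mathrm{sgn}^2$ cancellation) together with its full permutation symmetry; and (ii) checking that the ``at most $r$ columns'' hypothesis is exactly what makes the transposed diagrams valid $\GL(r)$-weights, so that no diagram is lost or spuriously added in the specialization. (One could instead organize the same computation via the dual Cauchy identity $\bigwedge^n(X\otimes Y)=\bigoplus_\alpha V_{\alpha^T}(X)\otimes V_\alpha(Y)$ with $X=\mathbb C^r\otimes\mathbb C^r$ and $Y=\mathbb C^r$, followed by the $\GL$-definition of $k^\bullet_{\bullet,\bullet}$; this trades the $S_n$-character step for a second appeal to a Cauchy-type formula.)
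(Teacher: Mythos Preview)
Your proof is correct and follows essentially the same route as the paper: both arguments reduce the multiplicity in $\bigwedge^n(\mathbb C^r)^{\otimes 3}$ to $\dim\mathrm{Hom}_{S_n}\bigl([1^n],[\alpha]\otimes[\beta]\otimes[\gamma]\bigr)$ via Schur--Weyl duality, and then use the sign-twist $[\lambda^T]=[\lambda]\otimes\mathrm{sgn}$ (equivalently, the quadruple-invariant formula $\dim([\alpha]\otimes[\beta]\otimes[\gamma]\otimes[\kappa])^{S_n}$ with $\mathrm{sgn}^2=1$) to match it with $k^\lambda_{\mu,\pi}$. The paper packages the Schur--Weyl step through an auxiliary coefficient $\tilde k^\kappa_{\alpha,\beta,\gamma}$ whereas you unfold the threefold application explicitly, but the content is the same.
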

\begin{proof}
Let $\tilde k^\kappa_{\alpha,\beta,\gamma}$ denote the multiplicity of the irreducible representation $V_{\alpha}(\GL(r)) \otimes V_{\beta}(\GL(r)) \otimes V_{\gamma}(\GL(r))$ of $\GL(r)^3$ in the Weyl module $V_\kappa(\GL(r^3))$.
Our original definition of the Kronecker coefficient is easily seen to be equivalent to $\tilde k^{(n)}_{\lambda,\mu,\pi}$ (e.g.,~\cite{walter2014thesis}, but this is standard), and so we need to show that
\begin{equation}\label{eq:antisymmetric claim}
  \tilde k^{(n)}_{\lambda,\mu,\pi} = \tilde k^{(1^n)}_{\lambda^T,\mu^T,\pi^T}.
\end{equation}
Given a partition $\kappa$, let $[\kappa]$ denote the Specht module (i.e., an irreducible representation) of $S_n$.
By Schur-Weyl duality, $\tilde k^\kappa_{\alpha,\beta,\gamma}$ is also equal to the multiplicity of
$[\kappa]$ in the  triple tensor product $[\alpha] \otimes [\beta] \otimes [\gamma]$. Since the
representations of the symmetric group are self-dual, this shows that:
\[ \tilde k^\kappa_{\alpha,\beta,\gamma} = \dim ([\alpha] \otimes [\beta] \otimes [\gamma] \otimes [\kappa])^{S_n} \]
Since  $[\lambda^T] = [\lambda] \otimes [(1^n)]$, $[(1^n)] \otimes [(1^n)] = [(n)]$, and $[(n)]$ is the trivial representation,
\ref{eq:antisymmetric claim} follows at once.
\end{proof}

Given a (finite) point set $P \subseteq \{0,\dots,r-1\}^3$,
let $x_P(i)$, $0 \le i \le r-1$, be the number of points in $P$ with
the $x$-coordinate $i$.
We call $x_P=(x_P(0), \ldots, x_P(r-1))$ the \emph{$x$-marginal} of $P$.
We similarly define the $y$-marginal $y_P$ and the $z$-marginal $z_P$.
The triple $(x_P, y_P, z_P)$ is called the {\em marginals} of $P$.

\begin{definition}
We define $t^\lambda_{\mu,\pi}$ as the number of point sets $P \subseteq \{0,\dots,r-1\}^3$ with marginals
 $(\lambda^T, \mu^T, \pi^T)$.
\end{definition}

Note that $\lambda^T_i$ is the number of boxes in the $i$-th \emph{column} of~$\lambda$.

The coefficients $t^\lambda_{\mu,\pi}$ have a pleasant representation-theoretical interpretation that is closely related to \ref{lem:antisymmetric}.
To see this, observe that we can associate with any point set
\[ P = \{(x_1,y_1,z_1), \dots, (x_n,y_n,z_n)\} \subseteq \{0,\dots,r-1\}^3 \]
of cardinality $n$ the following vector:
\[ \psi_P = \bigwedge_{j=1}^n e_{x_j} \otimes e_{y_j} \otimes e_{z_j} \in \bigwedge^n \bigl (\mathbb C^r)^{\otimes 3}, \]
where $\{e_i\}$ is the standard basis of $\C^r$.
The vectors $\psi_P$ form a basis of $\bigwedge^n (\mathbb C^r)^{\otimes 3}$ as $P$ ranges over all such point sets.
Moreover, each $\psi_P$ is  a weight vector for  the $GL(r)^3$-action, whose weight is given by the marginals of the point set $P$ (see \ref{subsec:prelims} for the definition of a weight vector).
Thus we obtain the following result.

\begin{lemma}
\label{lem:antisymmetric weights}
  Let $\lambda,\mu$, and $\pi$ denote Young diagrams with $n$ boxes each and no more than $r$ columns.
  Then  $t^\lambda_{\mu,\pi}$ is equal to the weight multiplicity of $(\lambda^T,\mu^T,\pi^T)$ in the anti-symmetric
$GL(r)^3$-module $\bigwedge^n (\mathbb C^r)^{\otimes 3}$.
\end{lemma}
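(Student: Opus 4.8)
The plan is to make rigorous the two observations recorded in the paragraph preceding the statement: that the vectors $\psi_P$ form a basis of $\bigwedge^n(\mathbb C^r)^{\otimes 3}$, and that each is a weight vector whose weight reads off the marginals of $P$. First I would pin down the definition of $\psi_P$. For a point set $P=\{(x_1,y_1,z_1),\dots,(x_n,y_n,z_n)\}$ of cardinality $n$, fix an arbitrary ordering of its elements and set $\psi_P=\bigwedge_{j=1}^n e_{x_j}\otimes e_{y_j}\otimes e_{z_j}$. Because the $n$ triples are pairwise distinct, the vectors $e_{x_j}\otimes e_{y_j}\otimes e_{z_j}$ are $n$ distinct members of the standard basis $\{e_i\otimes e_j\otimes e_k : 0\le i,j,k\le r-1\}$ of $(\mathbb C^r)^{\otimes 3}\cong\mathbb C^{r^3}$; hence $\psi_P\neq 0$, and reordering $P$ changes $\psi_P$ only by a sign. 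Conversely, every wedge of $n$ distinct standard basis vectors of $(\mathbb C^r)^{\otimes 3}$ arises this way for a unique $n$-element set $P\subseteq\{0,\dots,r-1\}^3$. Thus, up to sign, $\{\psi_P\}_P$ is precisely the standard wedge basis of $\bigwedge^n(\mathbb C^r)^{\otimes 3}$; in particular the $\psi_P$ are linearly independent and span.

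Next I would compute the $GL(r)^3$-weight of $\psi_P$. Let $T\cong(\mathbb C^\times)^r$ be the diagonal torus of $GL(r)$ and take $(s,t,u)\in T^3$ with $s=\mathrm{diag}(s_0,\dots,s_{r-1})$ and likewise for $t,u$. Since $(s,t,u)$ scales $e_a\otimes e_b\otimes e_c$ by $s_a t_b u_c$, it acts on $\psi_P$ by
\[ (s,t,u)\cdot\psi_P = \Big(\prod_{j=1}^n s_{x_j}t_{y_j}u_{z_j}\Big)\psi_P = \Big(\prod_{i=0}^{r-1} s_i^{x_P(i)}\Big)\Big(\prod_{i=0}^{r-1} t_i^{y_P(i)}\Big)\Big(\prod_{i=0}^{r-1} u_i^{z_P(i)}\Big)\psi_P, \]
using $x_P(i)=\#\{j:x_j=i\}$ and similarly for $y_P,z_P$. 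Hence $\psi_P$ is a $T^3$-weight vector of weight exactly the marginal triple $(x_P,y_P,z_P)\in(\mathbb Z^r)^3$. (The hypothesis that $\lambda,\mu,\pi$ have at most $r$ columns is what guarantees $\lambda^T,\mu^T,\pi^T$ are genuine length-$r$ weights, matching the coordinate range $\{0,\dots,r-1\}$.)

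Combining the two steps finishes the argument: for any weight $(\alpha,\beta,\gamma)$, the vectors $\psi_P$ with $(x_P,y_P,z_P)=(\alpha,\beta,\gamma)$ form a basis of the corresponding weight space of $\bigwedge^n(\mathbb C^r)^{\otimes 3}$, so that weight multiplicity equals the number of $n$-element point sets in $\{0,\dots,r-1\}^3$ with marginals $(\alpha,\beta,\gamma)$. Taking $(\alpha,\beta,\gamma)=(\lambda^T,\mu^T,\pi^T)$ — whose entries sum to $|\lambda|=|\mu|=|\pi|=n$ in each of the three blocks, consistent with $|P|=n$ — this count is by definition $t^\lambda_{\mu,\pi}$, which is the claim. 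The only point demanding real care is the sign and distinctness bookkeeping for the wedge basis in the first step; once that is in place everything else is immediate, so I do not anticipate a genuine obstacle. I note in passing that this presentation also makes transparent the inequality $k^\lambda_{\mu,\pi}\le t^\lambda_{\mu,\pi}$, since $(\lambda^T,\mu^T,\pi^T)$ is a dominant — hence highest — weight of the isotypic component appearing in Lemma~\ref{lem:antisymmetric}.
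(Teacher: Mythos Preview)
Your proposal is correct and follows exactly the approach the paper sketches in the paragraph immediately preceding the lemma: the paper simply asserts that the $\psi_P$ form a basis and are weight vectors with weight given by the marginals, and states the lemma as a consequence. You have supplied the routine details (well-definedness up to sign, the explicit torus computation, and the matching of the number of boxes with $|P|=n$), none of which the paper spells out, but there is no difference in method.
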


Following~\cite{vallejo2000plane}, we call a subset $P \subseteq \{0,\dots,r-1\}^3$ a \emph{pyramid} if,
 for any $(x,y,z) \in P$ and  $0 \leq x' \leq x$, $0 \leq y' \leq y$, $0 \leq z' \leq z$,
we have that $(x',y',z') \in P$. (It would also be natural to call such $P$ a \emph{3-partition}; cf.~\cite{manivel1997applications}.)

\begin{definition}
Let  $p^\lambda_{\mu,\pi}$ denote the number of pyramids with marginals $(\lambda^T,\mu^T,\pi^T)$.
\end{definition}

From our representation-theoretic interpretation, we directly obtain the following fundamental bounds, which were proved previously using different methods in~\cite{manivel1997applications,burgisserexp} \citep[cf.][]{vallejo2000plane}:

\begin{lemma}\label{lbounds}
  For all partitions $\lambda,\mu,\pi$, we have $p^\lambda_{\mu,\pi} \leq k^\lambda_{\mu,\pi} \leq t^\lambda_{\mu,\pi}$.
\end{lemma}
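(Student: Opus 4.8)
The plan is to establish the two inequalities $p^\lambda_{\mu,\pi} \le k^\lambda_{\mu,\pi}$ and $k^\lambda_{\mu,\pi} \le t^\lambda_{\mu,\pi}$ separately, both by exploiting the representation-theoretic picture developed in Lemmas~\ref{lem:antisymmetric} and \ref{lem:antisymmetric weights}. By Lemma~\ref{lem:antisymmetric}, the Kronecker coefficient $k^\lambda_{\mu,\pi}$ equals the multiplicity of the irreducible $V_{\lambda^T}(\GL(r)) \otimes V_{\mu^T}(\GL(r)) \otimes V_{\pi^T}(\GL(r))$ inside $\bigwedge^n (\C^r)^{\otimes 3}$ (after choosing $r$ to be at least the number of columns of each of $\lambda,\mu,\pi$; the claim for general partitions follows since neither side changes when $r$ grows). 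By Lemma~\ref{lem:antisymmetric weights}, the coefficient $t^\lambda_{\mu,\pi}$ equals the dimension of the weight space of weight $(\lambda^T,\mu^T,\pi^T)$ in the same module. So the two inequalities become statements comparing, for a fixed $\GL(r)^3$-module, an isotypic multiplicity against a weight-space dimension (upper bound) and against a ``highest-weight-vector count'' (lower bound).

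For the upper bound $k^\lambda_{\mu,\pi} \le t^\lambda_{\mu,\pi}$: in any rational representation of $\GL(r)^3$, the multiplicity of the irreducible of highest weight $(\lambda^T,\mu^T,\pi^T)$ is at most the dimension of the $(\lambda^T,\mu^T,\pi^T)$-weight space, because every copy of that irreducible contributes (at least) one linearly independent vector of that weight — namely its highest weight vector — and these are linearly independent across the isotypic decomposition. Since $(\lambda^T,\mu^T,\pi^T)$ is dominant, no \emph{other} irreducible occurring in the module can have it as a weight with multiplicity exceeding... actually one does not even need that refinement: already the highest weight vectors of the copies of $V_{\lambda^T}\otimes V_{\mu^T}\otimes V_{\pi^T}$ alone span a subspace of the weight space of dimension $k^\lambda_{\mu,\pi}$. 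This gives the right-hand inequality immediately from Lemma~\ref{lem:antisymmetric weights}.

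For the lower bound $p^\lambda_{\mu,\pi} \le k^\lambda_{\mu,\pi}$: the idea is to produce $p^\lambda_{\mu,\pi}$ linearly independent highest weight vectors of weight $(\lambda^T,\mu^T,\pi^T)$ in $\bigwedge^n (\C^r)^{\otimes 3}$ — equivalently, vectors annihilated by the three copies of the nilpotent upper-triangular subalgebra $\mathfrak{n}$. I would show that for a pyramid $P$, the basis vector $\psi_P$ is itself such a highest weight vector: raising the $x$-coordinate of any point of $P$ by one, if the target lattice point is still in $P$ then the wedge has a repeated tensor factor and vanishes, and if it is not in $P$ then by the pyramid (staircase) closure property the resulting configuration still has some point dominating it that is absent — one must check that the net effect of all raising operators is zero. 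More carefully, the raising operator $E_{i,i+1}$ acting in the first factor sends $\psi_P$ to a signed sum over points of $P$ with $x$-coordinate $i+1$ of the configuration obtained by lowering that coordinate to $i$; the pyramid condition forces each such configuration either to contain a repeated box (hence vanish) or to pair up and cancel, so $\psi_P$ is a genuine highest weight vector. Since distinct pyramids give distinct basis vectors $\psi_P$, these are automatically linearly independent, and they all lie in the $(\lambda^T,\mu^T,\pi^T)$-isotypic component (being highest weight vectors of that weight), yielding $p^\lambda_{\mu,\pi} \le k^\lambda_{\mu,\pi}$.

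The main obstacle I anticipate is the cancellation bookkeeping in the lower bound: verifying that $\psi_P$ is killed by \emph{all} the raising operators $E_{i,i+1}$ in each of the three tensor slots requires a clean combinatorial argument that the pyramid's order-ideal structure makes every surviving term cancel in pairs or vanish by antisymmetry. This is exactly the point where the ``3-partition'' / staircase shape is used essentially, and where a naive argument could miss a term; it is also presumably the reason the authors cite \cite{manivel1997applications,burgisserexp,vallejo2000plane} for the original proofs. Everything else — the reduction to weight spaces, the highest-weight-vector counting for the upper bound, and the stability in $r$ — is routine once Lemmas~\ref{lem:antisymmetric} and \ref{lem:antisymmetric weights} are in hand.
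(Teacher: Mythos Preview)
Your approach is essentially the paper's own: the upper bound via Lemmas~\ref{lem:antisymmetric} and \ref{lem:antisymmetric weights} (highest weight vectors inject into the weight space), and the lower bound by verifying that each $\psi_P$ for a pyramid $P$ is itself a highest weight vector. The only correction is that the ``cancellation bookkeeping'' you flag as the main obstacle never arises: when $(E_{x',x},0,0)$ with $x'<x$ hits the factor $e_{x_j}\otimes e_{y_j}\otimes e_{z_j}$ with $x_j=x$, the pyramid condition forces $(x',y_j,z_j)\in P$ already, so that summand has a repeated tensor in the wedge and vanishes \emph{individually} --- there is no pairing or sign cancellation between terms, and the whole verification is a single line per generator.
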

\begin{proof}
The upper bound on $k^\lambda_{\mu,\pi}$ follows directly from \ref{lem:antisymmetric} and \ref{lem:antisymmetric weights}, since the irreducible representations of $GL(r)^3$ in $\bigwedge^n (\mathbb C^r)^{\otimes 3}$ are in one-to-one correspondence with their highest weight vectors, and every highest weight vector is a weight vector.

For the lower bound, suppose that
\[ P= \{(x_1,y_1,z_1), \dots, (x_n,y_n,z_n)\} \subseteq \{0,\dots,r-1\}^3 \]
is a pyramid with marginals $(\lambda^T,\mu^T,\pi^T)$.
We will show that $\psi_P$ is not only a weight vector, but in fact a highest weight vector.
For this, we need to argue that $\psi$ is annihilated by all raising operators (cf.~\ref{subsec:prelims}).
Thus consider $(E_{x',x},0,0)$, where $E_{x',x}$ denotes the upper triangular matrix with a single 1 in the $x'$-th row and $x$-th column, and otherwise zero (here $x' < x$).
Its action on $\psi_P$ is given by
\begin{align*}
  &\quad (E_{x',x},0,0) \cdot \psi_P \\
  &= \sum_{j=1}^n (-1)^{j-1} E_{x',x} e_{x_j} \otimes e_{y_j} \otimes e_{z_j} \wedge \bigwedge_{j'=1, j' \neq j}^n e_{x_{j'}} \otimes e_{y_{j'}} \otimes e_{z_{j'}} \\
  &= \sum_{j=1}^n (-1)^{j-1} \delta_{x,x_j} e_{x'} \otimes e_{y_j} \otimes e_{z_j} \wedge \bigwedge_{j'=1, j' \neq j}^n e_{x_{j'}} \otimes e_{y_{j'}} \otimes e_{z_{j'}} = 0,
\end{align*}
since each summand vanishes individually.
Indeed, if $x \neq x_j$ then $\delta_{x,x_j} = 0$ and so the summand is zero.
Otherwise, if $x = x_j$ then $(x_j,y_j,z_j) \in P$ and $x' < x$ imply that $(x',y_j,z_j) \in P$ by the pyramid condition; therefore $e_{x'} \otimes e_{y_j} \otimes e_{z_j}$ appears twice in the wedge product and so the summand vanishes as well.
The same argument applies to the other generators $(0,E_{y',y},0)$ and $(0,0,E_{z',z})$ of $\mathfrak n$.
Thus we conclude that the pyramid condition  ensures that $\psi_P$ is a highest weight vector.
\end{proof}

\begin{corollary}
\label{cor:upper and lower match}
  Let $\lambda$, $\mu$, $\pi$ be partitions such that any point set with marginals $(\lambda^T,\mu^T,\pi^T)$ is necessarily a pyramid. Then $k^\lambda_{\mu,\pi} = t^\lambda_{\mu,\pi} = p^\lambda_{\mu,\pi}$.
\end{corollary}

\section{Kronecker coefficients with \#P-formulae}\label{snppos}
In this section, we prove \ref{tkronnp} and \ref{tkronpos}.

For this, we first derive a sufficient condition on the marginals $(\lambda^T, \mu^T, \pi^T)$ such that any compatible point set is necessarily a pyramid (and hence \ref{cor:upper and lower match} is applicable).
Adapting the approach of \cite{brunetti}, we consider a point set $P$ such that $P_r \subseteq P \subsetneq P_{r+1}$, where
\[ P_r = \{ (x,y,z) \in \{0,\dots,r-1\}^3 : x+y+z \leq r-1 \} \]
denotes the \emph{simplex} of side length $r \geq 1$.
Let $n$ denote the total number of points in $P$.
Then the projection of the \emph{barycenter} $b_P := \sum_{p \in P} p$ of $P$ onto the diagonal $(1,1,1)$ can be computed as follows:
\begin{equation}\label{eq:barycenter definition}
    b_P \cdot (1,1,1)
  = b_r \cdot (1,1,1) + r  (n - \lvert P_r \rvert)
  =: p(n),
\end{equation}
where $b_r$ denotes the barycenter of the simplex $P_r$.
Note that this formula depends only $n$, the number of points in the point set $P$. 
We can thus define a function $p(n)$ by~\ref{eq:barycenter definition}, first for all $n$ such that $\lvert P_r \rvert \leq n < \lvert P_{r+1} \rvert$, and then, by varying $r$, for all $n$.
Explicitly,
\begin{equation}\label{eq:barycenter explicit}
  p(n) = b_{r(n)} \cdot (1,1,1) + r(n)  \bigl( n - \lvert P_{r(n)} \rvert\bigr),
\end{equation}
where $r(n)$ is the maximal $r$ such that $\lvert P_r \rvert = r(r+1)(r+2)/6 \leq n$.

\begin{definition}
Let us call $(\lambda,\mu,\pi)$, with $|\lambda|=|\mu|=|\pi|=n \not = 0$,
 \emph{simplex-like} if there exists some $r$ such that the Young diagrams of $\lambda, \mu$, and $\pi$
have at most $r+1$ columns, and
\[ \sum_{i=0}^r i \lambda^T_i + \sum_{j=0}^r j \mu^T_j + \sum_{k=0}^r k \pi^T_k = p(n). \]
\end{definition}
Whether $(\lambda,\mu,\pi)$ is simplex-like can be checked in polynomial time (even assuming that $\lambda,\mu$ and $\pi$ are given in binary).

The following lemma justifies the term ``simplex-like''.

\begin{lemma}\label{lsimpyr}
  Let $(\lambda,\mu,\pi)$ be simplex-like.
  Then any point set $P$ with marginals $(\lambda^T,\mu^T,\pi^T)$ is necessarily of the form $P_r \subseteq P \subsetneq P_{r+1}$,
for some $r \geq 1$.
  In particular, $P$ is a pyramid.
\end{lemma}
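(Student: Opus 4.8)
The plan is to recognise both sides of the ``simplex-like'' equation as the projection of a barycenter onto the diagonal, and then to establish the purely combinatorial fact that $p(n)$ is the \emph{minimum} of this projection over all $n$-point subsets of $\N^3$, with the minimum attained exactly on the point sets sandwiched between $P_r$ and $P_{r+1}$. Throughout, for a finite $Q\subseteq\N^3$ I write $f(Q):=\sum_{(x,y,z)\in Q}(x+y+z)=b_Q\cdot(1,1,1)$, and I let $r\ge 1$ be the largest index with $|P_r|=r(r+1)(r+2)/6\le n$ (this exists and is $\ge 1$ since $|P_1|=1\le n$; it is precisely the index hidden in the definition of $p(n)$). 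First I would record that for any point set $P$ with marginals $(\lambda^T,\mu^T,\pi^T)$,
\[
  f(P)=\sum_{p\in P}x_p+\sum_{p\in P}y_p+\sum_{p\in P}z_p
      =\sum_{i\ge0} i\,\lambda^T_i+\sum_{j\ge0} j\,\mu^T_j+\sum_{k\ge0} k\,\pi^T_k,
\]
since $\lambda^T_i$ is exactly the number of points of $P$ with first coordinate $i$, and likewise for the other two factors. Because $\lambda,\mu,\pi$ have only finitely many columns, the simplex-like hypothesis sets this same full sum equal to $p(n)$; hence $f(P)=p(n)$. (The column bound plays no further role, though it does force $P\subseteq\{0,\dots,r\}^3$; and $|P|=|\lambda|=n$ automatically.)

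Next comes the combinatorial core: $\min\{f(Q): Q\subseteq\N^3,\ |Q|=n\}=p(n)$, with equality iff $P_r\subseteq Q\subsetneq P_{r+1}$. I would prove this in three short steps. (i) \emph{Greedy construction}: take all of $P_r$ (legal since $|P_r|\le n$) together with any $n-|P_r|$ points of the shell $S_r:=\{(x,y,z):x+y+z=r\}$ (legal, since $|S_r|=|P_{r+1}|-|P_r|>n-|P_r|$ by maximality of $r$); the resulting $n$-set has $f=f(P_r)+r(n-|P_r|)=b_r\cdot(1,1,1)+r(n-|P_r|)=p(n)$, so the minimum is $\le p(n)$. (ii) \emph{Exchange}: if an $n$-set $Q$ fails to contain all of $P_r$, pick $q\in P_r\setminus Q$ (so $x_q+y_q+z_q\le r-1$); since $|Q\setminus P_r|=n-|Q\cap P_r|\ge n-|P_r|+1\ge1$, there is a point $q'\in Q\setminus P_r$, which has $x_{q'}+y_{q'}+z_{q'}\ge r$, and $(Q\setminus\{q'\})\cup\{q\}$ is an $n$-set with strictly smaller $f$; thus every minimizer contains $P_r$. (iii) For $Q\supseteq P_r$ the remaining $n-|P_r|$ points of $Q$ each contribute at least $r$ to $f$, so $f(Q)\ge f(P_r)+r(n-|P_r|)=p(n)$, with equality iff all of those points lie in $S_r\subseteq P_{r+1}$; combining with (i) and (ii) gives $f(Q)=p(n)\iff P_r\subseteq Q\subseteq P_{r+1}$, and then $Q\ne P_{r+1}$ because $|Q|=n<|P_{r+1}|$. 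Applying this with $Q=P$, together with $f(P)=p(n)$ from the first paragraph, yields $P_r\subseteq P\subsetneq P_{r+1}$, which is the first assertion of the lemma.

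Finally, the ``in particular'' clause follows because any $P$ with $P_r\subseteq P\subsetneq P_{r+1}$ is a pyramid: if $(x,y,z)\in P$ then $x+y+z\le r$, so for $0\le x'\le x$, $0\le y'\le y$, $0\le z'\le z$ we have $x'+y'+z'\le r$; when $x'+y'+z'\le r-1$ the point $(x',y',z')$ lies in $P_r\subseteq P$, and when $x'+y'+z'=r$ the coordinatewise inequalities together with $x+y+z\le r$ force $(x',y',z')=(x,y,z)\in P$. I expect the only genuine work to be steps (ii)--(iii) above, namely making the exchange argument airtight and carrying out the equality-case bookkeeping that pins $P$ between $P_r$ and $P_{r+1}$ (in particular checking that $p(n)$, defined via the maximal $r$, really is the global minimum and that both inclusions drop out of $f(P)=p(n)$); this is routine rather than conceptually deep, since by construction $p(n)$ is just the barycenter projection of the essentially unique greedy minimizer.
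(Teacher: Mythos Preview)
Your proof is correct and follows essentially the same approach as the paper: you compute $b_P\cdot(1,1,1)$ from the marginals, invoke the simplex-like hypothesis to get $f(P)=p(n)$, and then use that $p(n)$ is the minimum of $f$ over $n$-point sets with equality exactly when $P_r\subseteq P\subsetneq P_{r+1}$. The paper declares this last fact ``geometrically obvious'' and cites \cite{brunetti}, whereas you supply a self-contained exchange argument and also spell out the pyramid verification; both additions are welcome but do not change the underlying strategy.
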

\begin{proof}
  The level sets of the function $p \mapsto p \cdot (1,1,1)$ restricted to the positive octant are precisely the faces $\{(x,y,z) \in \mathbb N^3 : x+y+z = k-1\}$ of the simplices $P_k$.
  Thus it is geometrically obvious that for an arbitrary point set $P$ with $n$ elements, $b_P \cdot (1,1,1)$ is never smaller than $p(n)$, and that it attains this minimum if and only if $P_r \subseteq P \subsetneq P_{r+1}$, for some $r \geq 1$ \citep[cf.][]{brunetti}.
  Therefore, it suffices to show that our assumptions imply that $b_P \cdot (1,1,1) = p(n)$.
  This is indeed true as shown in the following computation, which relies on the fact that the barycenter is purely a function of the marginals:
  \[
    b_P \cdot (1,1,1)
  = \!\!\!\!\!\sum_{(x,y,z) \in P} \!\!\! (x+y+z)
  = \sum_{i=0}^r i \lambda^T_i + \sum_{j=0}^r j \mu^T_j + \sum_{k=0}^r k \pi^T_k
  = p(n).
  \]
The last step follows because  $(\lambda,\mu,\pi)$ is simplex-like.
\end{proof}

\begin{theorem}
\label{the:simplex-like}
  Let $(\lambda,\mu,\pi)$ be simplex-like. Then $k^\lambda_{\mu,\pi} = t^\lambda_{\mu,\pi} = p^\lambda_{\mu,\pi}$.
  In particular, this family of Kronecker coefficients has a $\#P$-formula. Here, $\lambda,\mu$ and $\pi$ can be given in unary or binary.
\end{theorem}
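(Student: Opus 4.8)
The plan is to get the chain of equalities for free from the rigidity already established, and then read a polynomial-size witness off that rigidity to obtain the $\#P$-formula. For the equalities, note that Lemma~\ref{lsimpyr} asserts exactly that, for a simplex-like triple, \emph{every} point set with marginals $(\lambda^T,\mu^T,\pi^T)$ is a pyramid, which is precisely the hypothesis of Corollary~\ref{cor:upper and lower match}; hence $k^\lambda_{\mu,\pi}=t^\lambda_{\mu,\pi}=p^\lambda_{\mu,\pi}$. It then suffices to exhibit a $\#P$-formula for $t^\lambda_{\mu,\pi}$ on simplex-like inputs.

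For the latter I would describe a nondeterministic polynomial-time machine $M$ which, on input $(\lambda,\mu,\pi)$, first tests in polynomial time whether the triple is simplex-like (possible even in the binary model, by the remark preceding Lemma~\ref{lsimpyr}) and rejects on all branches if it is not; otherwise it computes the side length $r$, i.e.\ the largest $r$ with $r(r+1)(r+2)/6\le n$ where $n:=|\lambda|$. By Lemma~\ref{lsimpyr} the point sets $P$ with the prescribed marginals are exactly those of the form $P=P_r\sqcup T$ with $T$ a subset of the shell $S:=P_{r+1}\setminus P_r=\{(x,y,z)\in\mathbb{N}^3:x+y+z=r\}$, and the assignment $T\mapsto P_r\sqcup T$ is injective; thus $t^\lambda_{\mu,\pi}$ equals the number of subsets $T\subseteq S$ for which $P_r\sqcup T$ has marginals $(\lambda^T,\mu^T,\pi^T)$. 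The machine $M$ then guesses $T$ as a $\{0,1\}$-vector indexed by $S$ and accepts iff $|P_r|+|T|=n$ and the marginals of $P_r\sqcup T$ equal $(\lambda^T,\mu^T,\pi^T)$; the marginals of $P_r$ are available in closed form (its $x$-marginal at $i$ is $\binom{r+1-i}{2}$, and symmetrically for $y$ and $z$) and $|P_r|=r(r+1)(r+2)/6$, so this verification takes time polynomial in $|S|+\log n$. The number of accepting branches of $M$ is then exactly $t^\lambda_{\mu,\pi}=k^\lambda_{\mu,\pi}$, as required.

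The one point requiring care — and the only place where the binary and unary models genuinely differ — is that the guessed witness $T$ must have size polynomial in the input length, whereas $P$ itself has $n$ points, which is exponential in the binary encoding length. Here simplex-likeness is used a second time: it forces the three Young diagrams to have at most $r+1$ columns, so each of $\lambda,\mu,\pi$ has at least $\lceil n/(r+1)\rceil=\Omega(r^2)$ rows and is therefore written as a list of $\Omega(r^2)$ parts, giving an input of length $\Omega(r^2)$; since $|S|=\binom{r+2}{2}=O(r^2)$, the witness is of size linear in the input length in the binary model, and in the unary model one may alternatively just take $P$ itself as the witness. I expect this witness-size accounting to be the only real obstacle; the rest is a direct application of Lemma~\ref{lsimpyr} and Corollary~\ref{cor:upper and lower match}.
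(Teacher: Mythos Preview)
Your proof is correct and matches the paper's: both obtain the equalities directly from Lemma~\ref{lsimpyr} and Corollary~\ref{cor:upper and lower match}, and both handle the binary case via the observation that simplex-like triples have $\Omega(r^2)$ rows, so the unary encoding has length polynomial in the binary one. The only difference is cosmetic---the paper simply cites that $t^\lambda_{\mu,\pi}$ has a $\#P$-formula and invokes this unary/binary size equivalence, whereas you make the witness explicit as a subset of the shell $P_{r+1}\setminus P_r$ of size $\binom{r+2}{2}=O(r^2)$.
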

\begin{proof} This follows from \ref{lsimpyr}, \ref{cor:upper and lower match},
and the fact that $t^\lambda_{\mu,\pi}$ has a $\#P$-formula.
The last assertion follows because  the bit-length of the unary specification of a simplex-like
partition triple is polynomial in the bit-length of its binary specification.
\end{proof}

One important  class of simplex-like  marginals  is the following.
Let $(\varphi^T,\varphi^T,\varphi^T)$ denote the marginals of the simplex $P_{2r}$, where $r \ge 1$.
Define
\begin{equation}
\label{eq:lattice permutations}
\begin{aligned}
  \lambda^T := \varphi^T + (d_{2r},\dots,d_0), \quad
  \mu^T = \pi^T := \varphi^T + (1^{r+1}, 0^r),
\end{aligned}
\end{equation}
where $d = (d_0,\dots,d_{2r}) \in \mathbb N^{2r+1}$ is such that $\sum_k d_k = r + 1$ and $\sum_k k d_k = r(r+1)$--this   also
implies that $\sum_k k d_{2r-k} = r (r+1)$.
It is not hard to see that these marginals are simplex-like. Indeed,
\begin{align*}
&\quad\sum_{i=0}^{2r} i \lambda^T_i + \sum_{j=0}^{2r} j \mu^T_j + \sum_{k=0}^{2r} k \pi^T_k \\
&= b_{2r} \cdot (1,1,1) +  \sum_{i=0}^{2r} i d_{2r-i} + \sum_{j=0}^r j +
\sum_{k=0}^r k \\
&= b_{2r} \cdot (1,1,1) + 2r (r+1).
\end{align*}
Since $\lvert P_{2r} \rvert \leq n = \lvert P_{2r} \rvert + (r+1) < \lvert P_{2r+1} \rvert$, this is indeed equal to $p(n)$,
where $n$ is the number of boxes of each of $\lambda$, $\mu$, and $\pi$ (compare with \ref{eq:barycenter explicit}).
These marginals arise when embedding permutation matrices  on top of the simplex $P_{2r}$, and in \cite{brunetti} it was shown using this construction that:

\begin{theorem}[\cite{brunetti}]\label{tbrunetti}
The problem of deciding positivity of $t^\lambda_{\mu,\pi}$, given $\lambda,\mu,\pi$ in unary,
 is NP-hard with respect to polynomial-time many-one reductions,  even when $(\lambda^T,\mu^T,\pi^T)$ is restricted to be  of the form
\ref{eq:lattice permutations}.
\end{theorem}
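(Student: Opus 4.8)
The plan is to reduce the positivity question for $t^\lambda_{\mu,\pi}$, under the restriction~(\ref{eq:lattice permutations}), to a purely combinatorial reconstruction problem for permutation matrices, and then to appeal to (equivalently, re-derive) the NP-hardness of the latter from \cite{brunetti}.

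First I would pin down exactly which point sets have these marginals. Since the marginals~(\ref{eq:lattice permutations}) are simplex-like (as verified above), Lemma~\ref{lsimpyr} shows that every point set $P$ with these marginals satisfies $P_s \subseteq P \subsetneq P_{s+1}$ for some side length $s$; and since $|P| = |P_{2r}| + (r+1)$ lies strictly between $|P_{2r}|$ and $|P_{2r+1}|$, necessarily $s = 2r$, so $P = P_{2r} \sqcup Q$ with $Q$ a set of $r+1$ points on the triangle $\{x+y+z = 2r\}$. Subtracting off the marginals of $P_{2r}$, both the $y$- and $z$-marginals of $Q$ equal $(1^{r+1},0^r)$, so the $y$-coordinates of $Q$ run through $\{0,\dots,r\}$ without repetition, and likewise the $z$-coordinates. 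As a point on the triangle is determined by any two of its coordinates, $Q$ is forced to be the graph $\{(2r-y-\tau(y),\,y,\,\tau(y)) : 0 \le y \le r\}$ of a unique permutation $\tau$ of $\{0,\dots,r\}$, and the remaining constraint — that the $x$-marginal of $Q$ be $(d_{2r},\dots,d_0)$ — translates into $\#\{y : y+\tau(y) = k\} = d_k$ for every $k$. Hence $t^\lambda_{\mu,\pi}$ equals the number of $(r+1)\times(r+1)$ permutation matrices whose anti-diagonal line sums are prescribed by $d$, and its positivity is equivalent to the existence of such a permutation matrix.

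Next I would set up the reduction. It suffices to reduce a known NP-complete problem, in polynomial time, to ``does there exist a permutation matrix with the prescribed anti-diagonal line sums $d$?'', in such a way that the output $d$ has the shape required by~(\ref{eq:lattice permutations}); the triple $(\lambda,\mu,\pi)$ is then read off from~(\ref{eq:lattice permutations}), and $t^\lambda_{\mu,\pi} > 0$ iff the original instance was a yes-instance. This is precisely the content of \cite{brunetti}: the ``permutation matrices on top of the simplex $P_{2r}$'' construction is exactly such an encoding of an NP-hard instance into the anti-diagonal sums. Any vector $d$ it produces automatically satisfies $\sum_k d_k = r+1$ and $\sum_k k\,d_k = r(r+1)$ — the latter because $\sum_y\bigl(y+\tau(y)\bigr) = r(r+1)$ for every permutation $\tau$ of $\{0,\dots,r\}$ — so the resulting $(\lambda,\mu,\pi)$ is genuinely of the form~(\ref{eq:lattice permutations}). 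Since $r$ is polynomial in the size of the source instance, $|\lambda|=|\mu|=|\pi|$ is polynomial in that size, so even the unary encoding of $(\lambda,\mu,\pi)$ has polynomial bit-length and the reduction is a bona fide polynomial-time Karp reduction.

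The main obstacle is the combinatorial heart of \cite{brunetti}: showing that a permutation matrix cannot be reconstructed in polynomial time from its line sums in a single (anti-)diagonal direction. This is delicate because the other two line-sum families (the $y$- and $z$-marginals of $Q$) are forced to be all-ones and carry no information at all, so the standard discrete-tomography hardness reductions ``from three X-rays'' do not directly apply; one must encode an entire NP-hard instance into the anti-diagonal sums alone while keeping the underlying row/column incidence structure trivial. Verifying in addition that this encoding can be arranged to output $d$ of the exact shape in~(\ref{eq:lattice permutations}) is then routine.
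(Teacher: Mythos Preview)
Your proposal is correct and matches the paper's treatment. The paper does not give its own proof of Theorem~\ref{tbrunetti}; it is quoted from \cite{brunetti}, with the sentence preceding the theorem (``these marginals arise while embedding permutation matrices on top of the simplex $P_{2r}$'') and step~(VII) in the proof of Theorem~\ref{tkron1crux} later making the bijection with the PERMUTATION problem explicit. Your first paragraph unpacks this bijection in more detail than the paper does, and your second paragraph correctly defers the NP-hardness of PERMUTATION itself to \cite{brunetti}---exactly as the paper does.

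One minor quibble: the justification you give for $\sum_k k\,d_k = r(r+1)$ (``because $\sum_y(y+\tau(y))=r(r+1)$ for every permutation $\tau$'') only shows this identity is a \emph{necessary} condition for a yes-instance; the actual reason the reduction outputs $d$ with this property even on no-instances is that the source problem (RNMTS in the chain spelled out in Section~\ref{sproofkron1}) already carries the constraint $\sum_i y_i = n(n+1)$, which translates directly. This is a point of exposition, not a gap.
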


Thus it follows at once from \ref{the:simplex-like}, in conjunction with this result,
that:

\begin{theorem}\label{tkronnp2}
The problem of deciding positivity of the Kronecker coefficient $k^\lambda_{\mu,\pi}$, given $\lambda,\mu,\pi$ in unary, is NP-hard with respect to polynomial-time many-one reductions, even when $(\lambda^T,\mu^T,\pi^T)$ is restricted to be of the form~\ref{eq:lattice permutations}.
\end{theorem}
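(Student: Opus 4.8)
The plan is to derive Theorem~\ref{tkronnp2} as a direct combination of Theorem~\ref{the:simplex-like} with Theorem~\ref{tbrunetti}. First I would observe that every partition triple $(\lambda,\mu,\pi)$ whose transposes have the shape (\ref{eq:lattice permutations}) is simplex-like: the underlying simplex is $P_{2r}$, the column bound in the definition of simplex-like (``at most $2r+1$ columns'', with side parameter $2r$) holds by inspection of (\ref{eq:lattice permutations}), and the identity
\[ \sum_{i=0}^{2r} i \lambda^T_i + \sum_{j=0}^{2r} j \mu^T_j + \sum_{k=0}^{2r} k \pi^T_k = p(n) \]
is precisely the computation carried out immediately after (\ref{eq:lattice permutations}), with $n = \lvert P_{2r}\rvert + (r+1)$. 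Consequently Theorem~\ref{the:simplex-like} applies to this entire family and yields $k^\lambda_{\mu,\pi} = t^\lambda_{\mu,\pi} = p^\lambda_{\mu,\pi}$ for all such triples. In particular, restricted to inputs of the form (\ref{eq:lattice permutations}), the predicates ``$k^\lambda_{\mu,\pi} > 0$'' and ``$t^\lambda_{\mu,\pi} > 0$'' cut out literally the same set of inputs.

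Next I would invoke Theorem~\ref{tbrunetti}, which supplies a polynomial-time Karp reduction from an NP-hard problem to the problem of deciding $t^\lambda_{\mu,\pi} > 0$, whose output triples are given in unary and have $(\lambda^T,\mu^T,\pi^T)$ of the form (\ref{eq:lattice permutations}). By the previous paragraph, on exactly these outputs one has $t^\lambda_{\mu,\pi} > 0$ if and only if $k^\lambda_{\mu,\pi} > 0$; hence the very same map is a polynomial-time Karp reduction to the problem of deciding positivity of $k^\lambda_{\mu,\pi}$, with inputs restricted to the form (\ref{eq:lattice permutations}). Since the reduction already emits its output in unary and runs in polynomial time, the unary-encoding hypothesis in the statement is met without additional work, and NP-hardness of this restricted version of KRONECKER follows.

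There is no genuine obstacle left at this stage: all of the work has been front-loaded into Lemma~\ref{lsimpyr}, Theorem~\ref{the:simplex-like}, and Theorem~\ref{tbrunetti}. The single point worth stating carefully is that the equality $k^\lambda_{\mu,\pi}=t^\lambda_{\mu,\pi}$ must hold \emph{on the nose} for every triple produced by the reduction (not merely generically), which is exactly what the simplex-like hypothesis guarantees via Lemma~\ref{lsimpyr} and Corollary~\ref{cor:upper and lower match}. It is also worth recording, for the refinements used later, that the reduction of Theorem~\ref{tbrunetti} can be taken to be injective, so that this argument transports any lower bound on the number of NO-instances of the source problem into a lower bound on the number of partition triples of the form (\ref{eq:lattice permutations}) with vanishing Kronecker coefficient.
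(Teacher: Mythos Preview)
Your proposal is correct and follows exactly the paper's own argument: the paper derives Theorem~\ref{tkronnp2} in one line by noting that triples of the form (\ref{eq:lattice permutations}) are simplex-like (the computation you cite), so Theorem~\ref{the:simplex-like} gives $k^\lambda_{\mu,\pi}=t^\lambda_{\mu,\pi}$ on these inputs, and Theorem~\ref{tbrunetti} then transfers NP-hardness from $t$ to $k$. Your additional remarks (the pointwise nature of the equality, and injectivity of the reduction for later use) are accurate and anticipate what the paper does in Section~\ref{srefined} and the proof of Theorem~\ref{tkron1crux}.
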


This proves \ref{tkronnp}. \ref{tkronpos} follows from this result and \ref{the:simplex-like}.

\section{Refined NP-hardness result}\label{srefined}
Let \RestrictedKronecker be the problem of deciding positivity  of $k^\lambda_{\mu,\pi}$, when $\lambda,\mu,\pi$
satisfy the constraints~\ref{tkron1:1}--\short\ref{tkron1:5}.
Specifically:

\begin{definition}\label{dreskron}
Fix a constant $0<\epsilon\leq 1$.
Then \RestrictedKronecker is the problem of deciding, given $m \in \IN$ and partitions $\la,\mu,\pi$ in unary, with $\height(\mu) =
\height(\pi)= m$, whether the Kronecker coefficient $k^\la_{\mu,\pi}$ is positive, assuming that:
\begin{enumerate}[start=2]
\item\label{dreskron:1} $\mu=\pi$,
\item\label{dreskron:2} $\height(\la) \leq m^\epsilon$,
\item\label{dreskron:3} $(\lambda,\mu,\pi)$ is in the Kronecker cone $\kron(m)$,
\item\label{dreskron:4}  $|\lambda|=|\mu|=|\pi| \le  m^3$, and
\item\label{dreskron:5} $\lambda$ is not a hook.
\end{enumerate}
\end{definition}

For the proof of \ref{tkron1}, we need the following
refinement of \ref{tkronnp2}:

\begin{theorem}\label{tnprefined}
\RestrictedKronecker is NP-hard with respect to polynomial-time many-one reductions.
\end{theorem}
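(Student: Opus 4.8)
\textbf{Proof plan for Theorem~\ref{tnprefined}.}
The plan is to reduce from the NP-hardness result of Theorem~\ref{tkronnp2}, which already produces partition triples of the special form (\ref{eq:lattice permutations}) whose positivity of $k^\lambda_{\mu,\pi}$ is NP-hard to decide. By construction the triples in (\ref{eq:lattice permutations}) are simplex-like and satisfy $\mu=\pi$, so constraint (1) comes for free; the work is to massage such a triple into one that additionally meets constraints (2), (3), (4), and (5) of Definition~\ref{dreskron}, while keeping the reduction polynomial-time and (crucially, for the later deduction of Theorem~\ref{tkron1}) injective. First I would analyze the shape parameters of (\ref{eq:lattice permutations}): there $\lambda,\mu,\pi$ have $\le 2r+1$ columns and roughly $n=|P_{2r}|+(r+1)=\Theta(r^3)$ boxes, so the \emph{transposes} $\lambda^T,\mu^T,\pi^T$ have $O(r^2)$ rows; but RESTRICTED KRONECKER is phrased in terms of $\lambda,\mu,\pi$ themselves, and $\mathrm{ht}(\mu)=\mathrm{ht}(\pi)=m$ is required. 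The natural move, given the transpose symmetry $k^\lambda_{\mu,\pi}=k^{\lambda^T}_{\mu^T,\pi^T}$ used already in Lemma~\ref{lem:antisymmetric}, is to pass to the transposed instance: decide positivity of $k^{\lambda^T}_{\mu^T,\pi^T}$ where $(\lambda^T,\mu^T,\pi^T)$ are the partitions of (\ref{eq:lattice permutations}) \emph{read as partitions} (not as column-lengths). Then $\mathrm{ht}(\mu^T)=\mathrm{ht}(\pi^T)=2r+1=:m$, while $\lambda^T$ has $\Theta(r^3)=\Theta(m^3)$ rows, which is far too tall for constraint (2).

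To fix the height of $\lambda$ I would use a padding/embedding trick that adds a controlled number of new columns so that the relevant ``$\lambda$''-partition becomes short while $\mu=\pi$ stays a fixed shape whose height we can dial up to exactly $m$. Concretely: replace the ambient simplex $P_{2r}$ by $P_N$ for a suitably chosen $N=\mathrm{poly}(r)$ (taking $N$ large enough that $(2r+1)\le N^{\epsilon}$, i.e.\ $N\approx (2r)^{1/\epsilon}$, so the constant $\epsilon$ can be accommodated), and re-run the Brunetti-style embedding of permutation matrices inside $P_N$ rather than $P_{2r}$; the permutation-matrix gadget still lives in a bounded band and the simplex-like identity of Lemma~\ref{lsimpyr} still holds by the same barycenter computation, so $k^\lambda_{\mu,\pi}=t^\lambda_{\mu,\pi}=p^\lambda_{\mu,\pi}$ on the nose. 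Now the roles are: the partition that absorbs the permutation data is the ``short'' one (it has $O(r)$ parts, hence $\le N^\epsilon$ after choosing $N$ properly), and the two equal partitions $\mu=\pi$ are essentially the marginals of $P_N$ plus a small bump, giving $\mathrm{ht}(\mu)=\mathrm{ht}(\pi)=N+O(1)=:m$; a cosmetic adjustment (adding a few trailing zero parts, or one extra layer) makes this exactly $m$. The size bound $|\lambda|=|\mu|=|\pi|=|P_N|+O(N)=\Theta(N^3)=\Theta(m^3)\le m^3$ is then immediate (possibly after rescaling constants or using $P_N$ with $N=cm$ for small $c$), giving constraint (4).

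For constraint (3), that $(\lambda,\mu,\pi)\in\kron(m)$: this is where I would invoke the result of \cite{burgnonvanish} already flagged in the introduction — namely that $(\lambda,\delta(\lambda),\delta(\lambda))$ lies in the Kronecker cone whenever $\mathrm{ht}(\lambda)\le r^2$ where $r=\mathrm{ht}(\delta(\lambda))$ — but since here $\mu=\pi$ is not rectangular, I would instead argue membership directly: the construction produces point sets $P$ realizing the marginals, so $t^\lambda_{\mu,\pi}>0$, and a single such $P$ can always be chosen (e.g.\ $P_N$ itself, or the ``identity-permutation'' configuration) whose marginals give a \emph{positive} Kronecker coefficient — indeed the ``YES''-side of the Brunetti reduction corresponds to a genuine point set that is a pyramid, hence $p^\lambda_{\mu,\pi}>0$, hence $(\lambda,\mu,\pi)\in\kron(m)$ for that instance; for the ``NO''-side we need membership in the cone even though the coefficient vanishes, and this is exactly the subtle point. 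The clean way out, as the paper's own outline indicates, is to choose the base partition $\lambda$ to be a dilation/thickening that forces cone membership by continuity: arrange that $(\lambda,\mu,\pi)$ always equals $\ell\cdot(\lambda_0,\mu_0,\pi_0)/\ell$ for a triple $(\lambda_0,\mu_0,\pi_0)$ known to have positive Kronecker coefficient, plus a perturbation lying in the cone's lineality — or simpler, ensure by the padding that $(\lambda,\mu,\pi)$ is always a \emph{scaling} of one fixed triple in $\kron(m)$ independent of the 3-matching instance, with the instance only affecting which stretching is applied. I expect \textbf{this cone-membership step to be the main obstacle}: the whole point of constraint (3), as emphasized in Section~\ref{suncond}, is that it cannot be certified by the known defining inequalities, so the certificate must come from explicitly exhibiting (for every instance, YES or NO) a rational point set or a limit of Kronecker-positive triples with the prescribed marginals. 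Finally, constraint (5), that $\lambda$ is not a hook, is handled by a trivial padding (add one box in a second column and one box in a second row of the short partition, adjusting the others), which changes $k^\lambda_{\mu,\pi}$ in a controlled, computable way and does not affect positivity or the other constraints; I would fold this into the same $\mathrm{poly}$-time map. The reduction is manifestly injective in the instance data because distinct $3$-matching (equivalently, Brunetti permutation) instances yield distinct marginal vectors $d$ in (\ref{eq:lattice permutations}), hence distinct $\lambda$, so the final map on instances is one-to-one — which is what makes the explicit-construction corollary of Theorem~\ref{tkron1} fall out automatically.
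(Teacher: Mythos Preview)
Your high-level plan (reduce from Theorem~\ref{tkronnp2} and pad the construction to meet constraints (1)--(5)) is right, and you correctly identify constraint~(3) as the crux. However, you do not actually resolve it: the suggestions you list for cone membership (scaling of a fixed positive triple, limits of Kronecker-positive triples, ``continuity'') are not viable, because the padded triple must depend on the instance data $d$, and for NO-instances there is no positive Kronecker coefficient to scale from. This is a genuine gap.

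The paper fills it with two ingredients you are missing. First, instead of enlarging the simplex to $P_N$, the paper places the simplex-like configuration on a \emph{rectangular pedestal}: it adjoins a box $\{0,\dots,c-1\}\times\{0,\dots,s-1\}\times\{0,\dots,s-1\}$ with $s=2r+1$ and $c=\lceil s^{2/\epsilon-1}\rceil$ beneath $P$ (Lemma~\ref{lem:pedestalled-simplex-like} and Theorem~\ref{theorem:pedestalled-simplex-like}). A bijection shows this preserves $k^\lambda_{\mu,\pi}$ exactly, and now $\tilde\mu=\tilde\pi$ has \emph{every} column of height at least $cs$, while $\mbox{ht}(\tilde\lambda)=s^2$. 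Your larger-simplex padding would not achieve this: the marginals of $P_N$ always have short columns near the apex, so the key hypothesis below would fail. Second, the cone membership is then handled by Lemma~\ref{lkron}: write $\mu$ as a sum of rectangles $\delta^{(j)}$, each of height $\ge s$, and $\lambda$ correspondingly as a sum of rational $\lambda^{(j)}$ with $\mbox{ht}(\lambda^{(j)})\le s^2$; the result of \cite{burgnonvanish} puts each $(\lambda^{(j)},\delta^{(j)},\delta^{(j)})$ in the cone, and the cone is closed under sums. This argument is uniform in the instance (it uses only the shapes, not whether the answer is YES or NO), which is exactly what your proposal lacked. Constraints (2), (4), (5) then fall out of the pedestal parameters with no further padding needed.
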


We first we need some auxiliary results.

\begin{lemma}\label{lkron}
  Let $\lambda$ and $\mu$ be partitions so that $\height(\lambda) \leq h^2$,
where $h$ is the height of the smallest column of $\mu$.
  Then $(\lambda,\mu,\mu)$ is in the Kronecker cone $\kron(l)$, where $l=\max \{\height(\lambda), \height(\mu)\}$.
\end{lemma}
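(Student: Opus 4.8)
The key fact needed is the result from \cite{burgnonvanish}: the triple $(\lambda, \delta(\lambda), \delta(\lambda))$ lies in the Kronecker cone $\kron(r)$ whenever $|\lambda|$ is divisible by $r$ and $\mbox{ht}(\lambda) \le r^2$. The plan is to reduce the statement of Lemma~\ref{lkron}, where $\mu$ is an arbitrary partition whose smallest column has height $h$, to this rectangular case. The first step is to observe that a triple being in the Kronecker cone is a condition on the \emph{rays} through lattice triples $(\lambda,\mu,\pi)$ with $k^\lambda_{\mu,\pi}>0$, and hence is preserved under natural scaling and "column-stacking" operations. Concretely, I would like to write $\mu$ as a nonnegative rational combination (or, after clearing denominators, an honest sum of scalings) of rectangular partitions of the appropriate heights, and correspondingly decompose $\lambda$, exploiting convexity of $\kron(l)$.

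**Main steps.** First I would reduce to the case where $\mu$ is a single rectangle. Let $h$ be the height of the smallest (shortest) column of $\mu$, i.e. $h = \mbox{ht}(\mu)$ if we read columns, or more precisely the number of rows that are "full" across all columns; since $\mbox{ht}(\lambda) \le h^2$ we are in a regime where we can compare against $\kron(h)$. The idea: $\mu$ can be obtained from the rectangle $\delta$ of height $h$ (with $h$ rows, $|\mu|/h$ columns, after possibly rescaling $\mu$ so $h \mid |\mu|$) by adding extra boxes only in rows $1,\dots,h$ (never creating a new row, since $h$ is already the full height... wait — one must be careful: "smallest column" means the shortest column has $h$ boxes, so \emph{every} column has at least $h$ boxes, so $\mu$ contains the rectangle $h \times (\text{number of columns})$). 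So $\mu = (\text{rectangle } R) + \nu$ where $\nu$ has height $\le h$ as well. Then I would invoke the known "additivity" of the Kronecker cone: if $(\lambda_1, \mu_1, \pi_1)$ and $(\lambda_2,\mu_2,\pi_2)$ are both in $\kron$, so is their sum (this follows from $k^{\lambda_1+\lambda_2}_{\mu_1+\mu_2,\pi_1+\pi_2} \ge$ something positive, e.g. the semigroup property of the Kronecker semigroup, or just convexity of the moment cone). Pairing the rectangle part with a rectangular-Kronecker-cone triple from \cite{burgnonvanish} (valid since $\mbox{ht}(\lambda) \le h^2$) and the leftover part with a trivial triple, and then taking appropriate combinations, should recover $(\lambda,\mu,\mu) \in \kron(l)$.

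**The obstacle.** The delicate point is bookkeeping on the \emph{first partition} $\lambda$: when we decompose $\mu = R + \nu$, we need a matching decomposition of $\lambda$ so that each summand triple genuinely lies in the Kronecker cone, and the height bound $\mbox{ht}(\lambda) \le h^2$ must be what licenses applying the \cite{burgnonvanish} result to the rectangular piece. I would handle this by putting \emph{all} of $\lambda$ with the rectangular piece — i.e. use the triple $(\lambda, R', R')$ where $R'$ is the rectangle of height $h$ and size $|\lambda|$ (legitimate in $\kron(h) \subseteq \kron(l)$ by \cite{burgnonvanish}, since $\mbox{ht}(\lambda)\le h^2$), and pair it with $(\emptyset\text{-like trivial triple}, \nu, \nu)$ for the leftover; the leftover triple $(s\cdot(\text{single row}), \nu, \nu)$ with $s = |\nu|$ is always in the cone since $k^{(s)}_{\nu,\nu} = 1$. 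Summing gives $(\lambda + (|\nu|), \mu, \mu)$, not quite $(\lambda,\mu,\mu)$ — so one needs instead to split $\lambda = \lambda' + (|\nu|)$-type adjustments, or better, rescale: work with $(N\lambda, N\mu, N\mu)$ for suitable $N$ and absorb the discrepancy, using that $\kron$ is a cone (closed under positive scaling) and closed. The cleanest route is probably: scale so that $h \mid N|\lambda|$ and all relevant sizes match, decompose $N\mu$ as a sum of a rectangle of height $h$ and size equal to $N|\lambda| - (\text{adjustment})$ plus corrections, and verify the arithmetic works out; closedness of $\kron(l)$ then lets us pass back to the unscaled triple. The main work is getting these size/height arithmetic constraints to line up simultaneously, and I expect that — rather than any deep representation theory — to be where the proof spends its effort.
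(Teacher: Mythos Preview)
Your high-level plan --- decompose $\mu$ into rectangles, correspondingly decompose $\lambda$, and invoke additivity of the Kronecker cone together with the result of \cite{burgnonvanish} --- is exactly the paper's approach. Where you go astray is in the execution: peeling off a \emph{single} rectangle $R$ from $\mu$ and trying to pair all of $\lambda$ with it forces the size mismatch you flag, and the scaling fix you propose does not actually resolve it. Concretely, if you write $N\mu = R'' + \nu'$ with $R''$ a height-$h$ rectangle and attempt $N\lambda = \lambda' + (s')$ with $s' = |\nu'|$, then for $\lambda'$ to remain a partition you need $N\lambda_1 - s' \ge N\lambda_2$, i.e.\ $\lambda_1 - \lambda_2 \ge |\mu| - h\mu_1$; this constraint is independent of $N$ and need not hold. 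So the ``scale and absorb'' route is a genuine gap, not just bookkeeping.

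The clean fix --- which is precisely what the paper does, and what your own plan already hinted at before you abandoned it --- is to decompose $\mu$ into \emph{many} rectangles and decompose $\lambda$ \emph{proportionally} as a rational partition. Write $\mu = \sum_j \delta^{(j)}$ where each $\delta^{(j)}$ is a rectangle of height at least $h$ (for instance, take the columns of $\mu$ one at a time, so $\delta^{(j)} = (1^{c_j})$ with $c_j = \mu^T_j \ge h$). Then set $\lambda^{(j)} := \tfrac{|\delta^{(j)}|}{|\lambda|}\,\lambda$: this is a rational partition with $|\lambda^{(j)}| = |\delta^{(j)}|$ and $\mbox{ht}(\lambda^{(j)}) = \mbox{ht}(\lambda) \le h^2$, and $\sum_j \lambda^{(j)} = \lambda$ on the nose. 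By \cite{burgnonvanish} together with the cone property, each $(\lambda^{(j)}, \delta^{(j)}, \delta^{(j)}) \in \kron(l)$, and summing gives $(\lambda,\mu,\mu) \in \kron(l)$ directly --- no residual adjustment, no limiting argument.
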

\begin{proof}
 It is shown in~\cite{burgnonvanish} that $(\lambda,\delta,\delta)$ is in the Kronecker cone whenever $\delta$ is a rectangle of height at least $h$,
and $\height(\lambda) \le h^2$.
  As the Kronecker cone is a cone, this is also true if we rescale each of $\lambda$ and $\delta$ by an arbitrary positive number.

  Let us write $\mu$ as a sum of rectangles $\delta^{(1)} + \dots + \delta^{(k)}$, where our assumption implies that each $\delta^{(j)}$ has height at least $h$.
It is easy to see that  $\lambda$ can be written as a sum $\lambda^{(1)} + \dots + \lambda^{(k)}$, where each $\lambda^{(j)}$ is a rational partition with the same  size  as $\delta^{(j)}$ (i.e., $|\lambda^{(j)}|=|\delta^{(j)}|$),
and with  no more than $h^2$ rows.
  By the preceding argument, each $(\lambda^{(j)},\delta^{(j)},\delta^{(j)})$ is in the Kronecker cone $\kron(l)$.
As cones are closed under addition,  $(\lambda,\mu,\mu)$ is likewise in the Kronecker cone.
\end{proof}

Next, we generalize \ref{the:simplex-like}  to a larger class of marginals.
Let $(\lambda,\mu,\pi)$ be simplex-like, and let $P$ be a corresponding point set with marginals $(\lambda^T,\mu^T,\pi^T)$, so that $P_r \subseteq P \subsetneq P_{r+1}$ for some $r$ (\ref{lsimpyr}).
Let $Q$ denote the following point set obtained by adjoining  $P$ to  a rectangular box of size  $a \times b \times c$, where $b,c \geq r+1$:
\begin{equation}
\label{eq:simplex-like point set on a pedestal}
\begin{aligned}
  Q &= \{0,\dots,a-1\} \times \{0,\dots,b-1\} \times \{0,\dots,c-1\} \\
  &\cup \{ (a+x,y,z) : (x,y,z) \in P \}.
\end{aligned}
\end{equation}
Then the marginals of $Q$ are given by $(\tilde\lambda^T,\tilde\mu^T,\tilde\pi^T)$, where
\[ \tilde \lambda^T  =((bc)^a,\lambda^T), \quad \tilde \mu^T = \mu^T + ((ac)^b), \quad
\tilde \pi^T = \pi^T + ((ab)^c), \]
so that
\begin{equation}
\label{eq:simplex-like on a pedestal}
  \tilde\lambda = (a^{bc}) + \lambda, \quad
  \tilde\mu = (b^{ac}, \mu), \quad
  \tilde\pi = (c^{ab}, \pi).
\end{equation}

\begin{definition}
We call $(\tilde\lambda,\tilde\mu,\tilde\pi)$ \emph{pedestalled-simplex-like} if it is of the form \ref{eq:simplex-like on a pedestal} for some simplex-like $(\lambda,\mu,\pi)$ with at most $r+1$ columns each, $a\geq 0$, and $b,c \geq r+1$.
\end{definition}

Then we have the following generalization  of \ref{lsimpyr}.

\begin{lemma}
\label{lem:pedestalled-simplex-like}
  Let $(\tilde\lambda,\tilde\mu,\tilde\pi)$ be pedestalled-simplex-like, i.e., of the form~\ref{eq:simplex-like on a pedestal}
 for some simplex-like $(\lambda,\mu,\pi)$.
  Then~\ref{eq:simplex-like point set on a pedestal} defines a bijection between point sets $P$ with marginals $(\lambda^T,\mu^T,\pi^T)$ and point sets $Q$ with marginals $(\tilde\lambda^T,\tilde\mu^T,\tilde\pi^T)$.
  In particular, any such point set $Q$ is a pyramid.
\end{lemma}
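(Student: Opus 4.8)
The plan is to verify that the map $P \mapsto Q$ defined by \eqref{eq:simplex-like point set on a pedestal} is well-defined, injective, and surjective onto the set of point sets with the prescribed marginals, and then to deduce the pyramid property from Lemma~\ref{lsimpyr}. First I would check that $Q$ as defined really does have marginals $(\tilde\lambda^T,\tilde\mu^T,\tilde\pi^T)$: the full box $\{0,\dots,a-1\}\times\{0,\dots,b-1\}\times\{0,\dots,c-1\}$ contributes $bc$ points to each of the first $a$ $x$-slices, $ac$ points to each of the first $b$ $y$-slices, and $ab$ points to each of the first $c$ $z$-slices; the translated copy $\{(a+x,y,z):(x,y,z)\in P\}$ contributes $\lambda^T_i$ to the $x$-slice $a+i$, and $\mu^T_j$ (resp.\ $\pi^T_k$) to the $y$-slice $j$ (resp.\ $z$-slice $k$). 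Since $P$ has at most $r+1\le b,c$ columns in the $y$- and $z$-directions, the box slices and the $P$-slices in the $y$- and $z$-directions genuinely overlap and add, giving $\tilde\mu^T=\mu^T+((ac)^b)$ and $\tilde\pi^T=\pi^T+((ab)^c)$, while in the $x$-direction the two pieces occupy disjoint ranges of slices, giving $\tilde\lambda^T=((bc)^a,\lambda^T)$. This matches \eqref{eq:simplex-like on a pedestal}, and injectivity of $P\mapsto Q$ is clear since $P$ is recovered as $\{(x-a,y,z):(x,y,z)\in Q,\ x\ge a\}$.

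The crux is surjectivity: given any point set $Q$ with marginals $(\tilde\lambda^T,\tilde\mu^T,\tilde\pi^T)$, I must show it has the form \eqref{eq:simplex-like point set on a pedestal}. The key observation is a counting/extremality argument on the $x$-slices $0,\dots,a-1$. Each such slice has exactly $\tilde\lambda^T_i=bc$ points, but a single $x$-slice of $Q$ lives in $\{0,\dots,b-1\}\times\{0,\dots,c-1\}$ (since $\tilde\mu,\tilde\pi$ have the indicated heights) which has only $bc$ lattice points, so each of these slices must be the \emph{full} $b\times c$ rectangle. This pins down the box part of $Q$ exactly, and forces the remaining points of $Q$ to lie in $x$-slices $\ge a$; subtracting off the (now determined) $x$-, $y$-, and $z$-contributions of the full box leaves a point set $P':=\{(x-a,y,z):(x,y,z)\in Q,\ x\ge a\}$ with marginals $(\lambda^T,\mu^T,\pi^T)$, which is exactly a point set of the kind handled by Lemma~\ref{lsimpyr}. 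Hence $Q$ is the image of $P'$, proving surjectivity. (One should also note that $P'$ lives in $\{0,\dots,r\}^3$ because $\lambda^T,\mu^T,\pi^T$ have at most $r+1$ columns, so the translation is consistent with the structure in \eqref{eq:simplex-like point set on a pedestal}.)

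Finally, for the pyramid conclusion: by Lemma~\ref{lsimpyr} the preimage $P'$ satisfies $P_r\subseteq P'\subsetneq P_{r+1}$, so in particular $P'$ is a pyramid; and adjoining the full box $\{0,\dots,a-1\}\times\{0,\dots,b-1\}\times\{0,\dots,c-1\}$ (with $b,c\ge r+1$, which guarantees the box dominates the $y$- and $z$-extent of the translated simplex) keeps the set downward-closed, so $Q$ is again a pyramid. The one place requiring care is the boundary/overlap bookkeeping in the $y$- and $z$-directions — making sure the inequality $b,c\ge r+1$ is exactly what is needed for the box slices and the simplex slices to line up so that the marginals add rather than having gaps, and that the downward-closure of the box together with that of $P'$ indeed yields downward-closure of the union. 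I expect this overlap argument, rather than the counting step, to be the main obstacle to write cleanly.
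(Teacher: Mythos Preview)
Your proposal is correct and follows essentially the same approach as the paper: the paper's proof also reduces to surjectivity, uses the identical counting argument that each of the first $a$ $x$-slices must be the full $b\times c$ rectangle (since $\tilde\mu,\tilde\pi$ force $Q\subseteq\mathbb N\times\{0,\dots,b-1\}\times\{0,\dots,c-1\}$), then invokes Lemma~\ref{lsimpyr} on the residual $P$ and declares $Q$ ``clearly a pyramid.'' Your write-up is more explicit about well-definedness, injectivity, and the downward-closure check for the union (which the paper leaves to the reader), but the argument is the same; the overlap bookkeeping you flag as the main obstacle is in fact routine once $b,c\ge r+1$ is noted.
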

\begin{proof}
  It suffices to show that any point set $Q$ with marginals $(\tilde\lambda,\tilde\mu,\tilde\pi)$ is of the form~\ref{eq:simplex-like point set on a pedestal}.
  For this, observe that according to the definition of $\tilde\lambda$, the first $a$ x-slices of $Q$ contain exactly $bc$ points each.
  On the other hand, the definition of $\tilde\mu$ and $\tilde\pi$ implies that there are at most $b$ non-zero y-slices and at most $c$ non-zero z-slices, so that $Q$ is a point set in $\mathbb N \times \{0,\dots,b-1\} \times \{0,\dots,c-1\}$.
  It follows that the first $a$ x-slices must each be filled completely without holes by rectangles of size $b \times c$.
  Therefore we may write $Q$ in the form~\ref{eq:simplex-like point set on a pedestal},
and it is clear that $P$ has the correct marginals $(\lambda^T,\mu^T,\pi^T)$.
Since $(\lambda,\mu,\pi)$ is simplex-like, $P_r \subseteq P \subseteq P_{r+1}$ (\ref{lsimpyr}).
Finally, observe that any $Q$ of the form~\ref{eq:simplex-like point set on a pedestal} is clearly a pyramid.
\end{proof}

The following result generalizes \ref{the:simplex-like}.

\begin{theorem}
\label{theorem:pedestalled-simplex-like}
  Let $(\tilde\lambda,\tilde\mu,\tilde\pi)$ be pedestalled-simplex-like, i.e., of the form~\ref{eq:simplex-like on a pedestal}
for some simplex-like $(\lambda,\mu,\pi)$.
  Then:
  \[ k^{\tilde\lambda}_{\tilde\mu,\tilde\pi} = t^{\tilde\lambda}_{\tilde\mu,\tilde\pi} = t^{\lambda}_{\mu,\pi} = k^{\lambda}_{\mu,\pi} \]
\end{theorem}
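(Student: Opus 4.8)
The plan is to chain together the three ingredients already in place, reading the claimed string of equalities from right to left. The rightmost equality $t^{\lambda}_{\mu,\pi} = k^{\lambda}_{\mu,\pi}$ is immediate: since $(\lambda,\mu,\pi)$ is simplex-like by hypothesis, Theorem~\ref{the:simplex-like} gives $k^{\lambda}_{\mu,\pi} = t^{\lambda}_{\mu,\pi} = p^{\lambda}_{\mu,\pi}$ outright.

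For the middle equality $t^{\tilde\lambda}_{\tilde\mu,\tilde\pi} = t^{\lambda}_{\mu,\pi}$, I would simply unwind the definitions and invoke the bijection of Lemma~\ref{lem:pedestalled-simplex-like}. By definition, $t^{\tilde\lambda}_{\tilde\mu,\tilde\pi}$ is the number of point sets $Q \subseteq \N^3$ with marginals $(\tilde\lambda^T,\tilde\mu^T,\tilde\pi^T)$, while $t^{\lambda}_{\mu,\pi}$ is the number of point sets $P$ with marginals $(\lambda^T,\mu^T,\pi^T)$; the map (\ref{eq:simplex-like point set on a pedestal}) is a bijection between precisely these two collections by Lemma~\ref{lem:pedestalled-simplex-like}, so the two counts agree. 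One should note in passing that $|\tilde\lambda| = |\tilde\mu| = |\tilde\pi| = abc + n$ (with $n = |\lambda| = |\mu| = |\pi|$), so that the Kronecker coefficient $k^{\tilde\lambda}_{\tilde\mu,\tilde\pi}$ is indeed well-defined.

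The remaining equality $k^{\tilde\lambda}_{\tilde\mu,\tilde\pi} = t^{\tilde\lambda}_{\tilde\mu,\tilde\pi}$ I would obtain by applying Corollary~\ref{cor:upper and lower match} to the triple $(\tilde\lambda,\tilde\mu,\tilde\pi)$: its hypothesis---that every point set with marginals $(\tilde\lambda^T,\tilde\mu^T,\tilde\pi^T)$ is necessarily a pyramid---is exactly the final assertion of Lemma~\ref{lem:pedestalled-simplex-like}. This yields $k^{\tilde\lambda}_{\tilde\mu,\tilde\pi} = t^{\tilde\lambda}_{\tilde\mu,\tilde\pi} = p^{\tilde\lambda}_{\tilde\mu,\tilde\pi}$, and splicing the three identities together completes the proof.

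There is essentially no hard step here: all the substantive work has already been absorbed into Lemma~\ref{lem:pedestalled-simplex-like} (the structural analysis showing that point sets compatible with pedestalled marginals are forced into the form (\ref{eq:simplex-like point set on a pedestal})) and into Theorem~\ref{the:simplex-like} for the base simplex-like case. The only points that warrant a sentence of care are the size bookkeeping just mentioned, and verifying that the bijection of Lemma~\ref{lem:pedestalled-simplex-like} is phrased for exactly the marginals $(\tilde\lambda^T,\tilde\mu^T,\tilde\pi^T)$ and $(\lambda^T,\mu^T,\pi^T)$ that enter the definitions of $t^{\tilde\lambda}_{\tilde\mu,\tilde\pi}$ and $t^{\lambda}_{\mu,\pi}$ --- which it is, via (\ref{eq:simplex-like on a pedestal}).
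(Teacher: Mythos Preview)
Your proposal is correct and follows essentially the same approach as the paper: the paper also invokes Corollary~\ref{cor:upper and lower match} (via the pyramid conclusion of Lemma~\ref{lem:pedestalled-simplex-like}) for the first equality, the bijection of Lemma~\ref{lem:pedestalled-simplex-like} for the middle equality, and Theorem~\ref{the:simplex-like} for the last. The only difference is that you read the chain right-to-left and add a couple of harmless bookkeeping remarks.
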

\begin{proof}
  The first equality follows from \ref{cor:upper and lower match}, as according to \ref{lem:pedestalled-simplex-like} any point set with marginals $(\tilde\lambda^T,\tilde\mu^T,\tilde\pi^T)$ is necessarily a pyramid.
  The middle equality follows  from \ref{lem:pedestalled-simplex-like}.
  The last equality is  \ref{the:simplex-like}.
\end{proof}


We can now give the proof of \ref{tnprefined}.

\begin{proof}[\ref{tnprefined}]
We  apply the pedestal construction to marginals of the form~\ref{eq:lattice permutations}.
Let us choose a rectangular box  of size $c \times s \times s$, where $s = 2r+1$.
That is, we set
\begin{equation}
\label{eq:lattice permutations on a pedestal}
\begin{aligned}
  \lambda^T &= ((s^2)^c, (\varphi^T + (d_{2r},\dots,d_0))), \\
  \mu^T = \pi^T &= ((cs)^s) + \varphi^T + (1^{r+1}, 0^r),
\end{aligned}
\end{equation}
for some $d = (d_0,\dots,d_{2r}) \in \mathbb N^{2r+1}$ such that $\sum_k d_k = r + 1$ and $\sum_k k d_k = r(r+1)$,
where $\varphi^T$ denotes the marginal of the simplex $P_{2r} = P_{s-1}$.
Furthermore, we set
 $c := \lceil s^{2/\epsilon - 1} \rceil$.

The problem of deciding  positivity of $k^\lambda_{\mu,\pi}$, with $(\lambda,\mu,\pi)$ restricted as above,
is NP-hard, since by  \ref{theorem:pedestalled-simplex-like}, these  Kronecker coefficients agree with the  ones
in \ref{tkronnp2}, and  we can  transform instances of the latter to instances of the former
in polynomial time  (as $\epsilon$ is fixed).

We now verify that the five constraints in \ref{dreskron}
are all satisfied. The first is clearly satisfied. For the second,
  \[ \height(\lambda) = s^2 \leq (cs)^\epsilon \leq \height(\mu)^\epsilon = m^\epsilon, \]
  by our choice of $c = c(s)$.
The third follows from \ref{lkron}, as $\height(\lambda) = s^2$, while every column in $\mu$ is of height at least $cs \geq s$.
The fourth follows,   since
\begin{align*}
&|\lambda|= c s^2 + |\varphi^T| + \sum_k d_k =
c s^2 + \frac{(s-1)s(s+1)}6 + r+1 \\
&\le ( c s)^3 \le   \height(\mu)^3 =   m^3,
\end{align*}
assuming that $r$ is large enough.
Finally, it is clear that  $\lambda$ is not a hook.
\end{proof}

\section{Construction of  vanishing Kronecker coefficients with  partition triples in the Kronecker cone}\label{scone}
In this section, we prove \ref{tkron1} and \ref{tkron11}.

\subsection{Proof of \ref{tkron1}}\label{sproofkron1}
By \cite{fortune}, if there exists a co-sparse NP-complete language under polynomial-time many-one-reductions, then P=NP.
\ref{tnprefined}, in conjunction with this result, implies that, assuming P$\neq$NP, the set of partitions triples satisfying the constraints \ref{tkron1:0}--\short\ref{tkron1:5} is non-sparse, i.e, its cardinality is superpolynomial in $m$.
(The result in~\cite{fortune} applies to NP-complete sets, rather than NP-hard sets. But we can still apply this result
to the NP-complete set of simplex-like partition triples (cf.~\ref{tkronnp2}) with positive Kronecker coefficients to get the desired conclusion.)
To prove \ref{tkron1}, we have to get rid of the assumption that P$\neq$NP and replace the superpolynomial bound by $\Omega(2^{m^a})$  bound, for some positive constant $a$.
This will be achieved by \ref{lnonsparse} and \ref{tkron1crux} below.

Recall from \cite{johnson} that the \ThreeDMatching problem is to decide, given a set $M \subseteq W \times X \times Y$, where
$W$, $X$, $Y$ are disjoint sets of size $q$, whether $M$ contains a \emph{(perfect) matching}, i.e., a subset $M' \subseteq M$ of size $q$ such that no two elements of $M'$ agree in any coordinate.
Without loss of generality, we assume henceforth that each element in $W\cup X \cup Y$ appears in some triple of $M$.
We denote instances of \ThreeDMatching by tuples $(M, W, X, Y, q)$.
It is known that the \ThreeDMatching problem is NP-complete \citep{johnson}.

\begin{lemma}\label{lnonsparse}
The number of instances $(M,W,X,Y,q)$ of the \ThreeDMatching problem with total bit-length $\le N$ such that
$M$ does not have a matching is  $\Omega(2^{N^b})$ for some positive
constant $b < 1$.

Furthermore, such instances can be constructed explicitly.
That is, for some positive constant $b<1$,
there is a polynomial-time-computable one-to-one function that maps any
pair of the form $(N,\sigma)$, where $N$ is a positive integer and
$\sigma$ is a binary string of length $\le N^b$, to an
instance of \ThreeDMatching problem without matching of bitlength $\le N$.
\end{lemma}
\begin{proof}
Consider any fixed instance $(M_0,W_0,X_0,Y_0,q_0)$ of \ThreeDMatching, such that $M_0$ does not have a matching.
Its bitlength is thus  a constant.
Given any instance $(M,W,X,Y,q)$ of \ThreeDMatching, with
$W,X$ and $Y$  disjoint from $W_0,X_0$ and $Y_0$, consider the  padded
instance $(M\cup M_0, W \cup W_0, X \cup X_0, Y \cup Y_0, q+q_0)$.
Clearly, $M\cup M_0$ also does not have a matching. The number of instances of
the form $(M\cup M_0, W \cup W_0, X \cup X_0, Y \cup Y_0, q+q_0)$ with bitlength $\le N$ is clearly $\Omega(2^{N^b})$ for some
positive constant $b<1$. This is because, for a given $q$,  the total number of
instances of the form $(M,W,X,Y,q)$ is $2^{q^3}$, and the bit-length
of the specification of any instance of this form is $O(q^3)$. (We assume that
$M$ is specified by its $q \times q \times q$ adjacency matrix.)
Furthermore, it is easy to show that  the  padded instances of
the form $(M\cup M_0, W \cup W_0, X \cup X_0, Y \cup Y_0, q+q_0)$
can be constructed explicitly.
\end{proof}

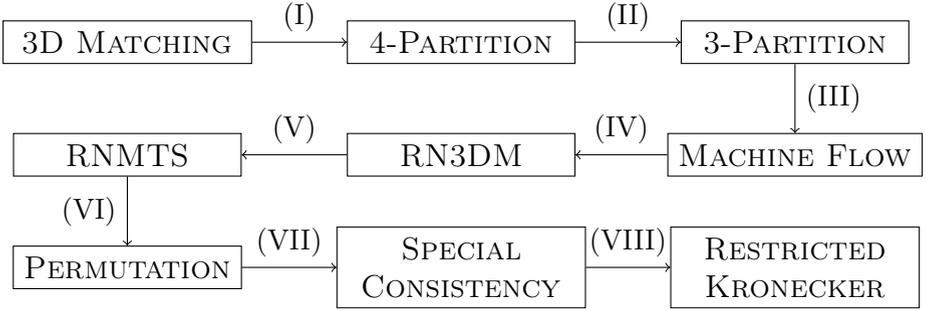
\begin{figure}
\centering
\begin{tikzpicture}[scale=0.5,auto,block/.style={rectangle, draw=black, align=center, minimum width=3cm}]
\draw (0,0) node[block] (3DM) {\parbox{3cm}{\centering\ThreeDMatching}};
\draw (8.8,0) node[block] (4P) {\FourPartition};
\draw (17.6,0) node[block] (3P) {\ThreePartition};
\draw (0,-3) node[block] (RNMTS) {\RNMTS};
\draw (8.8,-3) node[block] (RN3DM) {\RNThreeDM};
\draw (17.6,-3) node[block] (MF) {\MachineFlow};
\draw (0,-6) node[block] (Permutation) {\Permutation};
\draw (8.8,-6) node[block] (SPEC) {\parbox{3cm}{\centering\SpecialConsistency}};
\draw (17.6,-6) node[block] (RK) {\parbox{3cm}{\centering\RestrictedKronecker}};
\draw[->] (3DM.east) --node[above,midway] {\small\ref{tkron1crux:i}} (4P.west);
\draw[->] (4P.east) --node[above,midway] {\small\ref{tkron1crux:ii}} (3P.west);
\draw[->] (3P.south) --node[right,midway] {\small\ref{tkron1crux:iii}} (MF.north);
\draw[->] (MF.west) --node[above,midway] {\small\ref{tkron1crux:iv}} (RN3DM.east);
\draw[->] (RN3DM.west) --node[above,midway] {\small\ref{tkron1crux:v}} (RNMTS.east);
\draw[->] (RNMTS.south) --node[left,midway] {\small\ref{tkron1crux:vi}} (Permutation.north);
\draw[->] (Permutation.east) --node[above,midway] {\small\ref{tkron1crux:vii}} (SPEC.west);
\draw[->] (SPEC.east) --node[above,midway] {\small\ref{tkron1crux:viii}} (RK.west);
\end{tikzpicture}
\caption{Sequence of reductions used in the proof of \ref{tkron1crux}.}\label{fig:reductions}
\end{figure}

Recall (cf.~\ref{sprooftech}) that a polynomial-time one-one reduction means an injective polynomial-time many-one reduction.

\begin{theorem}\label{tkron1crux}
There exists a polynomial-time one-one reduction $\phi$ from the set of instances $(M,W,X,Y,q)$ of \ThreeDMatching  of total bit-length $n$ to the set of partition triples  $(\lambda,\mu,\pi)$ satisfying the conditions \ref{tkron1:1}--\short\ref{tkron1:5} (i.e., instances of \RestrictedKronecker), with $m=\poly(n)$, such that $M$ contains a matching iff the Kronecker coefficient associated with the partition triple $\phi(E)$ is positive.
\end{theorem}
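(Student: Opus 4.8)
The plan is to realize $\phi$ as the composition of the reduction underlying Theorem~\ref{tbrunetti} with the pedestal construction of Section~\ref{srefinedproof} — exactly the chain used to prove Theorem~\ref{tnprefined} — and then to upgrade it to an injective reduction. Since $3$-DIMENSIONAL MATCHING lies in $\NP$ and the problem in Theorem~\ref{tbrunetti} is $\NP$-hard, there is a polynomial-time Karp reduction $\rho$ sending an instance $E=(M,W,X,Y,q)$ of bit-length $n$ to a marginal triple $(\lambda^T,\mu^T,\pi^T)$ of the form (\ref{eq:lattice permutations}), with simplex side $s=2r+1$ and data $d$ polynomially bounded in $n$, such that $t^\lambda_{\mu,\pi}>0$ iff $M$ has a matching. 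I would then feed $(\lambda,\mu,\pi)$ into the pedestal map (\ref{eq:lattice permutations on a pedestal}) with box $c\times s\times s$ and $c:=\lceil s^{2/\epsilon-1}\rceil$, obtaining $(\tilde\lambda,\tilde\mu,\tilde\pi)$; by Theorem~\ref{theorem:pedestalled-simplex-like}, $k^{\tilde\lambda}_{\tilde\mu,\tilde\pi}=t^{\tilde\lambda}_{\tilde\mu,\tilde\pi}=t^\lambda_{\mu,\pi}$, so positivity of the Kronecker coefficient of $\phi(E):=(\tilde\lambda,\tilde\mu,\tilde\pi)$ is equivalent to the existence of a matching in $M$. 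That $\phi(E)$ satisfies conditions (1)--(5) of Theorem~\ref{tkron1} with $m:=\mbox{ht}(\tilde\mu)=\poly(n)$ is verbatim the computation in Section~\ref{srefinedproof}: (1) is immediate from $\mu^T=\pi^T$; (2) from $\mbox{ht}(\tilde\lambda)=s^2\le(cs)^\epsilon\le m^\epsilon$; (3) from Lemma~\ref{lkron}, since every column of $\tilde\mu$ has height $\ge cs\ge s=\sqrt{\mbox{ht}(\tilde\lambda)}$; (4) from $|\tilde\lambda|=cs^2+|\lambda|\le(cs)^3\le m^3$ for $r$ large; and (5) because $\tilde\lambda^T$ begins with two equal parts $s^2>1$, so $\tilde\lambda$ is not a hook.

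The one point not already contained in the proof of Theorem~\ref{tnprefined} is injectivity of $\phi$, and this is the heart of the matter. The pedestal step is itself injective on triples of the form (\ref{eq:lattice permutations}): from $(\tilde\lambda,\tilde\mu,\tilde\pi)$ one reads off $s$ and $c$ from the leading block $((s^2)^c)$ of $\tilde\lambda^T$ and then recovers $\lambda^T$, hence $r$ and $d$ — this is precisely the reconstruction performed in Lemma~\ref{lem:pedestalled-simplex-like}. So it suffices that $\rho\colon E\mapsto(r,d)$ be injective. I would arrange this directly: the problem in Theorem~\ref{tbrunetti} is one of realizability of point sets with prescribed marginals, and the reduction can be designed so that a fixed canonical description of $M$ is encoded in, and recoverable from, the data $d$ — there is ample room, since $d$ (subject only to $\sum_k d_k=r+1$ and $\sum_k kd_k=r(r+1)$) takes exponentially many values in $r$, whereas there are only $O(2^n)$ instances of bit-length at most $n$. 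If it is cleaner, one first composes $\rho$ with a routine injective, matching-preserving, polynomial-time preprocessing of $E$ (padding by a recoverable, matching-neutral gadget that records $E$) and then applies the reduction of Theorem~\ref{tbrunetti}. With $\rho$ injective, $\phi=(\text{pedestal})\circ\rho$ is the desired injective polynomial-time Karp reduction, and $m=\poly(n)$.

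The main obstacle, then, is exactly this last step — securing injectivity while remaining inside the rigid family (\ref{eq:lattice permutations on a pedestal}) and keeping $m$ polynomial in $n$; note that bounded numerical parameters of the construction ($c$, or the box dimensions) cannot carry this information, so the $n$-bit instance has to be threaded through the combinatorial part $d$ of the construction. All the remaining ingredients are already in place, from Theorems~\ref{tbrunetti} and \ref{theorem:pedestalled-simplex-like} and the computation in the proof of Theorem~\ref{tnprefined}. I also note that this injectivity is precisely what will render Theorem~\ref{tkron1} \emph{explicit}: composing $\phi$ with the explicitly constructed family of ``NO'' instances of $3$-DIMENSIONAL MATCHING from Lemma~\ref{lnonsparse} then yields an explicit, polynomial-time-computable family of partition triples of the required kind.
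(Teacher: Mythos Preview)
Your outline is exactly the paper's: compose the reduction underlying Theorem~\ref{tbrunetti} with the pedestal map of Section~\ref{srefinedproof}, and observe that the pedestal step is injective (one reads off $s,c$ from the leading block of $\tilde\lambda^T$ and recovers $(r,d)$). You also correctly isolate the only nontrivial point: making the first stage $\rho\colon E\mapsto(r,d)$ injective.

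The gap is that neither of your two suggestions for securing injectivity is a proof. The ``ample room'' counting argument is irrelevant: the map $E\mapsto d$ is \emph{forced} by the requirement that positivity of $t^\lambda_{\mu,\pi}$ be equivalent to the existence of a matching in $M$, so you cannot freely encode a description of $E$ into $d$ without destroying the reduction. And pre-padding $E$ injectively before applying a possibly non-injective $\rho$ does not help: two distinct padded instances $E'_1\neq E'_2$ may still collide under~$\rho$. What is actually needed is to open the black box of Theorem~\ref{tbrunetti}. The paper does exactly this: the reduction from 3-DIMENSIONAL MATCHING to the problem of Theorem~\ref{tbrunetti} is the concrete chain
\[
\text{3DM}\;\to\;\text{4-PARTITION}\;\to\;\text{3-PARTITION}\;\to\;\text{MACHINE FLOW}\;\to\;\text{RN3DM}\;\to\;\text{RNMTS}\;\to\;\text{PERMUTATION}\;\to\;\text{SPECIAL-CONSISTENCY},
\]
taken from \cite{johnson,yu,brunettibinary,brunetti}, and one checks, step by step, that each of these seven explicit maps is injective (as is the final pedestal step). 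This is routine once the reductions are written down, but it cannot be replaced by the abstract existence of \emph{some} Karp reduction guaranteed by NP-hardness; a generic reduction obtained via Cook--Levin need not be injective, and there is no general theorem producing an injective reduction from a non-injective one without extra structure (paddability of the target) that you have not established.
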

\begin{proof}
Since \ThreeDMatching is in NP, it follows from \ref{tnprefined} that there exists a polynomial-time many-one reduction $\phi$ from \ThreeDMatching to the \RestrictedKronecker  problem of
deciding positivity of the Kronecker coefficient $k^\lambda_{\mu,\pi}$, with $(\lambda,\mu,\pi)$ satisfying the constraints \ref{tkron1:1}--\short\ref{tkron1:5}.
We have to show that this reduction $\phi$ can be chosen to be injective.
We can obtain such an injective reduction $\phi$ by composing the following sequence of polynomial-time computable one-one-reductions (\ref{fig:reductions}):
\begin{enumerate}[label=(\Roman*)]
\item\label{tkron1crux:i} From \ThreeDMatching to \FourPartition (cf.~Theorem 4.3 in \citealp{johnson}):

The \FourPartition problem is to decide, given a set $A$ of size $4 m$, a positive integer bound $B$, a positive integer size $s(a)$ for each $a \in A$ such that $B/5 < s(a) < B/3$ and $\sum_{a \in A} s(a)= m B$, whether $A$ can be partitioned into $m$ disjoint subsets $A_1,\ldots,A_m$, each of size four,  such that, for each $1 \le i \le m$, $\sum_{a \in A_i} s(a)=B$.
We denote such an instance of \FourPartition  by the tuple $(A,m,B,s)$.

The reduction in \cite{johnson}  maps a given instance $(M,W,X,Y,q)$ of
\ThreeDMatching to an instance $(A,m,B,s)$ of 4-partition, where:
\begin{itemize}
\item The set $A$ has $4 |M| = O(q^3)$ elements, one for each occurrence of a member of $W \cup X \cup Y$ in a triple in $M$ and one for each triple in $M$.
\item Let $W=\{w_1,\ldots,w_q\}$, $X=\{x_1,\ldots,x_q\}$, and $Y=\{y_1,\ldots,y_q\}$. Given any $z \in W \cup X \cup Y$,
let $N(z)$ denote the number of triples in $M$ that contain $z$, and  let $z[1], z[2],  \ldots, z[N(z)]$ denote the elements
in $A$ corresponding to $z$. Let $r=32 q$, and define
\begin{align*}
s(w_i[1]) &= 10 r^4 + i r +1, && 1 \le i \le q, \\
s(w_i[l]) &= 11 r^4 + i r + 1, && 1 \le i \le q, 2 \le l \le N(w_i), \\
s(x_j[1]) &= 10 r^4 + j r^2 + 2, && 1 \le j \le q, \\
s(x_j[l]) &= 11 r^4 + j r^2 + 2, && 1 \le j \le q, 2 \le l \le N(x_j), \\
s(y_k[1]) &= 10 r^4 + k r^3 + 4, && 1 \le k \le q, \\
s(y_k[l]) &= 8 r^4 + k r^3 + 4, && 1 \le  k \le q, 2 \le l \le N(y_k).
\end{align*}
\item Let $u_l$ denote the single element corresponding to a particular triple $m_l=(w_i,x_j,y_k)\in M$.
For any such $u_l$, let $s(u_l)= 10 r^4 - k r^3 - j r^2 - i r + 8$.
\item $B = 40 r^4 + 15$.
\end{itemize}
Note that $\max\{s(a)| a \in A\} \le 2^{16} |A|^4$.
This means \FourPartition is NP-complete in the strong sense~\cite{johnson}.
It can be checked that this reduction is injective.

\item\label{tkron1crux:ii} From \FourPartition to  \ThreePartition (cf.~Theorem 4.4  in \cite{johnson}):

The \ThreePartition problem is to decide, given a set $A$ of size $3 m$, a positive integer bound $B$, a positive integer size $s(a)$ for each $a \in A$ such that $B/4 < s(a) < B/2$ and $\sum_{a \in A} s(a)= m B$, whether $A$ can be partitioned into $m$ disjoint subsets $A_1,\ldots,A_m$, each of size three, such that, for each $1 \le i \le m$, $\sum_{a \in A_i} s(a)=B$.
We denote an instance of \ThreePartition by the tuple $(A, m, B, s)$.

The reduction in \cite{johnson} maps an instance $(A,m,B,s)$ of \FourPartition, with $|A|=4 m$ and $\max\{s(a)| a \in A\} \le 2^{16} |A|^4$, to the instance $(A',m',B',s')$ of \ThreePartition, where $A'$ has  $m'=O(m^2)$ elements:
one element $w_i$ for each element $a_i$  of $A$,
two elements $u_{i,j}$ and $\bar u_{i,j}$ for each pair $(a_i,a_j)$ of elements from $A$, and
$8 m^2 - 3 m$ filler elements $u^*_k$, $1 \le k \le 8 m^2-3 m$.
Their sizes are:
\begin{align*}
s'(w_i) &= 4 ( 5 B + s(a_i)) + 1, \\
s'(u_{i,j})&= 4 (6 B - s(a_i) - s(a_j)) + 2, \\
s'(\bar u_{i,j}) &= 4 ( 5 B + s(a_i) + s(a_j)) + 2, \\
s'(u^*_k) &= 20 B.
\end{align*}
We let $B'= 64 B +4$.
It can be checked that this reduction is injective.

\item\label{tkron1crux:iii} From \ThreePartition to \MachineFlow,
the decision version of the two-machine flow scheduling problem with unit processing times
defined in Chapter 3 in \cite{yu} (where it is called \textsc{F2UD'}):

The \MachineFlow problem is to decide, given two machines M1 and M2, each of which can process at most one job at a time, and $n$ jobs $j$, $1 \le j \le n$, where each job takes unit processing time and the job $j$ is assigned a delay $l_j$ that describes the minimum amount of time between the completion of the job $j$ on M1 and its start on M2, and a threshold $y$, whether there exists a feasible schedule of the jobs so that the last job is completed before time $y$.

The reduction in \cite{yu} from \ThreePartition to \MachineFlow goes as follows.
Without loss of generality, we consider a modified version of \ThreePartition by multiplying the partition elements by $4m$.
Thus we are given a set of positive integers $A=\{ a_1,\ldots, a_{3m}\}$  and a positive integer $B$ such that
(1) $B < a_i < 2 B$ for all $i$, (2) $\sum_j a_j = 4 m B$, (3) $a_i = 0$ (mod $m$) for all $i$, and (4) $4 B=0$ (mod $m$).
The problem is to decide if $A$ can be partitioned into $m$ disjoint 3-element subsets $A_1,\ldots, A_m$ such that $\sum_{a_j \in A_i} a_j =
4 B$, for all $i$.
An instance of this modified version of \ThreePartition is mapped to an instance of \MachineFlow with delays
(1) $l_j = a_j$ for $1 \le j \le 3 m$, (2) $l_j=0$ for $ 3 m + 1 \le j \le 4 m B$, (3) $l_j= u  + 1$ for $4 m B + 1 \le j \le m u$, where $u= 4 (m+1) B$, and (4) the threshold $y= n + 4 m B +2$, where $n= m u$ is the total number of jobs.
It can be checked that this reduction is injective.

\item\label{tkron1crux:iv} From \MachineFlow  to  \RNThreeDM (Restricted Numerical 3-Dimensional Matching, cf.~page 31 in \citealp{yu}):

The \RNThreeDM problem is to decide, given a positive integer set $U=\{u_1,\ldots,u_n\}$ and a positive integer $e$ such that $\sum_{j=1}^n u_j + n(n+1) = n e$, whether there exist two $n$-permutations $\lambda$ and $\mu$  such that $j + \lambda(j) + u_{\mu(j)} = e$ for $1 \le j \le n$. (It can be assumed that each $u_i< e-1$).

The reduction (Corollary 3 on page 32 in \citealp{yu}) maps an instance of \MachineFlow  to that of \RNThreeDM given by $u_j=l_j$ for $1 \le j \le n$, and $e=y$.
We assume that the instance of \MachineFlow here  arises in the reduction from \ThreePartition to \MachineFlow given in~\ref{tkron1crux:iii} above.
This will ensure that $\sum_j u_j + n(n+1) = ne$ and  each  $u_i < e-1$, which we require for~\ref{tkron1crux:v} below to be injective.
It can be checked that this reduction is injective.

\item\label{tkron1crux:v} From \RNThreeDM to \RNMTS (Restricted Numerical Matching with Target Sums, cf.~\citealp{brunettibinary}):

The \RNMTS problem is to decide, given positive integers $y_1,\ldots, y_n$ such that  $2 \le y_1 \le y_2 \le \cdots \le y_n \le 2 n$ and $\sum_i y_i = n(n+1)$, if there exist $n$-permutations $\sigma$ and $\pi$ such that $\sigma(k) + \pi(k)= y_k$ for $1 \le k \le n$.

\RNThreeDM is mapped to \RNMTS by letting $y_j= e - u_j$ and then reordering the $y_j$ as per their values.
It can be checked that this reduction is injective.

\item\label{tkron1crux:vi} From \RNMTS to  \Permutation:

The \Permutation problem (page 69 in \citealp{brunettibinary}, where it is called \Permutation($S_3$)) is to decide, given non-negative integers $z_2,\ldots, z_{ 2 n} \in \{0,\ldots,n\}$, whether there exists an $n \times n$ permutation matrix $P$ such that $\sum_{i,j: i+j = l} P_{i,j}= z_l$ for $2 \le l \le 2 n$.

The reduction in \cite{brunettibinary} maps an instance $y=(y_1,\ldots,y_n)$ of \RNMTS  to an instance $z=(z_2,\ldots,z_{2 n})$  of \Permutation by setting $z_l=|\{ k \le n \ | \ y_k = l\}|$.
It can be checked that this reduction is injective.

\item\label{tkron1crux:vii} From \Permutation to \SpecialConsistency:

\SpecialConsistency is the problem addressed in \ref{tbrunetti}, namely, the problem of deciding positivity of $t^\lambda_{\mu,\pi}$, given $\lambda,\mu$ and $\pi$ in unary, when $(\lambda^T,\mu^T,\pi^T)$ is restricted to be  of the form~\ref{eq:lattice permutations}.

The reduction in \cite{brunetti} maps an instance $z=(z_2,\ldots,z_{2n})$ of \Permutation to an instance  $(\lambda,\mu,\pi)$ of \SpecialConsistency satisfying~\ref{eq:lattice permutations}, with $r=n-1$ and $d_i=z_{i+2}$, $0 \le i \le 2 r$ (it can be shown that $\sum_k d_k = r+1$, and $\sum_k k d_k = r (r+1)$).
This reduction is injective.

\item\label{tkron1crux:viii} From \SpecialConsistency to  \RestrictedKronecker:

The reduction given in the proof of \ref{tnprefined} is also injective.
\end{enumerate}
\end{proof}

\ref{tkron1} follows from \ref{lnonsparse} and \ref{tkron1crux}.
This proof also shows that the superpolynomially many partition triples
in \ref{tkron1} can be constructed explicitly (as defined in \ref{suncond}).

\begin{remark}
In the preceding proof, we can use, in place of \ThreeDMatching,  any problem  in NP which has a polynomial-time-computable padding function \citep{hartmanis}, and which can be reduced  by a polynomial-time one-one reduction to  \RestrictedKronecker.
For example, \textsc{Sat} also has a polynomial-time-computable  padding function, and it  can also be reduced  by a polynomial-time one-one reduction to \RestrictedKronecker.
This reduction is obtained by composing the injective reduction from \textsc{Sat} to \ThreeDMatching given in \cite{johnson} with the injective reduction from \ThreeDMatching to  \RestrictedKronecker given in the proof of \ref{tkron1crux}.
\end{remark}

\begin{example}\label{sexample}
Though the reduction $\phi$ in \ref{tkron1crux} is po\-ly\-no\-mi\-al-time computable, the blow-up in size can be substantial.
For example, let us start with a trivial instance of the \ThreeDMatching problem, wherein $q=2$, $W=\{w_1,w_2\}$, $X=\{x_1,x_2\}$, $Y=\{y_1,y_2\}$, and $M=\{(w_1,x_1,y_1),(w_2,x_1,y_2), (w_1, x_2, y_2)\}$.
Clearly, $M$ does not contain a matching.

It can be checked that $\phi(M,W,X,Y,q)$, with $\epsilon=1$ in condition \ref{tkron1:2}, is a partition triple whose height is $> 10^{16}$ and the total size is $> 10^{46}$.
By \ref{tkron1crux}, the Kronecker coefficient associated with this partition triple is zero.
One cannot verify this fact directly using a computer, since computation of Kronecker coefficients for partition triples of this height and size
is far beyond the reach of  computer algebra systems.
Thus \ref{tkron1crux} maps instances of \ThreeDMatching which do not contain matching for trivial reasons to partition triples whose associated Kronecker coefficients vanish for highly nontrivial reasons.
Thus the image of something trivial is highly nontrivial.
This happens because of the nontriviality of the sequence of reductions that produce the image.
\end{example}

\subsection{Proof of \ref{tkron11}}
For the proof of \ref{tkron11}, we need the following lemma, which proves a variant of the result in \cite{fortune} that coNP-complete languages cannot be sparse unless P=NP.

\begin{lemma}\label{lsparse}
  Let $\mathcal L$ be a $\mbox{coNP}$-hard language given as a disjoint union
  \[ \mathcal L = \mathcal L' \cup \mathcal L'', \]
  where $\mathcal L'$ is sparse (i.e., there are only $\poly(n)$  words of length $n$ in $\mathcal L'$) and $\mathcal L'' \in \mbox{NP} \cap \mbox{coNP}$.
  Then $\mbox{coNP} = \mbox{NP}$.
\end{lemma}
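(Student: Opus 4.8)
The plan is to derive $\mbox{coNP} = \mbox{NP}$ by placing a $\mbox{coNP}$-complete language in $\mbox{NP}$; concretely, I would show that $\overline{\mbox{SAT}}$, the set of unsatisfiable Boolean formulas, lies in $\mbox{NP}$, which immediately yields $\mbox{coNP}\subseteq\mbox{NP}$ and hence $\mbox{coNP}=\mbox{NP}$. Since $\mathcal L$ is $\mbox{coNP}$-hard, fix once and for all a polynomial-time many-one reduction $f$ from $\overline{\mbox{SAT}}$ to $\mathcal L$; let $p$ be a polynomial with $|f(\psi)|\le p(|\psi|)$ and $q$ a polynomial with $|\{w\in\mathcal L' : |w|\le m\}|\le q(m)$, the census bound coming from the sparseness of $\mathcal L'$. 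The whole argument rests on marrying three ingredients: the downward self-reducibility of $\mbox{SAT}$ (substituting a truth value for one variable), the polynomial bound on the number of strings of $\mathcal L'$ of any given length, and the fact that membership in $\mathcal L''$ has short $\mbox{NP}$-certificates.

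I would certify $\phi\in\overline{\mbox{SAT}}$ by a \emph{proof tree}: a tree whose nodes are subformulas $\psi$ of $\phi$ obtained by partial assignments (root $=\phi$), where each node carries one of four labels: (a) $\psi$ is variable-free and evaluates to false; (b) $\psi$ comes equipped with an $\mbox{NP}$-witness for $f(\psi)\in\mathcal L''$, which forces $\psi$ unsatisfiable since $\mathcal L''\subseteq\mathcal L$ and $f$ is a reduction; (c) $\psi$ is a ``branch'' node with the two children $\psi[x_i=0]$ and $\psi[x_i=1]$; or (d) $\psi$ is a ``repeat'' node with a pointer to another node $\psi'$ of the tree satisfying $f(\psi')=f(\psi)$. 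A verifier checks each label locally and checks that the induced dependency relation (each branch node on its children, each repeat node on its target) is acyclic, e.g.\ by a topological sort. Soundness is then a routine structural induction: in any tree passing these checks every node is genuinely unsatisfiable --- leaves and type-(b) nodes directly; a branch node because $\psi$ is unsatisfiable once both restrictions $\psi[x_i=b]$ are; a repeat node because its target being unsatisfiable gives $f(\psi')\in\mathcal L$, hence $f(\psi)\in\mathcal L$, hence $\psi$ unsatisfiable --- so the root $\phi$ is unsatisfiable. For completeness, note that if $\phi$ is unsatisfiable then so is every partial-assignment subformula; hence running the memoized depth-first search that, at each subformula $\psi$, stops if $\psi$ is variable-free, else stops with a repeat label if $f(\psi)$ has already been certified, else stops with a type-(b) label if an $\mbox{NP}$-witness for $f(\psi)\in\mathcal L''$ exists, and otherwise branches on a variable, produces a valid proof tree for $\phi$.

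The step I expect to be the main obstacle --- and the only one genuinely using sparseness --- is bounding the size of this tree by $\poly(|\phi|)$. The nodes that get branched on are exactly the unsatisfiable nodes for which no $\mbox{NP}$-witness of $\mathcal L''$-membership exists; since $\mathcal L''\in\mbox{NP}$, such a node has $f(\psi)\in\mathcal L\setminus\mathcal L''\subseteq\mathcal L'$, and $|f(\psi)|\le p(\poly(|\phi|))$, so there are at most $q(p(\poly(|\phi|)))=\poly(|\phi|)$ \emph{distinct} values $f(\psi)$ among branched-on nodes. The combinatorial heart is that, because of the memoization, for each fixed such value $v$ the branched-on nodes with $f$-value $v$ all lie on a single root-to-leaf path of the self-reduction tree: if $\psi_1$ is the first such node in DFS order, then any later one had a failed memo check, so $v$ was not yet entered into the memo, so $\mathrm{DFS}(\psi_1)$ was still on the stack, hence the later node is a descendant of $\psi_1$; iterating, the branched-on $v$-nodes form an ancestor chain, whose length is at most the number of variables of $\phi$. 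Hence there are $\poly(|\phi|)$ branch nodes, at most twice that many children, and the whole tree --- together with its polynomial-size attached $\mbox{NP}$-witnesses --- has polynomial size and is polynomial-time checkable, giving $\overline{\mbox{SAT}}\in\mbox{NP}$ and thus $\mbox{coNP}=\mbox{NP}$. (Only $\mathcal L''\in\mbox{NP}$ is actually invoked here; the hypothesis $\mathcal L''\in\mbox{coNP}$ is not used by this argument.)
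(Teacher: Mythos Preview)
Your argument is correct and follows the same Mahaney--Fortune template as the paper: reduce $\overline{\mbox{SAT}}$ to $\mathcal L$, walk the self-reduction tree of the input formula by depth-first search with memoization on $f$-values, and bound the number of branching nodes by observing that every such node has $f(\psi)\in\mathcal L'$ and that nodes sharing an $f$-value must lie on a single root-to-leaf path. The paper packages this as a nondeterministic machine that, at each node, runs in parallel an $\mbox{NP}$-machine $M_1$ for ``$f(\psi)\in\mathcal L''$'' and an $\mbox{NP}$-machine $M_2$ for ``$f(\psi)\notin\mathcal L''$'' (using $\mathcal L''\in\mbox{coNP}$), branching only when $M_2$ accepts; you instead hand the verifier a static proof tree with attached $\mbox{NP}$-witnesses for the type-(b) leaves and let it check acyclicity. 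Your packaging is a bit cleaner and, as you note, dispenses with the hypothesis $\mathcal L''\in\mbox{coNP}$: since the size bound only needs that a branched-on node has $f(\psi)\notin\mathcal L''$, it suffices that $\mathcal L''\in\mbox{NP}$ so that every node with $f(\psi)\in\mathcal L''$ can be pruned by a short witness. That is a genuine (if minor) strengthening over the paper's statement.
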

\begin{proof}
  We will show that the assumptions imply that $\mbox{SAT}^c$ (the complement of SAT) is in $\mbox{NP}$ -- this would imply that $\mbox{coNP} \subseteq  \mbox{NP}$, and hence, $\mbox{coNP} = \mbox{NP}$.
  For this, we adapt the proof in \cite{mahaney,fortune}.

  Since $\mathcal L$ is $\mbox{coNP}$-hard, there exists a polynomial-time many-one
 reduction $R$ such that $R(\mbox{SAT}) \subseteq {\mathcal L}^c$ and $R(\mbox{SAT}^c) \subseteq \mathcal L$.
Since $\mathcal L''$ is in $\mbox{NP} \cap \mbox{coNP}$, there exist non-deterministic Turing machines $M_1$ and $M_2$ such that, given input $x$, $M_1$
halts (in polynomial time)  if and only if $x \in \mathcal L''$, while $M_2$ halts (in polynomial time)
if and only if $x \not\in \mathcal L''$.

  Let $F$ be a formula for which we have to decide unsatisfiability.
  We perform depth-first search on  the binary tree obtained by self-reducing $F$ (the root of this tree is  $F$,
and the  children of a node $G$ are $G_0$ and $G_1$, the formulas of smaller size obtained by specializing
the first variable in $G$  to $\mbox{true}$ or $\mbox{false}$, and applying trivial simplifications), starting at the root node.
  We maintain a table $\mathcal U$ of labels ($R$-values) of unsatisfiable formulae,
starting with $\mathcal U := \{R(\mbox{false})\}$.
  At each node $G$, we first compute $R(G)$ and then do one of the following:
  \begin{enumerate}
    \item If $R(G) \in \mathcal U$, prune the subtree and return to the parent node.
    \item Otherwise, if $G = \mbox{true}$, enter an infinite loop.
    \item Otherwise, run both non-deterministic Turing machines $M_1$ and $M_2$ in parallel on the input $R(G)$ until one of the two halts (which will always happen, for some sequence of non-deterministic choices,  in polynomial time):
      \begin{enumerate}
        \item If $M_1$ halts (in which case $R(G) \in \mathcal L'' \subseteq \mathcal L$, and hence, $G$ is unsatisfiable),
add $R(G)$ to $\mathcal U$, prune the subtree and return to the parent node.
        \item If $M_2$ halts, visit both children $G_0$ and $G_1$. Upon return (if this happens),
it will always be true that $G_0$ and $G_1$ are unsatisfiable, and hence $R(G_0), R(G_1) \in \mathcal U$
and  $G$ is unsatisfiable. Thus add $R(G)$ to $\mathcal U$ and return to the parent node.
      \end{enumerate}
  \end{enumerate}
  It is clear that this algorithm can be understood as a non-deterministic Turing machine that halts if and only if $F$ is unsatisfiable.

It suffices to show that, if $F$ is  unsatisfiable, this algorithm  halts in polynomial time.
  For this, it suffices to show that the number of \emph{interior} nodes that are visited by the algorithm is polynomial in the size $|F|$ of the formula $F$ (since the tree is binary, the number of visited leaves is at most twice the number of visited interior nodes).
  Now observe that interior nodes only arise in the case where $M_2$ halts on input $R(G)$, in which case $R(G) \in \mathcal L'$.
  Thus any interior node is necessarily labeled by an element of the sparse set $\mathcal L'$.
  We can thus conclude the argument precisely as in Lemma 2.2 of \cite{mahaney}:
  If $G$ and $G'$ are two interior nodes that have the same label, $R(G) = R(G') \in \mathcal L'$, then they necessarily ought to appear in the same branch of the search tree (because we proceed by depth-first search).
  As the depth of the tree is no more than $m$ -- the number of variables in $F$ -- we find that each label can occur at most $m$ times.
  Therefore, the number of visited interior nodes can be upper bounded by $m \cdot p(q(\lvert F \rvert))$, where $q$ is a polynomial that bounds the increase in length induced by the reduction $R$ and
 $p=p(n)$ is a polynomial that bounds the number of strings of  length $\leq n$ in the sparse set $\mathcal L'$.
  We conclude that $\mbox{SAT}^c \in \mbox{NP}$.
\end{proof}

Another ingredient needed for the proof of \ref{tkron11} is the following result.

\begin{theorem}[\cite{BCMW}]\label{tconp}
The problem of deciding if $(\lambda,\mu,\pi) \in \kron(m)$ is in $\mbox{NP}
\cap \mbox{coNP}$. Here, $m$ denotes the maximum height of $\lambda$, $\mu$, or $\pi$,
and  the partition triple $(\lambda,\mu,\pi)$ is given in unary.
\end{theorem}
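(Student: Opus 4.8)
The plan is to recognize $\kron(m)$ as a \emph{moment cone} and then to invoke --- and sketch --- the theorem of Bürgisser--Christandl--Mulmuley--Walter \cite{BCMW} that membership in a moment polytope lies in $\mbox{NP}\cap\mbox{coNP}$ whenever the underlying data has polynomial bit-length. The point is that the unary encoding of $(\lambda,\mu,\pi)$ forces exactly that regime, so the black box of \cite{BCMW} applies essentially verbatim after a change of viewpoint.

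\textbf{Reformulation as a moment cone.} Let $G:=\GL_m(\C)\times\GL_m(\C)\times\GL_m(\C)$ act naturally on $V:=\C^m\otimes\C^m\otimes\C^m$. It is classical (iterated Schur--Weyl duality) that $\C[V]=\bigoplus_{\lambda,\mu,\pi} k^\lambda_{\mu,\pi}\, V_\lambda^*\otimes V_\mu^*\otimes V_\pi^*$, the multiplicities being the Kronecker coefficients (and $k^\lambda_{\mu,\pi}=0$ unless $|\lambda|=|\mu|=|\pi|$ and all heights are $\le m$). Comparing degree-$N$ parts with $H^0(\mathbb P(V),\mathcal O(N))=\Sym^N(V^*)$, one obtains $\kron(m)=\Delta_G\big(\mathbb P(V),\mathcal O(1)\big)$, the moment cone of $\mathbb P(V)$ with its natural $G$-action. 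Hence deciding $(\lambda,\mu,\pi)\in\kron(m)$ is an instance of the following problem: given a reductive $G$, a rational $G$-module $V$ presented by its list of $T$-weights, and a dominant weight $\omega$, decide whether $\omega/N\in\Delta_G(\mathbb P(V))$, where $N=|\lambda|$. Since the input is in \emph{unary}, we have $N=\poly(\text{input size})$, so $m\le N$, $\dim V=m^3$, the $m^3$ weights of $V$, and $\omega$ itself are all of polynomial bit-length. (Under a binary encoding $\dim V=m^3$ could be exponential and this approach would collapse, which is consistent with the theorem being stated for unary input.) It therefore suffices to establish $\mbox{NP}\cap\mbox{coNP}$-membership for this generic problem with polynomial-size data, which I would do in two halves, following \cite{BCMW}.

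\textbf{Membership is in $\mbox{NP}$.} This half rests on the abelianization of the nonabelian moment-polytope problem: by (an effective form of) Brion's convexity theorem, $\omega/N$ lies in $\Delta_G(\mathbb P(V))$ if and only if it lies in a suitable \emph{torus} moment polytope associated to the closure of the orbit, under the opposite Borel, of a sufficiently generic vector; and this torus polytope is the convex hull of an explicitly computable finite set of $T$-weights once one has generic coordinates in hand. A membership certificate is therefore the data of a generic element $g\in G$ of polynomial bit-length (producing those coordinates) together with a convex-combination witness; verification is then a polynomial-size linear program over the $\poly$-many weights of $V$. The only non-routine point is that a good $g$ of polynomial bit-length exists: the "non-generic" locus where Brion's identity fails is cut out by polynomials whose degrees one bounds polynomially in $N$ and $m$, after which a Schwartz--Zippel argument supplies $g$ with entries in $\{1,\dots,\poly(N)\}$.

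\textbf{Membership is in $\mbox{coNP}$.} By the Kempf--Ness theorem, $\omega/N\notin\Delta_G(\mathbb P(V))$ precisely when the $\omega$-twisted linearization has empty semistable locus; equivalently, by the Hilbert--Mumford criterion and the Kirwan--Ness stratification, there is a one-parameter subgroup $\tau$ of $G$ destabilizing a generic vector $v\in V$, i.e.\ with positive Mumford weight $M^\omega(v,\tau)=-\min\{\langle\tau,\chi\rangle : v_\chi\neq 0\}-\tfrac1N\langle\tau,\omega\rangle>0$. As $\tau$ ranges over the cocharacter lattice (within a fixed closed Weyl chamber) the function $M^\omega(v,\cdot)$ is positively homogeneous and piecewise linear, its pieces cut out by hyperplanes with polynomially bounded coefficients; if its supremum is positive it is attained on a full-dimensional rational subcone, which contains an integral $\tau$ of polynomial bit-length. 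A certificate for non-membership is thus the pair $(v,\tau)$, verified by checking that $v$ is generic enough (finitely many explicit nonvanishings) and that $M^\omega(v,\tau)>0$ by an explicit minimum over the weights of $V$. Hence non-membership is in $\mbox{NP}$, so membership is in $\mbox{coNP}$.

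\textbf{Main obstacle.} The conceptual ingredients --- the moment-cone identification, Brion's theorem, Kempf--Ness and the Hilbert--Mumford/Kirwan--Ness instability --- are standard; the crux is their \emph{effective} content, which is the technical heart of \cite{BCMW}: (i) a polynomial bound on the degrees of the polynomials defining the non-generic locus, so that the generic element $g$ (and generic vector $v$) can be taken of polynomial bit-length, requiring quantitative control of Brion's theorem via orders of vanishing of the relevant semi-invariants; and (ii) the guarantee that an integral destabilizing one-parameter subgroup of polynomial bit-length exists, which leans on Kempf's optimal-destabilizer theory together with bounds on its denominators. Given these, the only thing special to our situation is the bookkeeping above showing that the unary presentation of $(\lambda,\mu,\pi)$ keeps $N$, $\dim V=m^3$, and all weights polynomially bounded; plugging this into \cite{BCMW} yields Theorem~\ref{tconp}.
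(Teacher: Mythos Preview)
The paper does not prove Theorem~\ref{tconp} at all: it is stated with the attribution ``cf.~\cite{BCMW}'' and then immediately used as a black box in the proof of Theorem~\ref{tkron11}. Your proposal is therefore not so much an alternative to the paper's proof as a sketch of the cited result itself; as such it is a faithful outline of the argument in \cite{BCMW} (moment-cone identification of $\kron(m)$, Brion-based $\mbox{NP}$ certificate, Hilbert--Mumford-based $\mbox{coNP}$ certificate, effective bit-length bounds), and your observation that the unary encoding is what keeps $m$, $\dim V=m^3$, and the weight data polynomially bounded is exactly the bookkeeping needed to invoke that result here.
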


We can now give the proof of \ref{tkron11}:

\begin{proof}[\ref{tkron11}]
For given $m$, let $\mathcal L$ be the set of partition triples $(\lambda,\mu,\mu)$ satisfying the constraints \ref{tkron1:0}--\short\ref{tkron1:5}.
Let $\mathcal L'$ be the set of partition triples $(\lambda,\mu,\mu)$  satisfying both the constraints \ref{tkron1:0}--\short\ref{tkron1:5} and \ref{tkron11:6}--\short\ref{tkron11:7}.
Let $\mathcal L''$ be the set of partition triples satisfying the constraints \ref{tkron1:0}--\short\ref{tkron1:5} such that either~\short\ref{tkron11:6} or \short\ref{tkron11:7} in \ref{tkron11} are violated.
Then, clearly, $\mathcal L=\mathcal L' \cup \mathcal L''$.

In the definition of $\mathcal L''$, we can drop the constraint~\short\ref{tkron1:0}, since it is automatically satisfied if \short\ref{tkron11:6} or \short\ref{tkron11:7} are violated (as $k^\lambda_{\mu,\mu}=k^{\lambda^t}_{\mu^t,\mu} = k^\lambda_{\mu^t,\mu^t}$).
By \ref{tconp}, the problem of deciding whether a partition triple belongs to the Kronecker cone is in $\mbox{NP} \cap \mbox{coNP}$.
It follows that $\mathcal L'' \in \mbox{NP} \cap \mbox{coNP}$.
By \ref{tnprefined}, $\mathcal L$ is coNP-hard.
It now follows from \ref{lsparse} that $\mathcal L'$ is not sparse, assuming  $\mbox{coNP} \not = \mbox{NP}$.
This proves \ref{tkron11}.
\end{proof}

 \section{\texorpdfstring{Correlation between the complexity of~$k^\lambda_{\mu,\pi}$ and~$t^\lambda_{\mu,\pi}$}{Correlation between the complexities of k and t}}\label{srectangular}

There seems to be a surprising correlation between the complexities of $k^\lambda_{\mu,\pi}$ and
$t^\lambda_{\mu,\pi}$.
On the one hand, positivity of $k^\lambda_{\mu,\pi}$ is, in general, NP-hard to decide (\ref{tkronnp2}), just as it is for $t^\lambda_{\mu,\pi}$ (\ref{tbrunetti}).
On the other hand, suppose  $\Pi$ is a subclass of partition triples such that the problem of deciding positivity of $t^\lambda_{\mu,\pi}$, for  $(\lambda,\mu,\pi) \in \Pi$,  is in $P$.
While the corresponding problem of deciding positivity of $k^\lambda_{\mu,\pi}$, for  $(\lambda,\mu,\pi) \in \Pi$, may not always be in $P$, the results in this section suggest that it  may indeed be so  for many ``natural'' subclasses $\Pi$.
In particular, \ref{tpositivet} proved in this section suggests that the problem of deciding positivity of $k^\lambda_{\mu,\pi}$, when
$\mu$ and $\pi$ are rectangular ($=\delta(\lambda)$), is in $P$, as conjectured in~\cite{GCT6}.

\subsection{\texorpdfstring{Interpretation of the $t$-function in terms of hypergraphs}{Interpretation of the t-function in terms of hypergraphs}}
We begin with a lemma that is needed for proving these results.

\begin{definition}
Let $d \in \N$.
An \emph{obstruction predesign} is a hypergraph with $d$ indistinguishable vertices
and with hyperedges that come in three colors, such that every vertex lies in exactly one hyperedge of each color.
The set all hyperedges that have the same color is called a \emph{layer of hyperedges}, so in our case we have three layers of hyperedges.

Let $\lambda,\mu,\pi$ be partitions of $d$.
An obstruction predesign is defined to have \emph{type} $(\lambda,\mu,\pi)$ if
the number of columns in~$\lambda$ of length~$k$ equals the number of hyperedges in layer 1 with $k$ vertices, and
the number of columns in~$\mu$ of length~$k$ equals the number of hyperedges in layer 2 with $k$ vertices, and
the number of columns in~$\pi$ of length~$k$ equals the number of hyperedges in layer 3 with $k$ vertices.

To each vertex we can assign its triple of hyperedges.
An \emph{obstruction design} is an obstruction predesign such that no two vertices have the same triple of hyperedges.
\end{definition}


\begin{lemma}\label{lhyper}
Let $\tilde t^\lambda_{\mu,\pi}$ be the number of obstruction designs of type $(\lambda,\mu,\pi)$.
Then  $t_{\mu,\pi}^\lambda > 0$ iff  $\tilde t_{\mu,\pi}^\lambda >0$.
\end{lemma}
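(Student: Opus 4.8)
The plan is to establish the biconditional by relating both sides to the existence of a point set with marginals $(\lambda^T,\mu^T,\pi^T)$, which is precisely what $t^\lambda_{\mu,\pi} > 0$ records. Recall that $t^\lambda_{\mu,\pi}$ counts point sets $P \subseteq \{0,\dots,r-1\}^3$ with marginals $(\lambda^T,\mu^T,\pi^T)$; equivalently, by the discussion preceding Lemma~\ref{lem:antisymmetric weights}, it counts ways to place $d$ distinct points (where $d = |\lambda|$) in a $3$-dimensional grid so that the numbers of points in the $x$-slices, $y$-slices, and $z$-slices are given by the parts of $\lambda^T$, $\mu^T$, $\pi^T$ respectively. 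So my first step is to make the dictionary explicit: an obstruction \emph{pre}design of type $(\lambda,\mu,\pi)$ is the same data as a function from the $d$ vertices into (hyperedge-of-layer-1) $\times$ (hyperedge-of-layer-2) $\times$ (hyperedge-of-layer-3), where layer-$1$ hyperedges are indexed by columns of $\lambda$ with prescribed capacities $\lambda^T_i$, and similarly for layers $2,3$; the ``exactly one hyperedge of each layer'' condition says each vertex gets one triple of coordinates, and the capacity conditions say the marginals come out right. Since the vertices are indistinguishable, a predesign is exactly a multiset of $d$ coordinate-triples with the correct marginals, i.e.\ a point set \emph{with multiplicity} whose marginals are $(\lambda^T,\mu^T,\pi^T)$. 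The extra condition defining an obstruction \emph{design} --- no two vertices share all three hyperedges --- is precisely the condition that this multiset has no repeated triple, i.e.\ it is a genuine \emph{set} of points. Hence obstruction designs of type $(\lambda,\mu,\pi)$ are in bijection with point sets $P$ with marginals $(\lambda^T,\mu^T,\pi^T)$, giving $\tilde t^\lambda_{\mu,\pi} = t^\lambda_{\mu,\pi}$, which is far stronger than the stated equivalence of positivity.

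Since the lemma as stated only claims the positivity equivalence, I would still present the bijection (it costs nothing and immediately yields the claim), but if one wanted to be economical one could argue only the two implications. For the ``easy'' direction, $\tilde t^\lambda_{\mu,\pi} > 0 \Rightarrow t^\lambda_{\mu,\pi} > 0$: an obstruction design is a predesign with distinct hyperedge-triples per vertex, so reading off the coordinate of each vertex in each layer produces $d$ distinct points with the prescribed marginals, hence a point set counted by $t^\lambda_{\mu,\pi}$. For the converse, a point set $P$ with marginals $(\lambda^T,\mu^T,\pi^T)$ directly assigns to its $d$ points a triple of hyperedges (the $x$-slice, $y$-slice, $z$-slice it lies in); the marginal conditions guarantee the hyperedge capacities match the column lengths of $\lambda,\mu,\pi$, and distinctness of the points guarantees the obstruction-design condition.

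The main thing to be careful about --- the only place where an error could creep in --- is the bookkeeping between ``columns of $\lambda$ of length $k$'' and the entries $\lambda^T_i$, i.e.\ confirming that a hyperedge of size $k$ in layer $1$ corresponds to an $x$-slice containing exactly $k$ points, and that ranging over all columns of $\lambda$ exhausts exactly the nonzero entries of $\lambda^T$. This is the ``$\lambda^T_i$ is the number of boxes in the $i$-th column of $\lambda$'' remark already recorded after Lemma~\ref{lem:antisymmetric weights}, so it is routine, but it is where the transpose enters and hence where one must keep the indexing straight. I do not expect any genuine obstacle beyond this; the lemma is essentially a reformulation of the definition of $t^\lambda_{\mu,\pi}$ in hypergraph language, and the proof is a direct translation in both directions.
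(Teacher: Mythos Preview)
Your two directions establishing the positivity equivalence are correct and match the paper's proof: from a point set one takes slices as hyperedges, and from an obstruction design one labels hyperedges by consecutive integers (largest first) to read off coordinates. That is exactly what the paper does.

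However, your stronger claim that this yields a \emph{bijection}, and hence $\tilde t^\lambda_{\mu,\pi} = t^\lambda_{\mu,\pi}$, is false. The paper explicitly remarks immediately after the proof that these two quantities need not agree when hyperedge sizes are not all distinct. The gap in your dictionary is the phrase ``indistinguishable vertices'': the paper clarifies that this means one is counting hypergraph \emph{isomorphism classes}, so hyperedges in the same layer with equal size are not labeled and hence are interchangeable. Concretely, if $\lambda^T_i = \lambda^T_j$ for some $i \neq j$, then swapping the $x=i$ and $x=j$ slices of a point set $P$ typically yields a different point set with the same marginals (so counted separately by $t^\lambda_{\mu,\pi}$), yet both map to the same obstruction design. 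Your argument silently labels the layer-$1$ hyperedges by the columns of $\lambda$ when you describe a predesign as ``a function from the $d$ vertices into (hyperedge-of-layer-1) $\times \cdots$''; that labeling is extra structure not present in the definition. Drop the bijection claim and keep only the two constructions, and the proof is complete.
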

\begin{proof}
Recall from \ref{sbounds} that $t^\lambda_{\mu,\pi}$ is the number of point sets $P \subseteq \mathbb N^3$ with marginals $(\lambda^T, \mu^T, \pi^T)$.
We  define the   $k$-th slice of a point set in direction $i$, $1 \leq i \leq 3$, to be its subset consisting  of the  points that have $k$ as their $i$-th coordinate. (Here the directions $1$, $2$, and $3$ correspond to the $x$, $y$, and $z$ coordinates, respectively.)

We now  prove that $t_{\mu,\pi}^\lambda>0$ iff $\tilde t_{\mu,\pi}^\lambda>0$.
From a point set  $P$ with marginals $(\lambda^T, \mu^T,\pi^T)$,
we can define an obstruction design of type $(\lambda,\mu,\pi)$ by
taking $P$ to be the hypergraph vertex set and making each slice in direction $i$ a hyperedge in layer $i$.
Conversely, from an obstruction design of type $(\lambda,\mu,\pi)$
we can obtain a point set  with marginals  $(\lambda^T,\mu^T,\pi^T)$
as follows:
For each layer we give consecutive numbers (starting at $0$) to each hyperedge, beginning with the largest hyperedge and continuing
in a manner such that the hyperedge sizes form a nonincreasing sequence, i.e., a partition of~$d$.
Since every vertex lies in exactly one hyperedge of each layer, every vertex gives rise to a triple of nonnegative integers,
and the triples are pairwise distinct because no two vertices share all 3 hyperedges.
These triples form a point set with marginals  $(\lambda^T,\mu^T,\pi^T)$.
\end{proof}

Note that $t^\lambda_{\mu,\pi}$ needs not always be equal to $\tilde t^\lambda_{\mu,\pi}$;
this  can happen when the hyperedge sizes are not all distinct.
Though, by the preceding result, the problems of deciding positivity of $t^\lambda_{\mu,\pi}$ and $\tilde t^\lambda_{\mu,\pi}$ are equivalent, obstruction designs introduced  here will turn out to be convenient in the proofs that follow.

\subsection{Littlewood-Richardson coefficients}
Given partitions $\la$, $\mu$, $\pi$ such that $|\la|=|\mu|+|\pi|$, let $\iota := |\la|+\la_1$.
Let $\tilde\la$, $\tilde\mu$, $\tilde\pi$ be the partitions $\la$, $\mu$, $\pi$ with a long first row put on top of their Young diagrams such that $|\tilde\la|=|\tilde\mu|=|\tilde\pi|=3\iota$.
Then, $k^{\tilde \lambda}_{\tilde \mu,\tilde \pi}$ is equal to the Littlewood-Richardson coefficient $c^\lambda_{\mu,\pi}$ associated with the partition triple $(\lambda,\mu,\pi)$ \citep{murnaghan}.

The problem of deciding positivity of $k^{\tilde \lambda}_{\tilde \mu,\tilde \pi}=c^\lambda_{\mu,\pi}$ has a strongly polynomial time algorithm~\citep{knutsontao2,GCT3}.
This is consistent with the following result.

\begin{theorem}\label{tlittle}
For any $\lambda,\mu$, and $\pi$,  $t_{\tilde\mu,\tilde\pi}^{\tilde\la} > 0$. In particular, the problem of deciding
positivity of $t_{\tilde\mu,\tilde\pi}^{\tilde\la}$ is trivial.
\end{theorem}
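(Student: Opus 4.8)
By Lemma~\ref{lhyper} it is enough to produce a single obstruction design of type $(\tilde\lambda,\tilde\mu,\tilde\pi)$. Write $w_\lambda,w_\mu,w_\pi$ for the numbers of columns of $\lambda,\mu,\pi$, so that $\iota=|\lambda|+\lambda_1=|\lambda|+w_\lambda$. Since $\tilde\lambda$ is $\lambda$ with a first row of length $3\iota-|\lambda|$ put on top, its $j$-th column has $\tilde\lambda^T_j=\lambda^T_j+1$ boxes for $0\le j<w_\lambda$ and $\tilde\lambda^T_j=1$ for $w_\lambda\le j<3\iota-|\lambda|$. Hence, in an obstruction design of type $(\tilde\lambda,\tilde\mu,\tilde\pi)$, layer~$1$ has $w_\lambda$ ``genuine'' (non-singleton) hyperedges of sizes $\lambda^T_0+1,\dots,\lambda^T_{w_\lambda-1}+1$, which sum to $|\lambda|+w_\lambda=\iota$, together with $3\iota-\iota=2\iota$ singleton hyperedges; the same holds for layers $2$ and $3$ with $\lambda$ replaced by $\mu$, resp.\ $\pi$. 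The purpose of the long first rows is precisely to create these large pools of singleton hyperedges.

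The key point is that in any obstruction predesign two distinct vertices can share all three of their hyperedges only if each of those three hyperedges is non-singleton; so the obstruction-design condition is automatically satisfied unless some vertices lie in a non-singleton hyperedge of \emph{every} layer. I will therefore pick pairwise disjoint vertex sets $Z_\lambda,Z_\mu,Z_\pi$ of sizes $|\lambda|+w_\lambda$, $|\mu|+w_\mu$, $|\pi|+w_\pi$ among the $3\iota$ vertices, partition $Z_\lambda$ into the genuine layer-$1$ hyperedges, partition $Z_\mu$ into the genuine layer-$2$ hyperedges, partition $Z_\pi$ into the genuine layer-$3$ hyperedges, and make every remaining incidence a singleton hyperedge. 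Then each vertex lies in exactly one hyperedge of each layer, the hyperedge sizes in each layer are the prescribed ones, and no vertex lies in a non-singleton hyperedge of two different layers (disjointness of the zones), so the obstruction-design condition holds vacuously. This yields an obstruction design of type $(\tilde\lambda,\tilde\mu,\tilde\pi)$, whence $t^{\tilde\lambda}_{\tilde\mu,\tilde\pi}>0$ by Lemma~\ref{lhyper}.

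The only thing left to check is that such disjoint zones fit, i.e.\
\[ (|\lambda|+w_\lambda)+(|\mu|+w_\mu)+(|\pi|+w_\pi)\ \le\ 3\iota . \]
Since $|\mu|+|\pi|=|\lambda|$ and $w_\mu\le|\mu|$, $w_\pi\le|\pi|$, the left-hand side equals $2|\lambda|+w_\lambda+w_\mu+w_\pi\le 3|\lambda|+w_\lambda\le 3(|\lambda|+w_\lambda)=3\iota$, with room to spare. (Even under a different reading of the part $\lambda_1$ in the definition of $\iota$ it suffices that $Z_\lambda\cap Z_\mu\cap Z_\pi=\emptyset$, i.e.\ that each vertex lie in at most two zones, which needs only the weaker bound $\le 6\iota$ and likewise holds.) I expect the only real obstacle to be noticing that the defining condition of an obstruction design localizes to vertices lying in non-singleton hyperedges of all three layers; once that is seen, the construction and this bookkeeping inequality finish the proof. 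I would close by remarking that the theorem makes positivity of $t^{\tilde\lambda}_{\tilde\mu,\tilde\pi}$ decidable by the constant answer ``yes'', which is consistent with---and stronger than---the polynomial-time decidability of positivity of $k^{\tilde\lambda}_{\tilde\mu,\tilde\pi}=c^\lambda_{\mu,\pi}$ recalled before the theorem.
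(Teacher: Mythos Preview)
Your argument is correct and follows the same idea as the paper's proof: reduce via Lemma~\ref{lhyper} to building an obstruction design, and do so by placing the non-singleton hyperedges of the three layers on pairwise disjoint vertex sets so that every vertex lies in a singleton in at least two layers. The paper splits the $3\iota$ vertices into three \emph{equal} parts of size $\iota$ and puts the layer-$i$ non-singletons into part $i$; you instead take zones of the exact sizes $|\lambda|+\lambda_1,\ |\mu|+\mu_1,\ |\pi|+\pi_1$ and verify the single packing inequality $\sum\le 3\iota$. Your bookkeeping is in fact a little more robust: the paper's assertion that $\tilde\mu$ and $\tilde\pi$ each have at least $2\iota$ singleton columns amounts to $|\mu|+\mu_1\le\iota$ and $|\pi|+\pi_1\le\iota$, which can fail (e.g.\ $\lambda=(1^4),\ \mu=(3),\ \pi=(1)$ gives $\iota=5$ but $|\mu|+\mu_1=6$), whereas your aggregate inequality always holds. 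Otherwise the two proofs are the same.
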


\begin{proof}
By \ref{lhyper}, it suffices to  construct a hypergraph to show that  $\tilde t_{\tilde\mu,\tilde\pi}^{\tilde\la} > 0$.
We call a hyperedge that contains only a single vertex a \emph{singleton}.
The key property of the constructed hypergraph will be that every vertex lies in 2 singletons and another hyperedge.
By construction $\tilde\la$, $\tilde\mu$, and $\tilde\pi$ each have at least $2\iota$ columns with a single box.
We split the $3\iota$ vertices into three equally sized parts.
The vertices of the first part are contained in singleton hyperedges of layer 2 and layer 3.
The vertices of the second part are contained in singleton hyperedges of layer 1 and layer 3.
The vertices of the third part are contained in singleton hyperedges of layer 1 and layer 2.
The remaining hyperedges of layer $i$ are constructed by freely
partitioning the vertices in the $i$-th part according to the desired
hyperedge sizes.
Since  no two vertices  share all three hyperedges, the theorem is proved.
\end{proof}

\subsection{Partitions of constant height}

{
\newcommand{\mc}{c}

It is known that positivity of $k^\lambda_{\mu,\pi}$ can be decided in polynomial time when $\lambda,\mu$, and $\pi$ have constant
heights. This is consistent with the following result.

\begin{theorem}
Fix a constant $\mc \in \IN$.
If $|\la|=|\mu|=|\pi|$ and
$\height(\la) \leq \mc$, $\height(\mu) \leq \mc$, $\height(\pi) \leq \mc$,
then positivity of $t^\la_{\mu,\pi}>0$ can be decided in polynomial time.
\end{theorem}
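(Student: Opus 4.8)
The plan is to reduce the problem to deciding feasibility of an integer linear program whose number of variables and number of constraints depend only on the fixed constant $c$, and then to apply the polynomial-time algorithm for integer programming in a fixed number of variables. Recall from Section~\ref{sbounds} (or Lemma~\ref{lhyper}) that $t^\lambda_{\mu,\pi}>0$ exactly when there is a $0/1$ array $P=(P_{ijk})$ over $\{0,\dots,N_1-1\}\times\{0,\dots,N_2-1\}\times\{0,\dots,N_3-1\}$, where $N_1=\lambda_1$, $N_2=\mu_1$, $N_3=\pi_1$, whose axial marginals are $(\lambda^T,\mu^T,\pi^T)$. The height hypothesis enters through the fact that every entry of each of $\lambda^T,\mu^T,\pi^T$ is at most $c$, so that the combinatorially relevant data is ``low-dimensional'': grouping coordinate values by their marginal, let $\ell_p$ (resp.\ $m_q$, $n_r$) be the number of columns of $\lambda$ (resp.\ $\mu$, $\pi$) of length exactly $p$, so that $\lambda$ is completely described by the bounded-length vector $(\ell_1,\dots,\ell_c)$ subject to $\sum_p p\ell_p=d:=|\lambda|$, and similarly for $\mu,\pi$; write $X_p$ for the block of $\ell_p$ such columns, and likewise $Y_q,Z_r$.

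The first step I would carry out is a slice-by-slice reduction to two dimensions. One checks directly that a $P$ as above exists if and only if there is a nonnegative integer matrix $\tilde P=(\tilde P_{ij})$ with row marginals $\lambda^T$ and column marginals $\mu^T$ whose multiset of entries is realizable, on the cell side, as the degree sequence of a simple bipartite graph whose other side has degree sequence $\pi^T$: given $P$ one sets $\tilde P_{ij}:=\sum_k P_{ijk}$ and records which $z$'s occur in cell $(i,j)$, and conversely one lifts each cell $(i,j)$ of $\tilde P$ to $\tilde P_{ij}$ distinct $z$-coordinates according to such a bipartite graph. Since all entries of $\tilde P$ are bounded by $c$, by the Gale--Ryser theorem this last realizability condition becomes finitely many linear inequalities (one for each threshold $t\in\{1,\dots,c-1\}$, together with equality of totals) on the \emph{value profile} $(a_v)_{v=1}^{c}$ of $\tilde P$ --- where $a_v$ is the number of cells of value $v$ --- compared against $\pi^T$ (which we also group by value).

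Next I would encode, in the same bounded-dimensional way, which value profiles are realizable by a matrix $\tilde P$ with the prescribed marginals $\lambda^T,\mu^T$. The natural variables are $z_{pqv}$, the number of cells of $\tilde P$ lying in the block $X_p\times Y_q$ and having value $v$ (with $0\le v\le\min(p,q)$); these satisfy $\sum_v z_{pqv}=\ell_p m_q$, determine $a_v=\sum_{p,q}z_{pqv}$, and are constrained by the requirement that a matrix with these per-block value histograms, with every row in $X_p$ summing to $p$ and every column in $Y_q$ summing to $q$, actually exists --- a two-dimensional transportation/degree-sequence feasibility question which, since everything in sight is bounded by $c$, is again governed by finitely many linear inequalities on the $z_{pqv}$. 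Combining these with the Gale--Ryser inequalities above, the existence of $P$ becomes the feasibility of an integer linear program in $O(c^3)$ variables with $O_c(1)$ linear constraints whose coefficients have polynomial bit-length (the input $\lambda,\mu,\pi$ being compactly given by $(\ell_p),(m_q),(n_r)$, even in binary). By Lenstra's algorithm for integer programming in fixed dimension this feasibility is decidable in polynomial time, which proves the theorem.

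The main obstacle is the step of the previous paragraph: showing that realizability of a two-dimensional matrix with prescribed marginals \emph{and} a prescribed per-block value profile is captured by finitely many explicit linear conditions on the grouped data, rather than by conditions referring to the individual (possibly numerous) rows and columns. I expect this to follow from a Gale--Ryser/Hall-type analysis carried out block by block, exploiting the interchangeability of equal-marginal rows and columns to reduce any genuine matrix to a canonical one determined by the $z_{pqv}$; making this precise, and verifying that no hidden global obstruction survives the passage to fixed dimension, is where the real work lies.
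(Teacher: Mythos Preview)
Your proposal takes a very different route from the paper's, and it carries a genuine gap that you yourself flag. The paper's argument is far more elementary: it shows that whenever $|\lambda|\ge(c+2)c$ one \emph{always} has $t^\lambda_{\mu,\pi}>0$, so the algorithm is a constant-size lookup table for the finitely many triples with $|\lambda|<(c+2)c$ and the answer ``yes'' otherwise. Positivity in the large case is proved by an explicit greedy construction in the obstruction-design language of Lemma~\ref{lhyper}: lay the $|\lambda|$ vertices out in an array with $c$ vertices per column (plus possibly one short final column), and assign two of the three hyperedge layers greedily, one columnwise and one rowwise. Because each row then has at least $c+2$ vertices while every hyperedge has size at most~$c$, no row-layer hyperedge can touch two vertices in the same or adjacent columns, hence no two vertices share both their column-layer and row-layer hyperedges, and the third layer may be placed arbitrarily.

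Your fixed-dimension integer-programming approach might in principle be completed, but the step you single out --- characterising which per-block value profiles $(z_{pqv})$ are realisable by a matrix with marginals $\lambda^T,\mu^T$ --- is not a standard Gale--Ryser or Hall statement, and I do not see how it collapses to $O_c(1)$ linear constraints in the $z_{pqv}$ without substantial new work; the rows and columns inside different blocks are globally coupled through their sums. (Even your first step is more delicate than stated: checking Gale--Ryser at the breakpoints $K_s=\sum_{v\ge s}a_v$ of the cell-degree sequence produces terms $\min(r,K_s)$ on the right-hand side, which are not linear in the $a_v$'s without a further case split on which $K_s$ fall below $c$.) In light of the paper's threshold argument none of this machinery is needed: the problem becomes trivial past a constant size, which is both a cleaner proof and a sharper structural conclusion than polynomial-time decidability via Lenstra.
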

\begin{proof}
The algorithm is a hybrid algorithm based on the number of boxes $|\la|$.
The values for $t$ in the case $|\la|< (\mc+2)\mc$ are stored in a database of constant size.
The case $|\la|\geq (\mc+2)\mc$ is trivial, as the following lemma shows.
\end{proof}

\begin{lemma}
If $\height(\la) \leq \mc$, $\height(\mu) \leq \mc$, $\height(\pi) \leq \mc$,
and $|\la|=|\mu|=|\pi|\geq (\mc+2)\mc$, then $t^\la_{\mu,\pi}>0$.
\end{lemma}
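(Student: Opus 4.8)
The plan is to exhibit, for any such $\la,\mu,\pi$, a single point set in $\IN^3$ with marginals $(\la^T,\mu^T,\pi^T)$; by the definition of $t^\la_{\mu,\pi}$ (Section~\ref{sbounds}, cf.\ Lemma~\ref{lhyper}) this already shows $t^\la_{\mu,\pi}>0$. The key observation is that if a point set has pairwise distinct projections onto the $(y,z)$-plane, then its $x$-coordinates may be chosen completely freely, constrained only by the desired $x$-marginal. Hence it suffices to solve a two-dimensional problem and then lift it.

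Concretely, I would first invoke the Gale--Ryser theorem: for partitions $p,q$ of the same size, a $0$--$1$ matrix with row-sum sequence $q$ and column-sum sequence $p$ exists iff $q$ is dominated, in the dominance order, by the conjugate partition $p^*$. Applying this with $q=\mu^T$ and $p=\pi^T$ (so that $p^*=\pi$), what must be checked is the dominance inequality $\sum_{i\leq t}\mu^T_i\leq\sum_{i\leq t}\pi_i$ for all $t$. This is exactly where the hypotheses enter. Since $\mbox{ht}(\mu)\leq\mc$, every part of $\mu^T$ is at most $\mc$, so $\sum_{i\leq t}\mu^T_i\leq t\mc$; and since $\mbox{ht}(\pi)\leq\mc$, the partition $\pi$ has at most $\mc$ nonzero parts, so the sum of its $t$ largest parts is at least $\tfrac{t}{\mc}d$. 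For $t\geq\mc$ this gives $\sum_{i\leq t}\pi_i=d\geq\sum_{i\leq t}\mu^T_i$, and for $t<\mc$ it gives $\sum_{i\leq t}\pi_i\geq\tfrac{t}{\mc}d\geq\tfrac{t}{\mc}(\mc+2)\mc=t(\mc+2)>t\mc\geq\sum_{i\leq t}\mu^T_i$, using $d\geq(\mc+2)\mc$. Thus $\mu^T\preceq\pi$, and Gale--Ryser yields a $0$--$1$ matrix $N$ with row sums $\mu^T$ and column sums $\pi^T$.

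Finally I would read $N$ as a point set $Q\subseteq\IN^2$ in the $(y,z)$-plane: its points are pairwise distinct (as $N$ is $0$--$1$), its $y$-marginal is $\mu^T$, and its $z$-marginal is $\pi^T$. Since $|Q|=d=\sum_i\la^T_i$, I can distribute the $d$ points of $Q$ among the $x$-slices so that slice $i$ receives exactly $\la^T_i$ of them, producing a point set $P\subseteq\IN^3$ with $x$-marginal $\la^T$ and the same $y$- and $z$-marginals as $Q$. The points of $P$ remain pairwise distinct because their $(y,z)$-projections already were, so $P$ realizes the marginals $(\la^T,\mu^T,\pi^T)$ and hence $t^\la_{\mu,\pi}\geq 1$. (Note that $\mbox{ht}(\la)\leq\mc$ is not actually used, only that $|\la|=|\mu|=|\pi|$.)

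I do not expect a genuine obstacle here: the only real content is recognizing that the third coordinate becomes free once the two-dimensional projection is made injective, which reduces the problem to the Gale--Ryser existence criterion; and that criterion holds automatically, the short verification being powered entirely by the bound $|\la|\geq(\mc+2)\mc$.
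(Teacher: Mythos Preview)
Your proof is correct, and it reaches the same endpoint as the paper---once the projection to two of the three coordinates is injective, the third coordinate can be distributed freely according to $\la^T$---but you get there by a genuinely different route. The paper builds the injective two-dimensional configuration by hand: it lays the $d$ vertices in a $c\times\lceil d/c\rceil$ grid, assigns the $\mu$-hyperedges greedily columnwise and the $\pi$-hyperedges greedily rowwise, and uses the fact that each row has at least $c+2$ vertices to argue that a $\pi$-hyperedge (of size at most $c$) can never touch two vertices in the same or adjacent columns, whereas a $\mu$-hyperedge spans at most two adjacent columns. You instead outsource this step to the Gale--Ryser theorem, verifying the dominance condition $\mu^T\preceq\pi$ directly from the size bound. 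Your argument is cleaner and more conceptual, at the price of invoking an external theorem; the paper's is entirely self-contained and explicitly constructive. Both proofs leave $\mbox{ht}(\la)\leq c$ unused, as you observe; the paper does not remark on this. As a small bonus, your dominance check actually goes through already for $d\geq c^2$ (you only need the non-strict inequality $\tfrac{t}{c}d\geq tc$), whereas the paper's grid argument genuinely uses the extra $+2$ to separate adjacent columns.
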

\begin{proof}
By \ref{lhyper}, it suffices to construct a hypergraph showing that $\tilde t^\la_{\mu,\pi}>0$.

If $|\la|$ is divisible by $\mc$, then we arrange the vertices in a rectangular array whose columns contain $\mc$ vertices each.
Otherwise we add an extra column containing less than $\mc$ vertices.
The crucial property is that since $|\la|\geq (\mc+2)\mc$ each row contains at least $\mc+2$ vertices.

Proceeding column-wise from top to bottom and from left to right,
we greedily assign vertices to hyperedges of the first layer according to the column lengths of $\mu$.
Note that each hyperedge constructed thus lies either in a single column or in two adjacent columns.
      Likewise, proceeding row-wise from left to right and from top to bottom, we greedily assign vertices to hyperedges of the second layer according to the column lengths of $\pi$.
      Since each row contains at least $c+2$ vertices, a layer 2 hyperedge cannot contain two vertices from the same or adjacent columns.

Consider two vertices $v$ and $w$ that lie in the same layer 1 hyperedge. They either lie in the same column or in adjacent columns.
Therefore no layer 2 hyperedge can contain both $v$ and $w$.
This shows that we can choose an arbitrary layer 3 hyperedge arrangement and see that $\tilde t^\la_{\mu,\pi}>0$.
\end{proof}
}

Note that $t_{\mu,\pi}^\lambda>0$ and $t_{\mu',\pi'}^{\lambda'}>0$ implies $t_{\mu+\mu',\pi+\pi'}^{\lambda+\lambda'}>0$,
and that Lemma 6.5 shows that for constant height the semigroup of triples with positive $t_{\mu,\pi}^\lambda$ is finitely generated, as it is known for $k_{\mu,\pi}^\lambda$ as well.

\subsection{When one partition is a hook}
{
\newcommand{\mk}{k}
\newcommand{\mD}{D}

\cite{blasiak} has a given a $\#P$-formula for $k^\lambda_{\mu,\pi}$ when $\lambda$ is a hook.
The problem of deciding positivity of $k^\lambda_{\mu,\pi}$ in this case may be conjectured to be in $P$ in view of the following result.

\begin{theorem}\label{thook}
Positivity of  $t_{\mu,\pi}^{\la}$, given $\la,\mu$, and $\pi$ in unary,
can be decided in polynomial time
if $\lambda$ is  a hook.
\end{theorem}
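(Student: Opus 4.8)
The plan is to use the hypergraph reformulation of Lemma~\ref{lhyper}: it suffices to decide in polynomial time whether $\tilde t^\lambda_{\mu,\pi} > 0$ when $\lambda$ is a hook, i.e.\ whether there exists an obstruction design of type $(\lambda,\mu,\pi)$. A hook $\lambda = (\lambda_0, 1^{h-1})$ with $|\lambda| = d$ has exactly one column of height $h = \mbox{ht}(\lambda)$ and then $\lambda_0 - 1$ columns of height $1$; so the layer-$1$ hyperedges consist of a single ``big'' hyperedge $H$ of size $h$ together with $d - h$ singletons. Thus a layer-$1$ structure is essentially trivial, and the real question becomes combinatorial: can we choose the vertex set $H$ (of size $h$) and arrange the layer-$2$ hyperedges (shapes given by the columns $\mu^T$) and layer-$3$ hyperedges (shapes given by $\pi^T$) so that no two vertices share all three hyperedges? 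Two vertices outside $H$ are automatically separated by layer $1$ (they lie in distinct singletons), so the constraint ``no two vertices share all three hyperedges'' only needs to be enforced for pairs of vertices that both lie in $H$.

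First I would reduce to the following clean statement: $\tilde t^\lambda_{\mu,\pi} > 0$ iff one can partition the $d$ vertices into layer-$2$ blocks of sizes $\mu^T_0 \ge \mu^T_1 \ge \cdots$ and layer-$3$ blocks of sizes $\pi^T_0 \ge \pi^T_1 \ge \cdots$, and choose an $h$-subset $H$, so that no two elements of $H$ lie in a common layer-$2$ block \emph{and} a common layer-$3$ block. Equivalently, forming the bipartite-like ``cell'' structure whose cells are the nonempty intersections (layer-$2$ block) $\cap$ (layer-$3$ block), we need the elements of $H$ to be spread out so that each such cell contains at most one element of $H$. So the question is: over all ways of refining the multiset of $d$ vertices into layer-$2$ and layer-$3$ blocks of the prescribed sizes, what is the maximum possible number of cells, and is it $\ge h$? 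Since $h \le \mbox{ht}(\mu^T_{\max})$-type bounds are small, and since we only need a \emph{yes/no} comparison against $h$, I would show that the maximum number of cells obtainable is a simple closed-form function of the two partitions $\mu^T$ and $\pi^T$ (for instance, it is maximized by making the two partitions ``as transverse as possible,'' and the optimum can be computed greedily), hence computable in polynomial time even when the inputs are given in unary (so all of $\mu^T, \pi^T, d, h$ are polynomially bounded).

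The key steps, in order, are: (i) invoke Lemma~\ref{lhyper} to pass to obstruction designs; (ii) exploit the hook shape to reduce the layer-$1$ structure to one big hyperedge $H$ plus singletons, and observe that the ``no shared triple of hyperedges'' condition only binds inside $H$; (iii) recast the problem as: maximize the number of nonempty layer-$2$/layer-$3$ intersection cells over all admissible pairs of block-partitions, and check whether this maximum is at least $h = \mbox{ht}(\lambda)$; (iv) prove a formula or greedy algorithm computing this maximum in $\poly$ time; (v) conclude that when the maximum is $\ge h$ one can place $H$ with one vertex per cell, giving an obstruction design, and when it is $< h$ no obstruction design exists. The main obstacle I anticipate is step (iv): pinning down exactly how many intersection cells two partitions of $d$ with prescribed part-sizes can be made to have — one must verify the intuitively optimal ``transverse'' arrangement really is optimal (an exchange/shifting argument on the block boundaries), and handle the interaction with the singleton columns of $\mu$ and $\pi$ (singletons are ``free'' and can always be placed to avoid collisions), as well as confirm that the $h$ chosen vertices of $H$ can simultaneously be consistent with the big layer-$1$ hyperedge without any further obstruction. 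Everything else is bookkeeping; the unary input assumption makes the sizes polynomial, so the greedy/closed-form computation is clearly efficient.
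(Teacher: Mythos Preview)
Your reduction in steps (i)--(iii) is exactly the paper's: invoke Lemma~\ref{lhyper}, use the hook shape so that layer~1 is one big hyperedge $H$ of size $h$ plus singletons, observe that the design condition only constrains pairs inside $H$, and rephrase the question as ``over all layer-2/layer-3 block structures with the prescribed part sizes, is the maximal number of nonempty intersection cells at least $h$?'' This is the paper's Claim~\ref{cla:LRCfirst} (stated there in terms of $(\alpha,\beta)$-equivalence classes).

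The divergence is your step (iv). You hope for a closed-form or greedy formula for the maximum number of cells and flag this as the main obstacle; the paper instead casts the optimization as an integer max-flow problem. It builds a graph with a source, one vertex per column of $\mu$, one vertex per column of $\pi$, and a sink; the source-to-$\mu$ edge for column $i$ has capacity $\mu^T_i$, the $\pi$-to-sink edge for column $j$ has capacity $\pi^T_j$, and every $\mu$-to-$\pi$ edge has capacity $1$. The maximum integral flow equals the maximum possible number of equivalence classes (this is the content of the second claim in the paper's proof), and Ford--Fulkerson computes it in polynomial time given the unary input. This cleanly resolves exactly the obstacle you identified, without needing to prove any exchange/shifting lemma about ``transverse'' arrangements. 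Your instinct that a greedy argument might also work is plausible, but the max-flow formulation gives correctness for free and is what the paper uses.
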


We give the proof of \ref{thook} in the remainder of this subsection.
Let $\la=(\mD-\mk+1,1^{\mk-1})$ be a hook partition with $\mD$ boxes. The Young diagram $\la$ has $\mD-\mk$ columns that contain only a single box
and a single column with $\mk$ boxes.
Let $\mu$ and $\pi$ be arbitrary partitions of $\mD$.
By \ref{lhyper}, it suffices to  decide in polynomial time whether $\tilde t_{\mu,\pi}^{\la} > 0$, i.e., if an obstruction design of type $(\la,\mu,\pi)$ exists.

The first key observation is the following.
If we fix all the hyperedges in the $\mu$ and $\pi$ layers and ask whether an obstruction design of type $(\la,\mu,\pi)$ exists
with these prescribed hyperedges, then this is easy to answer:
For fixed set partitions $\alpha$ and $\beta$ that have hyperedge sizes $\mu^T$ and $\pi^T$, respectively,
let $\tilde t_{\mu,\pi}^{\la}(\alpha,\beta)$
denote the number of obstruction designs whose $\mu$ layer is $\alpha$ and whose $\pi$ layer is $\beta$.
We call two vertices that lie in the same $\mu$-hyperedge and the same $\pi$-hyperedge \emph{$(\alpha,\beta)$-equivalent}.
\begin{claim}\label{cla:LRCfirst}
$\tilde t_{\mu,\pi}^{\la}(\alpha,\beta)=0$ iff $\mk$ is larger than the number of $(\alpha,\beta)$-equivalence classes.
\end{claim}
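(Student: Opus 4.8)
The plan is to think of an obstruction design of type $(\lambda,\mu,\pi)$, with $\lambda$ a hook, as a way of completing a fixed $(\alpha,\beta)$ (the $\mu$-layer and $\pi$-layer) by a third layer of hyperedges realizing $\lambda^T = (k, 1^{D-k})$. Since all but one column of $\lambda$ is a singleton, the third layer consists of one hyperedge of size $k$ and $D-k$ singletons. Adding a singleton to a vertex never creates a forbidden coincidence (a singleton hyperedge cannot be shared by two distinct vertices), so the only constraint is on the single hyperedge $S$ of size $k$: the obstruction-design condition says no two vertices of $S$ may already share both their $\alpha$-hyperedge and their $\beta$-hyperedge, i.e.\ $S$ must be a transversal of the $(\alpha,\beta)$-equivalence classes. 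Hence a valid $S$ of size $k$ exists if and only if there are at least $k$ equivalence classes.

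Concretely, the steps I would carry out are: (1) Observe that fixing $\alpha,\beta$ forces the layer-3 structure to be "one big hyperedge of size $k$ plus $D-k$ singletons", because $\lambda^T=(k,1^{D-k})$; (2) Note that the singleton layer-3 hyperedges impose no obstruction-design constraint, since "no two vertices share all three hyperedges" is automatic once one of the three shared hyperedges would be a singleton containing two vertices, which is impossible; (3) Reduce the existence question to: can we pick a $k$-subset $S$ of the $D$ vertices such that any two elements of $S$ are \emph{not} $(\alpha,\beta)$-equivalent? (4) Recognize that such a subset exists iff $k$ is at most the number of $(\alpha,\beta)$-equivalence classes, by picking one vertex from each of $k$ distinct classes (and conversely, by pigeonhole, if $k$ exceeds the number of classes two chosen vertices must be equivalent). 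Assembling these gives $\tilde t^\lambda_{\mu,\pi}(\alpha,\beta) = 0$ iff $k$ is larger than the number of $(\alpha,\beta)$-equivalence classes, which is the claim.

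The only mildly delicate point is step (2): one must be careful that putting two vertices into the same size-$k$ layer-3 hyperedge is the \emph{only} way to violate the obstruction-design condition, given that the remaining layer-3 edges are singletons. This is immediate from the definition: two distinct vertices share all three hyperedges only if they lie together in some layer-1, some layer-2, and some layer-3 hyperedge; a singleton layer-3 hyperedge contains just one vertex, so the shared layer-3 hyperedge must be $S$. Thus the problem genuinely decouples into "place $S$ as a transversal" with everything else free, and I expect no real obstacle beyond making this decoupling explicit. (This claim will presumably be used in the proof of Theorem~\ref{thook} by ranging over all $(\alpha,\beta)$ efficiently, but that lies beyond the statement at hand.)
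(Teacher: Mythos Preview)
Your proposal is correct and follows essentially the same argument as the paper: both observe that the $\lambda$-layer consists of one hyperedge $S$ of size $k$ together with singletons, that singletons impose no obstruction-design constraint, and that a valid $S$ exists precisely when one can choose $k$ pairwise $(\alpha,\beta)$-inequivalent vertices, which by pigeonhole is possible iff $k$ does not exceed the number of equivalence classes. Your write-up is slightly more explicit about why the singletons are harmless, but the logic is identical.
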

\begin{proof}
Fix $\alpha$ and $\beta$.
If $\mk$ is larger than the number of $(\alpha,\beta)$-equivalence classes, then by the pigeonhole principle the $\la$-hyperedge of size $\mk$ must contain two $(\alpha,\beta)$-equivalent vertices.
Therefore this construction does not yield an obstruction design.
If $\mk$ is smaller than the number of $(\alpha,\beta)$-equivalence classes, the $\la$-hyperedge of size $\mk$ can be chosen to contain pairwise $(\alpha,\beta)$-nonequivalent vertices.
The other vertices are singletons in the $\la$ layer, so this construction yields an obstruction design.
\end{proof}

Thus being able to answer the question whether $\tilde t_{\mu,\pi}^\lambda>0$ is equivalent to answering the following
question: Given hyperedge size vectors $\mu^T$ and $\pi^T$, what is the maximal number of $(\alpha, \beta)$-equivalence
classes, where $\alpha$ and $\beta$ have hyperedge sizes $\mu^T$ and $\pi^T$, respectively?
We will formalize this as a max flow problem with integer edge capacities
(given in unary).
Such a problem can be solved in polynomial time using the Ford-Fulkerson algorithm.
We construct a directed graph with a source vertex, one vertex for each column of $\mu$, one vertex for each column of $\pi$, and one sink vertex.
There are edges from the source vertex to the $\mu$-vertices whose capacity equals the number of boxes in the corresponding column of $\mu$.
Analogously there are edges from the $\pi$-vertices to the sink vertex whose capacity equals the number of boxes in the corresponding column of $\pi$.
Moreover, there is a capacity 1 edge from every $\mu$-vertex to every $\pi$-vertex.
The Ford-Fulkerson algorithm finds in polynomial time an integer solution to the problem of sending the maximum amount of flow through this network with respect to the capacity constraints.
Combined with \ref{cla:LRCfirst} the following claim implies that the Ford-Fulkerson algorithm can be used to decide in polynomial time whether $\tilde t_{\mu,\pi}^\lambda>0$ is positive.
\begin{claim}
A solution with flow at least $\mk$ exists iff there exist $\alpha$ and $\beta$ such that the number of $(\alpha,\beta)$-equivalence classes is at least~$\mk$.
\end{claim}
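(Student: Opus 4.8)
The plan is to verify that the max-flow network described above faithfully encodes the combinatorial quantity we care about, namely the maximal number of $(\alpha,\beta)$-equivalence classes over all set partitions $\alpha,\beta$ with column sizes $\mu^T$ and $\pi^T$. The crucial point is that an $(\alpha,\beta)$-equivalence class is determined by a pair (a $\mu$-hyperedge, a $\pi$-hyperedge) together with the non-empty set of vertices lying in \emph{both}; since distinct classes sit in distinct such pairs, the number of equivalence classes equals the number of pairs $(i,j)$ of a $\mu$-column $i$ and a $\pi$-column $j$ for which the corresponding hyperedges share at least one vertex. So I want to choose $\alpha$ and $\beta$ so as to maximize the number of such ``touching'' pairs, subject to the constraint that $\mu$-column $i$ has exactly $\mu^T_i$ vertices and $\pi$-column $j$ has exactly $\pi^T_j$ vertices.

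\textbf{From set partitions to integer flows.} Given $\alpha,\beta$, for each ordered pair $(i,j)$ let $f_{ij} := \lvert \alpha_i \cap \beta_j \rvert \in \IN$ be the number of vertices in both the $i$-th $\mu$-hyperedge and the $j$-th $\pi$-hyperedge. Then $\sum_j f_{ij} = \mu^T_i$ for all $i$ and $\sum_i f_{ij} = \pi^T_j$ for all $j$, because every vertex lies in exactly one hyperedge of each layer. Conversely, given any non-negative integer matrix $(f_{ij})$ with these row and column sums, one can build set partitions $\alpha,\beta$ realizing it: distribute $\mu^T_i$ abstract vertices into the cells of row $i$ according to $f_{ij}$ and likewise read off the columns; this is the standard ``filling a contingency table'' construction. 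Hence the achievable matrices $(f_{ij})$ are \emph{exactly} the non-negative integer matrices with row sums $\mu^T$ and column sums $\pi^T$. For such an $(f_{ij})$, the number of $(\alpha,\beta)$-equivalence classes is $\lvert\{(i,j) : f_{ij} > 0\}\rvert$, the number of non-zero entries. So the question becomes: over all non-negative integer matrices with prescribed row and column sums, maximize the number of non-zero entries.

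\textbf{The flow translation.} In the network above, a flow assigns to each source-to-$\mu_i$ edge a value $\le \mu^T_i$, to each $\pi_j$-to-sink edge a value $\le \pi^T_j$, and to each $\mu_i$-to-$\pi_j$ edge a value in $\{0,1\}$ (by the capacity-$1$ constraint). An integer flow of value $v$ therefore corresponds to an assignment $g_{ij} \in \{0,1\}$ with $\sum_j g_{ij} \le \mu^T_i$, $\sum_i g_{ij} \le \pi^T_j$, and $\sum_{i,j} g_{ij} = v$. I claim a flow of value $\ge \mk$ exists iff there is a matrix $(f_{ij})$ as above with at least $\mk$ non-zero entries. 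For one direction, given such an $(f_{ij})$, set $g_{ij} = 1$ exactly on the non-zero cells and $0$ elsewhere; then $\sum_j g_{ij} \le \sum_j f_{ij} = \mu^T_i$ and similarly for columns, so $(g_{ij})$ is a feasible integer flow of value $\ge \mk$. For the converse, given a $\{0,1\}$-flow $(g_{ij})$ of value $\ge \mk$, I need to inflate it to a matrix with the \emph{exact} row and column sums $\mu^T,\pi^T$ without destroying any of the chosen non-zero cells: pick any cell that is $1$ in each row and in each column (possible because each used row/column has total $\le \mu^T_i$ resp.\ $\le \pi^T_j$ and we may assume WLOG that every $\mu$- and $\pi$-hyperedge is non-empty so $\mu^T_i,\pi^T_j\ge 1$; here one should check that a $0$-weight row or column can be seeded by borrowing one unit, since $|\mu|=|\pi|$), then add the remaining $\mu^T_i - \sum_j g_{ij}$ units of each row arbitrarily among that row's cells, and finally absorb the resulting column-sum deficits by further redistribution within columns — a routine rebalancing that changes only the values of already-positive cells and of seeded cells, hence does not decrease the number of non-zero entries below $\mk$. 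Thus the maximal flow value equals the maximal number of non-zero entries, which by the discussion above equals the maximal number of $(\alpha,\beta)$-equivalence classes, proving the claim.

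\textbf{Main obstacle.} The only genuinely delicate point is the converse direction of the last claim: turning a partial $\{0,1\}$-assignment coming from an integer flow into an honest contingency table with the prescribed margins while \emph{preserving} all the originally selected non-zero cells. One must be careful with rows or columns that receive zero flow (they still need their full mass placed somewhere), and with the fact that a $\{0,1\}$-matrix respecting only the inequalities need not be completable in the naive way; the fix is to first ensure every used row and column is ``seeded'' with a positive entry (using $|\mu|=|\pi|$ and the non-emptiness normalization) and then invoke the standard flow/transportation-polytope argument that any two non-negative integer matrices with the same margins are connected by moves supported on positive entries. Everything else — the equivalence-class bookkeeping, the contingency-table construction of $\alpha,\beta$, and the polynomial-time solvability via Ford–Fulkerson with unary capacities — is routine.
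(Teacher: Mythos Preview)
Your overall framework is the same as the paper's: both identify $(\alpha,\beta)$-pairs with contingency tables whose row and column sums are $\mu^T$ and $\pi^T$, observe that equivalence classes correspond to nonzero cells, and interpret the capacity-1 middle edges as selecting a $\{0,1\}$-support. Your forward direction (from a contingency table with $\ge k$ nonzero cells to a flow of value $\ge k$) is exactly the paper's argument.

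The converse direction, however, has a genuine gap. You want to inflate the $\{0,1\}$-matrix $g$ coming from the flow into a full contingency table $f$ with marginals $\mu^T,\pi^T$ while keeping every $g_{ij}=1$ cell positive. Your described procedure --- fill each row to its target sum ``arbitrarily,'' then ``absorb the resulting column-sum deficits by further redistribution within columns'' --- does not work as written: redistributing mass \emph{within a column} does not change that column's sum at all, so it cannot repair a column-sum mismatch; and if you instead redistribute \emph{between} columns (i.e., within rows), you risk draining an originally selected cell down to zero. The subsequent appeal to ``seeding'' and to connectivity of the transportation polytope does not salvage this, because connectivity by moves on positive entries tells you how to pass between two full contingency tables, not how to complete a partial one without killing a prescribed support.

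The fix is much simpler than your rebalancing, and it is what the paper does. Since $g$ is a feasible flow, the residual marginals $\mu^T_i-\sum_j g_{ij}$ and $\pi^T_j-\sum_i g_{ij}$ are nonnegative and have the same total $D-v$. Any nonnegative integer matrix $h$ with these residual marginals exists (e.g.\ by greedy northwest-corner filling, or equivalently by augmenting the flow along arbitrary source--sink paths after dropping the capacity-1 constraints), and then $f:=g+h$ has marginals $\mu^T,\pi^T$ with $f_{ij}\ge g_{ij}$, so its support contains that of $g$ and hence has size $\ge k$. In flow language this is exactly the paper's one-line move: relax the middle-edge capacities and push additional units until every source and sink edge is saturated; since you only \emph{add} flow, no previously used middle edge loses its unit.
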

\begin{proof}
Given $\alpha$ and $\beta$ with at least $\mk$ equivalence classes we construct a solution to the flow problem by sending one flow unit for each equivalence class:
For the equivalence class corresponding to the $i$th $\mu$-column and $j$th $\pi$-column we send a unit from the source vertex to the
$i$th $\mu$-vertex, from there to the $j$th $\pi$-vertex and then to the sink vertex.
This satisfies the capacity constraints and is a solution to the flow problem that sends at least $\mk$ flow units.

From a solution of the max flow problem we readily generate a solution to a relaxed max flow problem where we
remove the capacities on the edges from the
$\mu$-vertices to the $\pi$-vertices. We send flow units on additional arbitrary paths from the source to the sink.
Once all capacities are saturated we are guaranteed to send exactly $\mD$ flow units.
From this new solution we construct set partitions $\alpha$ and $\beta$ by defining that the size of the $(\alpha,\beta)$-equivalence class corresponding to the $i$th $\mu$-column and the $j$th $\pi$-column is the amount of flow from the $i$th $\mu$-vertex to the $j$th $\pi$-vertex.
So if the original solution had at least $\mk$ flow units, then there are at least $\mk$ $(\alpha,\beta)$-equivalence classes in our construction.
\end{proof}
}

\subsection{Rectangular Kronecker coefficients}
It is conjectured in \cite{GCT6} that the problem deciding positivity of the rectangular Kronecker coefficient
$k^\lambda_{\delta(\lambda),\delta(\lambda)}$ is in $P$. This  is supported by the following
result.

\begin{theorem}\label{tpositivet}
Let $\lambda$ be any partition with $d r$ boxes and at most $\min(d^2,r^2)$ rows, and let $\delta = \delta(\lambda) = (d,\ldots,d)$ ($r$ times).
Then
$t_{\delta,\delta}^\lambda>0$. In particular, the problem of deciding positivity of $t_{\delta,\delta}^\lambda$ is trivial.
\end{theorem}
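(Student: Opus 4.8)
The plan is to prove the stronger statement that there is \emph{at least one} point set realizing the required marginals, which immediately gives $t^\lambda_{\delta,\delta}\ge 1>0$. Recall that $t^\lambda_{\delta,\delta}$ counts point sets $P\subseteq\mathbb{N}^3$ with marginals $(\lambda^T,\delta^T,\delta^T)$; since $\delta=\delta(\lambda)=(d^r)$ we have $\delta^T=(r^d)$, so the task is to produce $dr$ distinct points whose $y$- and $z$-coordinates all lie in $\{0,\dots,d-1\}$, each value occurring exactly $r$ times, and whose $x$-marginal is $\lambda^T$. (Equivalently, by Lemma~\ref{lhyper}, one could build an obstruction design of type $(\lambda,\delta,\delta)$, but the point-set description is cleaner here.) I would organize the construction in two steps: first distribute the $dr$ points among the $d^2$ ``cells'' $(y,z)\in\{0,\dots,d-1\}^2$, and then attach $x$-coordinates (``layers'') to the points, in consecutive blocks of sizes $\lambda^T_0,\lambda^T_1,\dots$, so that the $x$-marginal equals $\lambda^T$ automatically.

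For the construction, write $r=md+s$ with $0\le s<d$. Fix a set $E$ of $ds$ cells with exactly $s$ cells in every row and in every column — concretely $E=\{(y,z):(z-y)\bmod d\in\{0,\dots,s-1\}\}$, which works since $s\le d-1$ — and order the $d^2$ cells so that the cells of $E$ come first. Now place the $i$-th point ($0\le i<dr$) into the $(i\bmod d^2)$-th cell in this order, and assign it the $x$-value $k$ for which $i$ lies in the $k$-th block $\bigl[\sum_{l<k}\lambda^T_l,\ \sum_{l\le k}\lambda^T_l\bigr)$. Since $\lambda^T_k\le\mbox{ht}(\lambda)\le\min(d^2,r^2)\le d^2$ for every $k$, the indices inside one block form at most $d^2$ consecutive integers and hence land in pairwise distinct cells; as the points of a block share an $x$-value but occupy distinct cells, while points of different blocks have different $x$-values, $P$ is genuinely a set of $dr$ points, with $x$-marginal $\lambda^T$ by construction.

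It remains to check the $y$- and $z$-marginals, and this is the only place where the argument has any content. The number of points landing in a fixed cell $c$ is the multiplicity of $c$'s position in the cyclic pass over the ordered cells, which equals $m+1$ if that position is among the first $ds$ (i.e.\ $c\in E$) and $m$ otherwise, because $dr=md^2+ds$. Hence, for fixed $y_0$, the number of points with $y$-coordinate $y_0$ is $\sum_{c\ \mathrm{in\ row}\ y_0}(\mathrm{mult}\ c)=m\cdot d+\#\{c\in E:\ c\ \mathrm{in\ row}\ y_0\}=md+s=r$, and symmetrically for each $z_0$. Thus $P$ has marginals $(\lambda^T,\delta^T,\delta^T)$, so $t^\lambda_{\delta,\delta}\ge 1$, and positivity is then trivially decidable. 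The only real obstacle is precisely the passage from ``cell multiplicities are $r/d$ on average'' to ``line sums are exactly $r$'': a naive cyclic filling of the cells would only get the line sums right on average, and the fix is to designate the $ds$ ``heavy'' cells as an $s$-regular bipartite subgraph of the $d\times d$ grid, which exists (e.g.\ as a union of $s\le d$ distinct permutation matrices) by König/Birkhoff. I note that the hypothesis $\mbox{ht}(\lambda)\le\min(d^2,r^2)$ enters only through $\mbox{ht}(\lambda)\le d^2$, which is in any case forced for $t^\lambda_{\delta,\delta}$ to be nonzero.
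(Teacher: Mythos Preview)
Your proof is correct and takes essentially the same approach as the paper: distribute the $dr$ points over the $d\times d$ grid of $(y,z)$-cells with multiplicities $m$ or $m{+}1$ chosen along cyclic diagonals so that every row and column sums to exactly $r$, and then assign the $\lambda$-layer so that no $x$-slice places two points in the same cell, which the bound $\mbox{ht}(\lambda)\le d^2$ guarantees. The paper treats the cases $d\ge r$ (third layer arbitrary) and $d<r$ (greedy third-layer placement) separately, whereas your single cyclic ordering of the cells handles both uniformly and is a bit cleaner; you also correctly observe that only the inequality $\mbox{ht}(\lambda)\le d^2$ is actually used.
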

Since $k_{\delta,\delta}^\lambda = 0$ if $\height(\lambda) > min(d^2,r^2)$, the constraint on $\lambda$ here
is  very natural.

\begin{proof}
By \ref{lhyper}, it suffices to  prove positivity of $\tilde t_{\delta,\delta}^\lambda$.
We do this  by an explicit construction.

The case $d \geq r$ is easier, so we handle this case first.
We have to construct an obstruction design with $dr$ vertices and go about it as follows.
Let $i \mathop{\text{rem}} d$ denote the remainder when dividing $i$ by $d$.
The vertex set $V$ is a subset of the $d \times d$ grid $\{(i,j) \mid 0 \leq i,j < d\}$.
We have $(i,j) \in V$ iff $(i+j) \mathop{\text{rem}} d \in \{0,1,\ldots,r-1\}$.
For example, for $r=4$ and $d=6$, the vertex set is arranged as follows (row $0$ is at the top, column $0$ is at the left):

\begin{tikzpicture}[scale=0.5]
\node at (0,0) {$\bullet$};
\node at (1,0) {$\bullet$};
\node at (2,0) {$\bullet$};
\node at (3,0) {$\bullet$};
\node at (5,-1) {$\bullet$};
\node at (0,-1) {$\bullet$};
\node at (1,-1) {$\bullet$};
\node at (2,-1) {$\bullet$};
\node at (4,-2) {$\bullet$};
\node at (5,-2) {$\bullet$};
\node at (0,-2) {$\bullet$};
\node at (1,-2) {$\bullet$};
\node at (3,-3) {$\bullet$};
\node at (4,-3) {$\bullet$};
\node at (5,-3) {$\bullet$};
\node at (0,-3) {$\bullet$};
\node at (2,-4) {$\bullet$};
\node at (3,-4) {$\bullet$};
\node at (4,-4) {$\bullet$};
\node at (5,-4) {$\bullet$};
\node at (1,-5) {$\bullet$};
\node at (2,-5) {$\bullet$};
\node at (3,-5) {$\bullet$};
\node at (4,-5) {$\bullet$};
\end{tikzpicture}

Note that every row and every column has exactly $r$ boxes.
The rows correspond to the hyperedges of the first layer, where the columns correspond to the hyperedges of the second layer.
Note that no matter how the hyperedges of the third layer are placed, no two vertices can share all three hyperedges,
because no two vertices even share their two hyperedges in layer one and two.
Therefore an arbitrary placement of the third layer shows $t_{\delta,\delta}^\lambda > 0$.

For $d < r$ an analogous construction can be made, but several vertices share a location, see the example $r=6$ and $d=4$ below.

\begin{tikzpicture}[scale=0.5]
\node at (0,0) {$\bullet$};
\node at (0.2,0.2) {$\bullet$};
\node at (1,0) {$\bullet$};
\node at (1.2,0.2) {$\bullet$};
\node at (2,0) {$\bullet$};
\node at (3,0) {$\bullet$};
\node at (0,-1) {$\bullet$};
\node at (0.2,-0.8) {$\bullet$};
\node at (1,-1) {$\bullet$};
\node at (2,-1) {$\bullet$};
\node at (3,-1) {$\bullet$};
\node at (3.2,-0.8) {$\bullet$};
\node at (0,-2) {$\bullet$};
\node at (1,-2) {$\bullet$};
\node at (2,-2) {$\bullet$};
\node at (2.2,-1.8) {$\bullet$};
\node at (3,-2) {$\bullet$};
\node at (3.2,-1.8) {$\bullet$};
\node at (0,-3) {$\bullet$};
\node at (1,-3) {$\bullet$};
\node at (1.2,-2.8) {$\bullet$};
\node at (2,-3) {$\bullet$};
\node at (2.2,-2.8) {$\bullet$};
\node at (3,-3) {$\bullet$};
\end{tikzpicture}

Note that in this construction, if $r/d > 2$, then three or more vertices lie at the same position.
As in the case $d \geq r$, the rows correspond to the hyperedges of the first layer, where the columns correspond to the hyperedges of the second layer.
But now the third layer cannot be placed arbitrarily, but care has to be taken.
The hyperedges can be placed in any order, but not at arbitrary positions.
When a hyperedge is placed, it first uses those places where several vertices are grouped together (and of course only uses one from each such place). If there are places with more than two vertices, the hyperedge first takes vertices from those places with the most vertices.
This greedy method ensures that no hyperedge contains a pair of vertices from the same place, because by the length restriction on $\lambda$ a hyperedge cannot use more than $d^2$ vertices.
\end{proof}

\begin{acknowledge}
This work was supported by NSF grant CCF-1017760.
MW acknowledges support by the Simons Foundation, FQXI, and AFOSR (grant no.\ FA9550-16-1-0082).
A part of this work was done at the Simons Institute for the Theory of Computing, Berkeley.
\end{acknowledge}

\bibliography{main}

\begin{thebibliography}{40}
\providecommand{\natexlab}[1]{#1}
\providecommand{\url}[1]{\texttt{#1}}
\providecommand{\urlprefix}{URL }
\providecommand{\selectlanguage}[1]{\relax}

\bibitem[{Baldoni \emph{et~al.}(2017)Baldoni, Vergne \& Walter}]{newalgorithm}
\textsc{V.~Baldoni}, \textsc{M.~Vergne} \& \textsc{M.~Walter} (2017).
\newblock Computation of Dilated Kronecker Coefficients.
\newblock \emph{To appear in Journal of Symbolic Computation} .

\bibitem[{Berenstein \& Sjamaar(2000)}]{berenstein}
\textsc{A.~Berenstein} \& \textsc{R.~Sjamaar} (2000).
\newblock Coadjoint orbits, moment polytopes, and the {Hilbert-Mumford}
  criterion.
\newblock \emph{Journal of the AMS} \textbf{13}(2), 433--466.

\bibitem[{Berman \& Hartmanis(1977)}]{hartmanis}
\textsc{L.~Berman} \& \textsc{J.~Hartmanis} (1977).
\newblock On isomorphisms and density of {NP} and other complete sets.
\newblock \emph{SIAM Journal on Computing} \textbf{6}, 305--322.

\bibitem[{Blasiak(2017)}]{blasiak}
\textsc{J.~Blasiak} (2017).
\newblock Kronecker coefficients for one hook shape.
\newblock \emph{Seminaire Lotharingien de Combinatoire} \textbf{77}.

\bibitem[{Blasiak \emph{et~al.}(2015)Blasiak, Mulmuley \& Sohoni}]{GCT4}
\textsc{J.~Blasiak}, \textsc{K.~Mulmuley} \& \textsc{M.~Sohoni} (2015).
\newblock Geometric complexity theory {IV}: nonstandard quantum group for the
  {Kronecker} problem.
\newblock \emph{Memoirs of the AMS} \textbf{235}(1109).

\bibitem[{Brunetti \emph{et~al.}(2001)Brunetti, Del~Lungo \& Gerard}]{brunetti}
\textsc{S.~Brunetti}, \textsc{A.~Del~Lungo} \& \textsc{Y.~Gerard} (2001).
\newblock On the computational complexity of reconstructing three-dimensional
  lattice sets from their two-dimensional {X}-rays.
\newblock \emph{Linear Algebra and its Applications} \textbf{339}(1), 59--73.

\bibitem[{Brunetti \emph{et~al.}(2008)Brunetti, Del~Lungo, Gritzmann \&
  de~Vries}]{brunettibinary}
\textsc{S.~Brunetti}, \textsc{A.~Del~Lungo}, \textsc{P.~Gritzmann} \&
  \textsc{S.~de~Vries} (2008).
\newblock On the reconstruction of binary and permutation matrices under
  (binary) tomographic constraints.
\newblock \emph{Theoretical Computer Science} \textbf{406}, 63--71.

\bibitem[{B{\"u}rgisser \emph{et~al.}(2011{\natexlab{a}})B{\"u}rgisser,
  Christandl \& Ikenmeyer}]{burgnonvanish}
\textsc{P.~B{\"u}rgisser}, \textsc{M.~Christandl} \& \textsc{C.~Ikenmeyer}
  (2011{\natexlab{a}}).
\newblock Nonvanishing of {Kronecker} coefficients for rectangular shapes.
\newblock \emph{Advances in Mathematics} \textbf{227}(5), 2082--2091.

\bibitem[{B{\"u}rgisser \emph{et~al.}(2017)B{\"u}rgisser, Christandl, Mulmuley
  \& Walter}]{BCMW}
\textsc{P.~B{\"u}rgisser}, \textsc{M.~Christandl}, \textsc{K.~Mulmuley} \&
  \textsc{M.~Walter} (2017).
\newblock Membership in Moment Polytopes is in NP and coNP.
\newblock \emph{SIAM Journal on Computing} \textbf{46}(3), 972--991.

\bibitem[{B{\"u}rgisser \& Ikenmeyer(2013)}]{burgisserexp}
\textsc{P.~B{\"u}rgisser} \& \textsc{C.~Ikenmeyer} (2013).
\newblock Explicit lower bounds via geometric complexity theory.
\newblock In \emph{Proceedings of the 44th STOC}, 141--150. ACM.

\bibitem[{B{\"u}rgisser \emph{et~al.}(2011{\natexlab{b}})B{\"u}rgisser,
  Landsberg, Manivel \& Weyman}]{landsbergmath}
\textsc{P.~B{\"u}rgisser}, \textsc{J.~Landsberg}, \textsc{L.~Manivel} \&
  \textsc{J.~Weyman} (2011{\natexlab{b}}).
\newblock An overview of mathematical issues arising in the geometric
  complexity theory approach to {$VP \not = VNP$}.
\newblock \emph{SIAM Journal on Computing} \textbf{40}(4), 1179--1209.

\bibitem[{B\"urgisser(2016)}]{lothringen}
\textsc{Peter B\"urgisser} (2016).
\newblock Permanent versus determinant, obstructions, and Kronecker
  coefficients.
\newblock \emph{Seminaire Lotharingien de Combinatoire} \textbf{75}.

\bibitem[{Christandl \emph{et~al.}(2012)Christandl, Doran \&
  Walter}]{christwalter}
\textsc{M.~Christandl}, \textsc{B.~Doran} \& \textsc{M.~Walter} (2012).
\newblock {Computing Multiplicities of Lie Group Representations}.
\newblock In \emph{Proceedings of the 53rd FOCS}, 639--648.

\bibitem[{Derksen \& Weyman(2000)}]{derksen}
\textsc{H.~Derksen} \& \textsc{J.~Weyman} (2000).
\newblock Semi-invariants of quivers and saturation for {Littlewood-Richardson}
  coefficients.
\newblock \emph{Journal of Applied and Computational Mathematics}
  \textbf{13}(3), 467--479.

\bibitem[{Fortune(1979)}]{fortune}
\textsc{S.~Fortune} (1979).
\newblock A note on sparse complete sets.
\newblock \emph{SIAM Journal on Computing} \textbf{8}(3), 431--433.

\bibitem[{Fulton \& Harris(1991)}]{fultonrepr}
\textsc{W.~Fulton} \& \textsc{J.~Harris} (1991).
\newblock \emph{Representation theory: A first course}.
\newblock Springer.

\bibitem[{Garey \& Johnson(1979)}]{johnson}
\textsc{M.~Garey} \& \textsc{D.~Johnson} (1979).
\newblock \emph{Computers and {Intractability}}.
\newblock W. H. Freeman and company, New York.

\bibitem[{Ikenmeyer(2012)}]{iken}
\textsc{C.~Ikenmeyer} (2012).
\newblock \emph{Geometric Complexity Theory, Tensor Rank, and
  {L}ittlewood-{R}ichardson Coefficients}.
\newblock Ph.D. thesis, Institute of Mathematics, University of Paderborn.
\newblock Online available at
  \url{http://nbn-resolving.de/urn:nbn:de:hbz:466:2-10472}.

\bibitem[{Ikenmeyer \& Panova(2016)}]{IP}
\textsc{Christian Ikenmeyer} \& \textsc{Greta Panova} (2016).
\newblock Rectangular Kronecker coefficients and plethysms in geometric
  complexity theory.
\newblock In \emph{Proceedings of the 57th FOCS}, 396--405. IEEE.

\bibitem[{Kadish \& Landsberg(2014)}]{landsberg}
\textsc{H.~Kadish} \& \textsc{J.~Landsberg} (2014).
\newblock Padded polynomials, their cousins, and geometric complexity theory.
\newblock \emph{Communications in Algebra} \textbf{42}(5), 2171--2180.

\bibitem[{Karp(1972)}]{karpcomplexity}
\textsc{R.~Karp} (1972).
\newblock Reducibility among combinatorial problems.
\newblock \emph{Complexity of Computer Computations} 85--103.

\bibitem[{Kirwan(1984)}]{kirwan}
\textsc{F.~Kirwan} (1984).
\newblock Convexity properties of the moment mapping, {III}.
\newblock \emph{Invent. Math.} \textbf{77}, 547--552.

\bibitem[{Klyachko(2004)}]{klyachko}
\textsc{A.~Klyachko} (2004).
\newblock Quantum marginal problem and representations of the symmetric group.
\newblock \emph{arXiv:quant-ph/0409113} .

\bibitem[{Knutson \& Tao(1999)}]{knutsontao1}
\textsc{A.~Knutson} \& \textsc{T.~Tao} (1999).
\newblock The {Honeycomb} model of {$GL_n(\C)$} tensor products {I}: proof of
  the saturation conjecture.
\newblock \emph{Journal of the AMS} \textbf{12}, 1055--1090.

\bibitem[{Knutson \& Tao(2001)}]{knutsontao2}
\textsc{A.~Knutson} \& \textsc{T.~Tao} (2001).
\newblock Honeycombs and sums of {Hermitian} matrices.
\newblock \emph{Notices of the AMS} \textbf{48}, 175--186.

\bibitem[{Kumar(2015)}]{kumar}
\textsc{S.~Kumar} (2015).
\newblock A study of the representations supported by the orbit closure of the
  determinant.
\newblock \emph{Compositio Mathematica} \textbf{151}, 292--312.

\bibitem[{Mahaney(1982)}]{mahaney}
\textsc{S.~Mahaney} (1982).
\newblock Sparse complete sets for {NP}: Solution of a conjecture by {Berman}
  and {Hartmanis}.
\newblock \emph{Journal of Computer and System Sciences} \textbf{25}, 130--143.

\bibitem[{Manivel(1997)}]{manivel1997applications}
\textsc{L.~Manivel} (1997).
\newblock Applications de {Gauss} et pl{\'e}thysme.
\newblock In \emph{Annales de l'institut Fourier}, volume~47, 715--774.
  Chartres: L'Institut, 1950-.

\bibitem[{Mulmuley(2010{\natexlab{a}})}]{GCTflip}
\textsc{K.~Mulmuley} (2010{\natexlab{a}}).
\newblock Explicit proofs and the flip.
\newblock \emph{arXiv:1009.0246} .

\bibitem[{Mulmuley(2010{\natexlab{b}})}]{GCT6}
\textsc{K.~Mulmuley} (2010{\natexlab{b}}).
\newblock Geometric complexity theory {VI}: the flip via positivity.
\newblock \emph{Technical report, Computer Science Department, the University
  of Chicago} .

\bibitem[{Mulmuley(2011)}]{GCTexplicit}
\textsc{K.~Mulmuley} (2011).
\newblock On {$P$ vs. $NP$} and geometric complexity theory.
\newblock \emph{Journal of the ACM} \textbf{58}(2).

\bibitem[{Mulmuley \emph{et~al.}(2011)Mulmuley, Narayanan \& Sohoni}]{GCT3}
\textsc{K.~Mulmuley}, \textsc{H.~Narayanan} \& \textsc{M.~Sohoni} (2011).
\newblock Geometric complexity theory {III}: on deciding nonvanishing of a
  {Littlewood-Richardson} coefficient.
\newblock \emph{Journal of Algebraic Combinatorics} 1--8.

\bibitem[{Mulmuley \& Sohoni(2008)}]{GCT2}
\textsc{K.~Mulmuley} \& \textsc{M.~Sohoni} (2008).
\newblock Geometric complexity theory {II}: towards explicit obstructions for
  embeddings among class varieties.
\newblock \emph{SIAM Journal on Computing} \textbf{38}(3), 1175--1206.

\bibitem[{Murnaghan(1938)}]{murnaghan}
\textsc{F.~Murnaghan} (1938).
\newblock The analysis of the {Kronecker} product of irreducible
  representations of the symmetric group.
\newblock \emph{American Journal of Mathematics} \textbf{60}(3), 761--784.

\bibitem[{Ressayre(2010)}]{ressayre}
\textsc{N.~Ressayre} (2010).
\newblock Geometric invariant theory and the generalized eigenvalue problem.
\newblock \emph{Inventiones mathematicae} \textbf{180}, 389--441.

\bibitem[{Stanley(2002)}]{stanley}
\textsc{R.~Stanley} (2002).
\newblock Positivity problems and conjectures in algebraic combinatorics.
\newblock \emph{In Mathematics: Frontiers and Perspectives} 295--319.

\bibitem[{Vallejo(2000)}]{vallejo2000plane}
\textsc{E.~Vallejo} (2000).
\newblock Plane partitions and characters of the symmetric group.
\newblock \emph{Journal of Algebraic Combinatorics} \textbf{11}(1), 79--88.

\bibitem[{Vergne \& Walter(2017)}]{verwal}
\textsc{Mich\`{e}le Vergne} \& \textsc{Michael Walter} (2017).
\newblock Inequalities for Moment Cones of Finite-Dimensional Representations.
\newblock \emph{To appear in Journal of Symplectic Geometry} .

\bibitem[{Walter(2014)}]{walter2014thesis}
\textsc{M.~Walter} (2014).
\newblock \emph{{Multipartite Quantum States and their Marginals}}.
\newblock Ph.D. thesis, ETH Zurich.
\newblock ArXiv:1410.6820.

\bibitem[{Yu(1996)}]{yu}
\textsc{W.~Yu} (1996).
\newblock \emph{The two-machine flow shop problem with delays and the
  one-machine total tardiness}.
\newblock Ph.D. thesis, Eindhoven University of Technology.

\end{thebibliography}
\end{document}